\documentclass[12pt]{amsart} 
\usepackage{amsmath,amssymb}
\usepackage{amsthm,bbm,upgreek}
\usepackage{physics}
\usepackage[toc,page]{appendix}
\usepackage{fullpage}
\usepackage{hyperref}
\usepackage{appendix}
\usepackage{xcolor}
\usepackage{t1enc}
\usepackage[shortlabels]{enumitem}
\usepackage{eucal}
\usepackage{mathrsfs}
  
\numberwithin{equation}{section}

\theoremstyle{plain}
\newtheorem{theorem}{Theorem}[section]
\newtheorem{lemma}[theorem]{Lemma}

\newtheorem{proposition}[theorem]{Proposition}
\newtheorem{corollary}[theorem]{Corollary}

\theoremstyle{definition}
\newtheorem{definition}[theorem]{Definition}
\newtheorem{remark}[theorem]{Remark}

\def\mathbbZ{\mathbb{Z}}

\def\sfN{\mathsf{N}}

\def\Re{\mathrm{Re}}

\DeclareMathOperator{\sym}{sym}

\newcommand{\bb}[1]{\mathbb{#1}}
\newcommand{\fk}[1]{\mathfrak{#1}}
\newcommand{\cl}[1]{\mathcal{#1}}
\renewcommand{\sf}[1]{\mathsf{#1}}
\renewcommand{\rm}[1]{\mathrm{#1}}

\title{Probabilistic representation and limit theorems for particle numbers of quasi-free states}

\author{Fanch Coudreuse}
\address{Bocconi University, Department of Decision Sciences and BIDSA, Milan, (Italy)}
\email{coudreuse@math.univ-lyon1.fr}

\author{Simone Rademacher} 
\address{Department of Mathematics, University of Mannheim, B6, 68159 Mannheim, Germany}
\email{simone.rademacher@uni-mannheim.de}

\author{Oliver Tse}
\address{Department of Mathematics and Computer Science, Eindhoven University of Technology, The Netherlands} 
\email{o.t.c.tse@tue.nl}

\subjclass[2020]{Primary 60F10; Secondary 60F05, 81V73, 82B10}
\keywords{Quasi-free state, Bogoliubov transformation, particle number distribution, large deviation principle, central limit theorem, quantum depletion, Bose--Einstein condensation}

\begin{document}

\begin{abstract}
  We study the particle number distribution of locally interacting bosonic quasi-free states arising in various areas of mathematical physics. We show that the particle number decomposes into an infinite sum of independent geometrically distributed random variables, confirming predictions from the physics literature. This representation yields exponential tail bounds for the particle number, a law of small numbers together with a large deviation principle at logarithmic speed when the lattice spacing diverges, and a central limit theorem when it vanishes. As an application, we obtain a detailed description of the statistics of the quantum depletion in Bose--Einstein condensates, with the constants explicit in terms of the scattering length of the interaction potential.
\end{abstract}

\maketitle


\section{Introduction}
\label{sec:intro}

Quasi-free states play a central role in many areas of physics \cite{Araki,BratelliRobinson,Haag}, including quantum field theory \cite{Wald}, many-body quantum mechanics \cite{LiebSeiringer}, and quantum information theory \cite{Weedbrook}. They generalize the notion of free (non-interacting) states to certain interacting systems whose correlation functions are still completely determined by their two-point functions. Recent advances in both theoretical and experimental methods have increased interest in the full particle-number statistics of quasi-free states. Particle number statistics play a central role in the description of quantum fluctuations in Bose--Einstein condensates in many-body quantum mechanics \cite{Bogoliubov,Idziaszek,Rademacher} and in bosonic sampling problems based on Gaussian states in quantum information theory \cite{Bravyi,Quesada,Weedbrook}.

A central challenge in analyzing quantum many-body systems is the quantitative description of fluctuations. While expectation values and low-order moments are often accessible, the full probability distribution of physically relevant observables remains largely out of reach. This is particularly true for particle-number statistics in bosonic systems, where fluctuations encode key physical phenomena, such as \emph{quantum depletion} in Bose--Einstein condensates \cite{Kristensen,Vibel} and output statistics in bosonic sampling experiments \cite{KocharovskyBosonSampling,KocharovskyCNBIII,Weedbrook}.

In this work, we focus on the distribution of the particle-number observable with respect to a class of locally interacting bosonic quasi-free states (cf.\ Definition~\ref{def:bogostate} below). These states play a distinguished role across quantum field theory, many-body physics, and quantum information, as they provide effective descriptions of weakly interacting systems while retaining non-trivial correlations \cite{Bogoliubov,LiebSeiringer,BBCS2018}. Despite their structural simplicity---entirely characterized by two-point correlation functions---their particle-number distribution is non-trivial. A key difficulty is that the particle-number observable involves contributions from infinitely many modes, leading to expressions that are infinite sums of dependent quantum observables. As a consequence, even basic quantities such as moment generating functions and tail probabilities are difficult to analyze directly.

In the physics literature, it has been suggested that the particle number in quasi-free states may be described in terms of independent geometric random variables, reflecting an underlying mode-wise structure. Although this viewpoint is consistent with heuristic arguments and known structural properties of quasi-free states, a comprehensive, fully rigorous probabilistic formulation remains unavailable, to our knowledge. Even more fundamentally, while some aspects of rare-event behavior are understood, a comprehensive theory in this setting is still lacking. In particular: 
\begin{enumerate}
	\item What is the probability that the particle number deviates significantly from its mean?
	\item Do these deviations obey a large deviation principle?
	\item How do these properties depend on the scaling regime of the underlying system?
\end{enumerate}
These questions are not only of theoretical interest but also increasingly relevant in light of experimental advances that make high-precision measurements of fluctuations and rare events accessible \cite{Kristensen,Vibel}. 

\subsection{Main contributions}

The present work addresses the above problem by developing a probabilistic description of particle-number statistics in quasi-free states and exploiting this structure to derive quantitative results on fluctuations.

\subsubsection*{Probabilistic representation}

Our first main result (cf.\ Theorem~\ref{thm:N+_bogo}) rigorously confirms predictions from the physics literature by showing that the particle number of a locally interacting quasi-free state can be represented as an infinite sum of independent geometrically distributed random variables (see e.g. \cite{Caves,Gasperini,Tanas}). This result reveals a remarkable structural simplification, namely, although the underlying quantum system exhibits non-trivial correlations, the particle-number observable behaves as a classical sum of independent random variables after an appropriate transformation.

This representation not only identifies the exact distributional structure of the particle number but also provides explicit formulas for expectations and moment-generating functions, and will serve as the starting point for all subsequent analysis.

\subsubsection*{Asymptotic limits and tail behavior}

Building on the probabilistic representation of Theorem~\ref{thm:N+_bogo}, we develop a systematic theory of rare-event fluctuations. We prove that the particle number satisfies deviation results in multiple regimes of the underlying lattice structure $\kappa\bb{Z}^d_0$, where $\bb{Z}^d_0 := \bb{Z}^d \setminus \{ 0 \}$ is the punctured cubic lattice, with lattice spacing parameter $\kappa>0$:
\begin{enumerate}
	\item In the \emph{intermediate regime} (fixed lattice spacing, $\kappa>0$), we derive tail bounds for the particle number distribution (Theorem~\ref{thm:ldp-intermediate}).
	\item In the \emph{sparse regime} (diverging lattice spacing, $\kappa\to+\infty$), we prove a law of small numbers, with an explicit Poissonian first-order asymptotic for the particle number and its exponential moments (Theorem~\ref{thm:poisson-sparse}), together with a large deviation principle at logarithmic speed (Theorem~\ref{thm:log-ldp}).
	\item In the \emph{dense regime} (vanishing lattice spacing, $\kappa\to 0$), we show convergence toward Gaussian fluctuations via a central limit theorem (Theorem~\ref{thm:central-limit}).
\end{enumerate}

We expect that such results extend to more general families of discrete sets whose rescaled counting measures converge, in an appropriate sense, to measures supported on more general limiting domains. We restrict our attention here to the punctured cubic lattice $\kappa \bb{Z}^3_0$, which provides the simplest non-trivial example that already displays the two main difficulties: possible singular behavior near a lower-dimensional excluded set, here reduced to the point $0$, and non-trivial behavior at infinity.

\subsubsection*{Application to Bose--Einstein condensates}

As a key application, we analyze the quantum depletion observable in Bose--Einstein condensates in Section~\ref{sec:BEC}, i.e., the number of particles outside the condensate. We show that the quantum depletion admits a representation as an infinite sum of independent geometrically distributed random variables and provide a quantitative analysis of its tail behavior in the large-particle limit. To our knowledge, this provides the first rigorous probabilistic description of rare fluctuations of quantum depletion. 

Our key observation is that the limiting Bogoliubov state can be interpreted as a locally interacting quasi-free state on a quotient of the punctured lattice $\kappa \bb{Z}^3_0$, with lattice spacing determined by the scattering length $\fk{a}$. This allows us to translate the sparse- and dense-lattice asymptotics of the punctured lattice into asymptotic results for the quantum depletion as $\mathfrak{a}$ converges to $0$ or $+\infty$.

%

\section{Main Results}

\subsection{Preliminaries}

Throughout this paper, we consider an at most countable discrete set $\Lambda$
and the Fock space
\[
	\fk{h}_{\Lambda} := \cl{F}_{\sym}(\ell^2_\bb{C}(\Lambda{\times}\{+,-\})), 
\]
where $\cl{F}_{\sym}(\cl{H})$ denotes the symmetric (or bosonic) Fock space over a Hilbert space $\cl{H}$ and $\{+,-\}$ denote spin states. We further define the spin space $\bb{S} = \text{span}_\bb{C}\{\ket{+},\ket{-}\}\cong\bb{C}^2$, where $\ket{+}$ and $\ket{-}$ denote the positive and negative spin vectors respectively. It is not difficult to see that
\[
	\fk{h}_{\Lambda} \cong \bigotimes_{p \in \Lambda} \fk{h}_p,\qquad \fk{h}_p = \cl{F}_{\sym}(\bb{S}).
\]
In this sense, the Fock space $\fk{h}_{\Lambda}$ may be interpreted as a standard quantum lattice model with local spin space $\bb{S}$. 

On each lattice site $p\in\Lambda$, the creation operator $\sf{a}_{(p,\pm)}^*$ creates a boson with positive, respectively, negative spin, while the annihilation operator $\sf{a}_{(p,\pm)}$ annihilates a boson with positive, respectively, negative spin. These operators satisfy the usual canonical commutation relations
\begin{align}\label{eq:CCR}\tag{CCR}
    [\sf{a}_\sigma^*,\sf{a}_\eta^*] = [\sf{a}_\sigma,\sf{a}_\eta] = 0, \;\; \text{and} \;\; [\sf{a}_\sigma, \sf{a}_\eta^*] = \updelta_{\sigma,\eta} \;\; \text{for} \;\; \sigma,\eta \in \Lambda{\times}\{+,-\} \, . 
\end{align}

%

\medskip
We are interested in a particular class of bosonic vectors in the bosonic lattice Fock space $\fk{h}_{\Lambda}$, namely the \textit{locally interacting quasi-free vector}, which we define below. 

\begin{definition}[Locally interacting quasi-free vectors/states]
\label{def:bogostate}
	For any $\nu \in \ell^2_\bb{R}(\Lambda)$, the \emph{locally interacting quasi-free vector} $\Uppsi_\nu$ on $\fk{h}_\Lambda$ is defined as 
	\begin{equation} \label{eq:bogostate}
		\Uppsi_\nu  := \exp(\sf{A}_\nu)\Upomega,\qquad \sf{A}_\nu := \sum_{p \in \Lambda} \nu_p \big[ \sf{a}_{(p,+)}^* \sf{a}_{(p,-)}^* - \sf{a}_{(p,+)} \sf{a}_{(p,-)}\big],
	\end{equation}
	where $\Upomega$ denotes the vacuum vector on $\fk{h}_\Lambda$. In the physics literature, vectors of this form are also known as \emph{squeezed vacuum states}.
	
	The associated \emph{locally interacting quasi-free state} is defined by $\varrho_\nu :=\langle \Uppsi_\nu,\cdot\;\Uppsi_\nu\rangle.$
\end{definition}

\begin{remark}
\begin{enumerate}[(i)]
\item By construction, $\sf{A}_\nu^* =-\sf{A}_\nu$ for any $\nu \in \ell^2_\bb{R}(\Lambda)$. In particular, $\exp(\sf{A}_\nu)$ is a unitary map, and thus a Bogoliubov transformation\footnote{We note that general Bogoliubov transformations take the form $A_{\mu} = \sum_{\sigma,\eta \in \Lambda{\times}\{+,-\}} \mu_{\sigma,\eta} ( \sf{a}_\sigma^*\sf{a}_\eta^* - \sf{a}_\sigma\sf{a}_\eta )$ for any intensity function $\mu \in \ell^2((\Lambda{\times}\{+,-\})^2)$.}. Bogoliubov transformations were introduced in mathematical physics \cite{Bogoliubov,LiebSeiringer} to diagonalize Hamiltonians quadratic in creation and annihilation operators. Thus, in this picture, locally interacting quasi-free states $\varrho_\nu$ can be interpreted as ground states of locally interacting quadratic Hamiltonians and are, in this context, simply called \textit{Bogoliubov states} or \textit{pure Gaussian states}.

\item 
In the literature, however, states of the form \eqref{eq:bogostate} have various names. In fact, the terminology of \textit{locally interacting quasi-free states} that we use in this paper is inspired by the following picture: A state $\varrho$ is called \textit{quasi-free} if it is completely characterized by its two-point correlations
\[
 \varrho( \sf{a}_\sigma^*\sf{a}_\eta ), \;\; \varrho (\sf{a}_\sigma\sf{a}_\eta), \;\;  \text{and} \;\;  \varrho(\sf{a}_\sigma^*\sf{a}_\eta^*), \quad \sigma,\eta \in \bb{Z}_\kappa^d{\times}\{+,-\} \, .  
\]
As a consequence of the canonical commutation relations \eqref{eq:CCR}, the only relevant (non-zero) two-point correlations for a state $\varrho$ in our context are given by $ \varrho(\sf{a}_{(p,+)}^*\sf{a}_{(p,+)})$, $\varrho(\sf{a}_{(p,+)} \sf{a}_{(p,-)})$, and $\varrho(\sf{a}_{(p,+)}^*\sf{a}_{(p,-)}^*)$ for $p\in\Lambda$, i.e., only the modes $(p,+)$ and $(p,-)$ couple. 
 Thus, the state $\varrho$ is a \emph{locally interacting quasi-free state}. In Section~\ref{sec:BEC}, and in the case where $\Lambda = \bb{Z}^d_0 / \sim$, where $p \sim q$ if $p = \pm q$, we explain that $\varrho$ may alternatively be interpreted as a translation-invariant quasi-free state, represented in momentum space.
%

\item The proper construction of the infinite tensor product (over a countable set) and of the infinite sums of operators involved follow from standard argument in Hilbert space theory and many-body particle systems. We refer to standard textbooks on the subject for the technical details. 

\end{enumerate}
\end{remark}

In this work, we investigate the distribution of the particle-number operator with respect to locally interacting quasi-free states $\varrho_\nu$. Here, the sum of bosons with positive and negative spin on each site $p\in\Lambda$ is counted by the number operator
\[
	\sf{N}_p = \sf{a}_{(p,+)}^*\sf{a}_{(p,+)} + \sf{a}_{(p,-)}^* \sf{a}_{(p,-)}\,. 
\]
The total particle number observable is then given by
\[
	\sf{N}_\Lambda := \sum_{p\in \Lambda} \sf{N}_p.
\]

\begin{remark}
	When $\Lambda = \kappa \bb{Z}^d_0$, with $\kappa>0$ interpreted as a lattice spacing, the space $\fk{h}_\Lambda$ embeds naturally into $\cl{F}_{\sym}(\ell^2(\kappa\bb{Z}^d))$. In this case $\sf{N}_\Lambda$, denoted by $\sf{N}_+$ in Section~\ref{sec:BEC}, is called the \emph{quantum depletion} operator in the physics literature, as it counts all bosons outside the lattice site $p=0$. Section~\ref{sec:BEC} is devoted to the analysis of $\sf{N}_+$, which depends implicitly on $\kappa$ through the sum over all sites of $\kappa\bb{Z}^d_0$.
\end{remark}

Throughout the manuscript, we will use the notation
\[
    \bb{P}_{\Uppsi}(\sf{A} \in E) := \langle \Uppsi,\mathbbm{1}_E(\sf{A})\Uppsi\rangle
\]
to denote the probability of measuring the observable $\sf{A}$ in the event $E\in\cl{B}(\bb{R})$ in the vector $\Uppsi$. Similarly, we will define the mean and variance of $\sf{A}$ in the vector state $\Psi$ 
\[
    \bb{E}_{\Psi}[\sf{A}] = \expval{\sf{A}}{\Psi}, \quad \text{and} \quad \bb{V}_\Psi[\sf{A}] := \expval{(\sf{A} - \bb{E}_{\Psi}[\sf{A}])^2}{\Psi},
\]
provided that those quantities are well-defined (e.g., when $A$ is self-adjoint, $\Psi \in \text{dom}(|A|)$). 

\subsection{Probabilistic representation}
Our first result gives a representation of the distribution of the particle-number operator $\sf{N}_\Lambda$ in the state $\varrho_\nu$ as summarized in the following statement.

\begin{theorem}[Number operator distribution]\label{thm:N+_bogo}
        Let $\nu\in\ell^2(\Lambda)$ and $\Uppsi_\nu$ be the corresponding locally interacting quasi-free vector given by \eqref{eq:bogostate}. Then,
    \[
        \bb{P}_{\Uppsi_\nu}(\sf{N}_\Lambda \in E) = \bb{P}( 2 G_\Lambda \in E )\qquad \text{for any Borel set $E\in\cl{B}(\bb{R})$},
    \]
    where $G_\Lambda = \sum_{p \in \Lambda} G_p$ is a sum of independent geometric random variables, with $G_p\sim \mathrm{Geo}(\sech^2(\nu_p))$, $p\in\Lambda$, each with parameter $\sech^2(\nu_p)$, i.e.,
    \[
		\bb{P}(G_p = k) = \sech^2(\nu_p) \tanh^{2k}(\nu_p)\qquad \text{for every $k \in \bb{N}_0$}.
    \]
    
    In other words, the distribution of the number operator $\sf{N}_\Lambda$ in the vector $\Uppsi_\nu$ coincides with the distribution of a sum of geometric random variables $2G_\Lambda$.
    \end{theorem}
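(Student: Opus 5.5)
The plan is to reduce everything to a single site and then pass to the infinite product by a characteristic-function argument.

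\emph{Factorization.} The operators $\nu_p[\sf{a}_{(p,+)}^*\sf{a}_{(p,-)}^*-\sf{a}_{(p,+)}\sf{a}_{(p,-)}]$ for distinct $p$ act on different tensor factors $\fk{h}_p$ and commute. Hence $\Uppsi_\nu=\bigotimes_{p\in\Lambda}\psi_p$ with $\psi_p:=\exp(\nu_p K_p)\Upomega_p$ and $K_p:=\sf{a}_{(p,+)}^*\sf{a}_{(p,-)}^*-\sf{a}_{(p,+)}\sf{a}_{(p,-)}$. For infinite $\Lambda$ I will define this first for finite $F\subset\Lambda$, and then use that $\exp(\sf{A}_{\nu\mathbbm{1}_F})\Upomega\to\Uppsi_\nu$ as $F\uparrow\Lambda$ since $\nu\in\ell^2$. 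Likewise $\sf{N}_\Lambda=\sum_p \sf{N}_p$ with each $\sf{N}_p$ acting only on $\fk{h}_p$, so the $\sf{N}_p$ are commuting self-adjoint operators.

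\emph{Single-site computation.} I will show
\[
\psi_p=\sech(\nu_p)\sum_{k\ge0}\tanh^k(\nu_p)\,|k,k\rangle ,
\]
where $|k,k\rangle$ has $k$ bosons of each spin. This is the standard two-mode squeezing disentangling formula. It follows from the $\mathfrak{su}(1,1)$ structure: with $K_+=\sf{a}_+^*\sf{a}_-^*$, $K_-=\sf{a}_+\sf{a}_-$ and $K_0=\tfrac12(\sf{N}_p+1)$, one has $e^{\nu(K_+-K_-)}=e^{\tanh\nu K_+}e^{-2\log\cosh\nu\,K_0}e^{-\tanh\nu K_-}$. Alternatively, $\phi(t):=e^{tK}\Upomega$ satisfies an ODE, and the ansatz $c(t)\sum_k \tau(t)^k|k,k\rangle$ gives $\tau'=1-\tau^2$ and $c'=-\tau c$, hence $\tau=\tanh t$ and $c=\sech t$. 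Since $\sf{N}_p|k,k\rangle=2k|k,k\rangle$, the spectral measure of $\sf{N}_p$ in $\psi_p$ puts mass $\sech^2(\nu_p)\tanh^{2k}(\nu_p)$ on $2k$. This is exactly the law of $2G_p$.

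\emph{Passing to the sum.} For finite $F$, the joint spectral measure of the commuting family $(\sf{N}_p)_{p\in F}$ in the product vector $\bigotimes\psi_p$ is the product measure. Therefore $\langle\Uppsi,e^{it\sf{N}_F}\Uppsi\rangle=\prod_{p\in F}\bb{E}e^{2itG_p}$, with $\sf{N}_F:=\sum_{p\in F}\sf{N}_p$. For infinite $\Lambda$, note that $\bb{E}G_\Lambda=\sum_p\sinh^2\nu_p<\infty$ because $\nu\in\ell^2$, so $G_\Lambda<\infty$ almost surely. Moreover, $\sf{N}_F\uparrow\sf{N}_\Lambda$ as an increasing family of commuting nonnegative operators. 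I will compute $\langle\Uppsi_\nu,e^{it\sf{N}_\Lambda}\Uppsi_\nu\rangle$ by working in the joint eigenbasis, i.e. the occupation-number basis, where $\Uppsi_\nu$ has an explicit product expansion. Monotone and dominated convergence then identify it with $\lim_F\prod_{p\in F}\bb{E}e^{2itG_p}=\bb{E}e^{2itG_\Lambda}$. Equality of characteristic functions gives equality of laws on all Borel sets.

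\emph{Main obstacle.} The only real care needed is this infinite-product and limit step: making the infinite tensor product and the unbounded sum $\sf{N}_\Lambda$ rigorous. The single-site formula is routine.
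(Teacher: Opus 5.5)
Your proposal is correct and follows essentially the same route as the paper: factorize $\Uppsi_\nu$ over sites, use the $\mathfrak{su}(1,1)$ disentangling formula to get the explicit single-site squeezed vector and hence the law of $2G_p$ for $\mathsf{N}_p$, then identify the law of $\mathsf{N}_\Lambda$ through independence and characteristic functions. Your explicit truncation argument ($F\uparrow\Lambda$, using $\nu\in\ell^2$ and monotone/dominated convergence in the occupation-number basis) is a genuine improvement in rigor. The paper passes from independence of finite families directly to the infinite product of characteristic functions without comment.
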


As a consequence of Theorem~\ref{thm:N+_bogo}, we obtain explicit expressions not only for the mean and variance of $\sf{N}_\Lambda$ in the vector $\Uppsi_\nu$, but also for its characteristic and moment generating function in terms of the random variable $G_\Lambda$, which serve as indispensable tools in studying the large deviation and tail behavior of the distribution. Indeed, since
\begin{align}\label{eq:mean-variance}
	\mu_\Lambda:=\bb{E}[G_\Lambda] = \sum_{p\in\Lambda} \sinh^2(\nu_p),\qquad \sigma_\Lambda^2 := \bb{V}[G_\Lambda] = \sum_{p\in\Lambda} \cosh^2(\nu_p)\sinh^2(\nu_p),
\end{align}
are the mean and variance for the random variable $G_\Lambda$, we find the expressions
\[
	\bb{E}_{\Uppsi_\nu}[\sf{N}_\Lambda] = 2\mu_\Lambda,\qquad \bb{V}_{\Uppsi_\nu}[\sf{N}_\Lambda] = 4 \sigma_\Lambda^2.
\]

\subsection{Asymptotic limits and tail behavior}\label{subsec:tail}
Next, we prove deviation estimates for the particle-number distribution in three regimes. The first holds for an arbitrary index set $\Lambda$. The remaining two concern the lattice $\kappa \bb{Z}^d_0$ and its spacing $\kappa\in(0,+\infty)$. While it might be possible to study other lattices, we chose to focus on this specific case for two reasons: first, it is a sufficiently simple model that still allows for singular behavior (due to the removal of the origin); and second, it is the relevant example for Bose--Einstein condensation (see Section~\ref{sec:BEC}). 

\subsubsection*{\bf\em Intermediate regime}
\label{sec:intermediate-results}
Since the first regime does not specify a lattice spacing, we define it for an arbitrary at most countable set $\Lambda$ and weights $\nu\in\ell^2(\Lambda)$. Besides the mean $\mu_\Lambda$ of \eqref{eq:mean-variance}, the result involves the largest of the individual means,
\[
	\overline{\mu}_\Lambda := \sup_{p\in\Lambda} \sinh^2(\nu_p),
\]
which is attained, since $\nu\in\ell^2(\Lambda)$ implies that $\nu_p\to 0$.

\begin{theorem} \label{thm:ldp-intermediate}
Let $\nu \in \ell^2(\Lambda)$ with $\nu\not\equiv 0$. Then the distribution of $\sf{N}_\Lambda$ in the quasi-free vector $\Uppsi_\nu$ satisfies the exponential tail asymptotics
\[
	\lim_{\lambda \to \infty} \frac{1}{\lambda} \ln \mathbb{P}_{\Uppsi_\nu} \big( \sf{N}_\Lambda >  2\lambda \mu_\Lambda \big) = -\cl{I}_\Lambda,\qquad \cl{I}_\Lambda := \mu_\Lambda \ln \Bigl( 1 + \frac{1}{\overline{\mu}_\Lambda} \Bigr).
\]
\end{theorem}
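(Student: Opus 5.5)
By Theorem~\ref{thm:N+_bogo}, $\mathbb{P}_{\Uppsi_\nu}(\sf{N}_\Lambda > 2\lambda\mu_\Lambda) = \mathbb{P}(G_\Lambda > \lambda \mu_\Lambda)$. The statement therefore reduces to a tail asymptotic for a sum of independent geometric variables, and we work with $G_\Lambda$ throughout.

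Write $q_p := \tanh^2(\nu_p)$ and $\overline{q} := \sup_p q_p$. Since $\sinh^2 = q/(1-q)$, we have $\overline{q} = \overline{\mu}_\Lambda/(1+\overline{\mu}_\Lambda)$, so that $\ln(1/\overline{q}) = \ln(1+1/\overline{\mu}_\Lambda)$. The claim is thus
\[
\lim_{\lambda\to\infty}\lambda^{-1}\ln \mathbb{P}(G_\Lambda>\lambda\mu_\Lambda) = -\mu_\Lambda \ln(1/\overline{q}).
\]
Intuitively, the tail of a sum of independent geometric variables is governed by the heaviest summand, whose tail decays like $\overline{q}^{\,t}$.

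\emph{Lower bound.} Pick $p^*$ attaining the supremum; this is possible since $\nu_p\to 0$ and $\nu\not\equiv 0$. Because all $G_p\ge 0$, we have $G_\Lambda \ge G_{p^*}$, hence
\[
\mathbb{P}(G_\Lambda>\lambda\mu_\Lambda)\ge \mathbb{P}(G_{p^*}>\lambda\mu_\Lambda) = \overline{q}^{\,\lfloor\lambda\mu_\Lambda\rfloor+1}.
\]
Taking $\lambda^{-1}\ln$ and letting $\lambda\to\infty$ gives the liminf bound $-\mu_\Lambda\ln(1/\overline{q})$.

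\emph{Upper bound.} We use a Chernoff bound. For $0<\theta<\ln(1/\overline{q})$, independence gives
\[
\mathbb{E}[e^{\theta G_\Lambda}] = \prod_p \frac{1-q_p}{1-q_pe^\theta},
\]
and so
\[
\ln \mathbb{P}(G_\Lambda>\lambda\mu_\Lambda) \le -\theta\lambda\mu_\Lambda + \sum_p \ln\frac{1-q_p}{1-q_pe^\theta}.
\]
The key point, and the only genuinely infinite-dimensional step, is that this sum is finite. Since $\nu\in\ell^2$, we have $\sum_p q_p\le \sum_p\nu_p^2<\infty$. For fixed $\theta<\ln(1/\overline{q})$, the quantity $1-q_pe^\theta$ is bounded below by $1-\overline{q}e^\theta>0$ uniformly in $p$. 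Hence each summand is at most $C_\theta\, q_p(e^\theta-1)$, using $\ln\frac{1-a}{1-b}\le \frac{b-a}{1-b}$, and the sum converges to some finite $K_\theta$. Dividing by $\lambda$ and letting $\lambda\to\infty$ yields
\[
\limsup_{\lambda\to\infty} \lambda^{-1}\ln \mathbb{P}(G_\Lambda>\lambda\mu_\Lambda)\le -\theta\mu_\Lambda.
\]
Letting $\theta\uparrow\ln(1/\overline{q})$ then matches the lower bound.

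The main obstacle I anticipate is making sure the supremum is attained and that the uniform bound $1-q_pe^\theta\ge 1-\overline{q}e^\theta$ holds. Both follow from $\nu_p\to0$ and the definition of $\overline{q}$, so the argument is essentially elementary once Theorem~\ref{thm:N+_bogo} is in hand. If several sites attain the supremum, the lower bound is unchanged, and the upper bound still works because $\theta$ stays strictly below $\ln(1/\overline{q})$.
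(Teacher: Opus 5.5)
Your proof is correct and follows the same route as the paper. The lower bound, keeping only the maximizing site, is identical, and the upper bound is the same Chernoff argument. The only difference is in how the Chernoff term is handled. The paper bounds $\sum_p -\ln\bigl(1-\mu_p(e^t-1)\bigr)$ by convexity via $-\frac{\mu_\Lambda}{\overline{\mu}_\Lambda}\ln\bigl(1-(e^t-1)\overline{\mu}_\Lambda\bigr)$ and optimizes $t$ explicitly for each $\lambda\ge 1$, which gives the non-asymptotic bound \eqref{eq:tail-upper-bound}. You instead fix $\theta$, use only finiteness of the log-Laplace transform, and let $\lambda\to\infty$ before $\theta\uparrow\ln(1/\overline{q})$; this is simpler but yields only the limit.
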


Since $\bb{E}_{\Uppsi_\nu}[\sf{N}_\Lambda] = 2\mu_\Lambda$, the event above is that $\sf{N}_\Lambda$ exceeds $\lambda$ times its mean. Theorem~\ref{thm:ldp-intermediate} thus states that the upper tail decays exponentially to leading order,
\[
	\mathbb{P}_{\Uppsi_\nu} \big( \sf{N}_\Lambda >  2\lambda \mu_\Lambda \big) = \exp\bigl(-\lambda\,(\cl{I}_\Lambda + o(1))\bigr)\qquad \text{as $\lambda\to +\infty$},
\]
with a decay rate governed by the geometric random variable of largest mean. 

The proof of Theorem \ref{thm:ldp-intermediate} in Section~\ref{sec:proof-intermediate} in fact yields two-sided bounds at every fixed $\lambda \geq 1$ (with the lower bound being also valid for $\lambda \in (0,1)$), given in \eqref{eq:tail-lower-bound} and \eqref{eq:tail-upper-bound}, namely
\[
	\bigl(\lfloor \lambda\mu_\Lambda\rfloor + 1\bigr)\ln\biggl(\frac{\overline{\mu}_\Lambda}{1+\overline{\mu}_\Lambda}\biggr)
	\le \ln \bb{P}_{\Uppsi_\nu}\big( \sf{N}_\Lambda > 2\lambda\mu_\Lambda \big)
	\le -\lambda\mu_\Lambda \ln\biggl(\frac{\lambda(1+\overline{\mu}_\Lambda)}{1+\lambda\overline{\mu}_\Lambda}\biggr) - \frac{\mu_\Lambda}{\overline{\mu}_\Lambda}\ln\biggl(\frac{1+\overline{\mu}_\Lambda}{1+\lambda\overline{\mu}_\Lambda}\biggr).
\]
These hold for any infinite sum of independent geometric random variables and, as such, extend the tail bounds of \cite{Janson} for finite sums.
%

\subsubsection*{\bf\em Sparse regime, $\kappa\to +\infty$}
In the sparse regime, we study rare events and we establish a large-deviation principle for the family of distributions of the number operator $\sf{N}^\kappa := \sf{N}_{\kappa \bb{Z}^d_0}$ in the quasi-free vector $\Uppsi_\nu^\kappa$ as the lattice spacing diverges, i.e., $\kappa \rightarrow +\infty$. Setting $G^\kappa := G_{\kappa \bb{Z}^d_0}$ (associated with $\nu^\kappa \in \ell^2(\kappa \bb{Z}^d_0)$), this is equivalent to studying a scaled limit of the log-Laplace transform
\begin{align*}
	\mathscr{L}_\kappa(t) 
	&:=\ln \bb{E}_{\Uppsi_\nu^\kappa}[e^{t\sf{N}_+^\kappa}] \\
	&\phantom{:}= \ln \bb{E}[e^{2tG_\Lambda^\kappa}] = \begin{cases}
		\displaystyle -\sum_{p\in \Lambda^\kappa} \ln\left(1 - \mu_p^\kappa(e^{2t}-1)\right) &\text{for $t< \tfrac{1}{2}\ln\bigl(1+1/\overline{\mu}^\kappa\bigr)$},\\
		+\infty & \text{otherwise},
	\end{cases}
\end{align*}
with the local means $\mu_p^\kappa = \sinh^2(\nu_p^\kappa)$ for the geometric random variables $G_p^\kappa$ for $p \in \kappa \bb{Z}^d_0$.

For this result, we consider weights $\nu_p^\kappa = \nu(p)$, $p\in \kappa \bb{Z}^d_0$, where $\nu:\bb{R}^d_0 \to \bb{R}$ is continuous and $\bb{R}^d_0 := \bb{R}^d \setminus \{0 \}$. We also assume that $\nu$ satisfies the tail condition
\begin{align}\label{eq:sparse-condition}\tag{$\sf{A}\nu$}
	\fk{c}_\nu:=\lim_{|x|\to \infty} |x|^\gamma \sinh^2(\nu(x)) \in (0,+\infty),\qquad \gamma>d.
\end{align}
Under this condition, the scaled mean satisfies
\[
	\kappa^\gamma\, \bb{E}_{\Uppsi_\nu^\kappa}[\sf{N}^\kappa] \longrightarrow 2\fk{c}_\nu \zeta_d(\gamma)\qquad\text{for $\kappa\to +\infty$},
\]
where
\[
	\zeta_d(\gamma) := \sum_{p\in\bb{Z}^d_0} \frac{1}{|p|^{\gamma}}
\]
is an Epstein Zeta function in dimension $d$, well-defined for all $\gamma > d$.

\medskip
The scaled log-Laplace transform in fact converges, which identifies the limiting law of $\sf{N}^\kappa$ in the sparse regime as a \emph{law of small numbers}.

\begin{theorem}[Law of small numbers in the sparse regime]\label{thm:poisson-sparse}
	Let $\nu:\bb{R}^d_0 \to \bb{R}$ be a continuous function satisfying condition \eqref{eq:sparse-condition} for some $\fk{c}_\nu>0$ and exponent $\gamma>d$, and set $\fk{m}_\nu := \fk{c}_\nu\,\zeta_d(\gamma)$. Then, as $\kappa\to+\infty$:
	\begin{enumerate}[(i)]
		\item \emph{(Scaled log-Laplace transform)} For every $t\in\bb{R}$,
		\begin{equation}\label{eq:sparse-laplace-limit}
			\lim_{\kappa\to+\infty}\kappa^\gamma \mathscr{L}_\kappa(t) = \fk{m}_\nu\bigl(e^{2t}-1\bigr).
		\end{equation}
		\item \emph{(Two-point law)} For every $k\in\bb{N}_0$,
		\begin{equation}\label{eq:sparse-two-point}
			\bb{P}_{\Uppsi_\nu^\kappa}\bigl[\sf{N}^\kappa = 2k\bigr] = o\bigl(\kappa^{-\gamma}\bigr) + \begin{cases}
				1- \fk{m}_\nu\,\kappa^{-\gamma} & \text{for $k=0$}, \\
				\fk{m}_\nu\,\kappa^{-\gamma} & \text{for $k=1$}, \\
				0 & \text{otherwise}.
			\end{cases}
		\end{equation}
		\item \emph{(Exponential moments)} For every $\theta>0$ and $\lambda\in\bb{C}$,
		\begin{equation}\label{eq:sparse-moments}
			\lim_{\kappa \to +\infty} \kappa^\gamma\, \bb{E}_{\Uppsi_\nu^\kappa}\bigl[\bigl(\sf{N}^\kappa \bigr)^{\theta} e^{\lambda \sf{N}^\kappa}\bigr] = \fk{m}_\nu\, 2^{\theta} e^{2\lambda}.
		\end{equation}
	\end{enumerate}
\end{theorem}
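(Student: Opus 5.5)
The starting point is Theorem~\ref{thm:N+_bogo}: $\sf{N}^\kappa$ has the law of $2G^\kappa$, where $G^\kappa=\sum_{p\in\kappa\bb{Z}^d_0}G_p^\kappa$ and the $G_p^\kappa$ are independent geometric variables with means $\mu_p^\kappa=\sinh^2(\nu(p))$. The basic input is a uniform smallness estimate for these means. By continuity of $\nu$ and \eqref{eq:sparse-condition}, there is $R$ such that $\sinh^2(\nu(x))\le 2\fk{c}_\nu|x|^{-\gamma}$ for $|x|\ge R$. For $\kappa\ge R$, every site $p=\kappa q$ with $q\in\bb{Z}^d_0$ satisfies $|p|\ge\kappa$, so $\mu_{\kappa q}^\kappa\le 2\fk{c}_\nu\kappa^{-\gamma}|q|^{-\gamma}$. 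Hence $\overline{\mu}^\kappa\to 0$. Moreover $\kappa^\gamma\mu^\kappa_{\kappa q}\to \fk{c}_\nu|q|^{-\gamma}$ pointwise in $q$, with the summable dominating function $2\fk{c}_\nu|q|^{-\gamma}$ (summable since $\gamma>d$). Dominated convergence then gives $\kappa^\gamma\mu^\kappa:=\kappa^\gamma\sum_p\mu_p^\kappa\to\fk{m}_\nu$, and also $\kappa^{\gamma}\sum_p(\mu_p^\kappa)^2\le \kappa^{\gamma}\,\overline{\mu}^\kappa\mu^\kappa\to0$.

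For (i), fix $t$ and set $s=e^{2t}-1$. Since $\overline{\mu}^\kappa\to0$, for large $\kappa$ we have $\overline{\mu}^\kappa|s|<1/2$, so $t$ lies in the finiteness domain of $\mathscr{L}_\kappa$. Using $|-\ln(1-x)-x|\le x^2$ for $|x|\le 1/2$, we get
\[
\bigl|\kappa^\gamma\mathscr{L}_\kappa(t)-s\,\kappa^\gamma\mu^\kappa\bigr|\le s^2\kappa^\gamma\textstyle\sum_p(\mu_p^\kappa)^2\to0,
\]
which yields \eqref{eq:sparse-laplace-limit}.

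For (ii), we compute the probability generating function directly. We have $\bb{P}(G^\kappa=0)=\prod_p(1+\mu_p^\kappa)^{-1}=\exp(-\sum_p\ln(1+\mu_p^\kappa))=1-\mu^\kappa+O(\sum_p(\mu_p^\kappa)^2+(\mu^\kappa)^2)$. Next, $\bb{P}(G^\kappa=1)=\bb{P}(G^\kappa=0)\sum_p \mu_p^\kappa/(1+\mu_p^\kappa)=\mu^\kappa+o(\kappa^{-\gamma})$. For the remaining mass, $\bb{P}(G^\kappa\ge2)=1-\bb{P}(G^\kappa=0)-\bb{P}(G^\kappa=1)=o(\kappa^{-\gamma})$. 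Since $\kappa^\gamma\mu^\kappa=\fk{m}_\nu+o(1)$, this gives \eqref{eq:sparse-two-point}.

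Part (iii) is where the main work lies, because we need to control the tail $\bb{E}[(2G)^\theta e^{2\lambda G}\mathbbm 1_{G\ge2}]$ at order $o(\kappa^{-\gamma})$ for complex $\lambda$ with arbitrarily large real part. First, $|e^{\lambda\sf N}|=e^{\Re\lambda\,\sf N}$ and $(2k)^\theta\le C_{\theta,\epsilon}e^{2\epsilon k}$, so it suffices to bound $\bb{E}[e^{2aG}\mathbbm 1_{G\ge2}]$ with $a=\Re\lambda+\epsilon$. The term $k=0$ contributes nothing because $\theta>0$. The term $k=1$ contributes $2^\theta e^{2\lambda}\bb{P}(G=1)$, which gives the claimed limit by (ii). For the tail, write $r=e^{2a}$. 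For large $\kappa$ we have $r\overline{\mu}^\kappa<1/2$, so $\bb{E}[r^G]$ is finite and equals $\exp(\mathscr{L}_\kappa(a))$. Then
\[
\bb{E}[r^G\mathbbm 1_{G\ge2}]=\bb{E}[r^G]-\bb{P}(G=0)-r\,\bb{P}(G=1).
\]
By (i) and (ii), $\bb{E}[r^G]=1+\kappa^{-\gamma}\fk{m}_\nu(r-1)+o(\kappa^{-\gamma})$, $\bb{P}(G=0)=1-\fk{m}_\nu\kappa^{-\gamma}+o(\kappa^{-\gamma})$ and $r\,\bb{P}(G=1)=r\fk{m}_\nu\kappa^{-\gamma}+o(\kappa^{-\gamma})$. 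The main terms cancel, so $\bb{E}[r^G\mathbbm 1_{G\ge2}]=o(\kappa^{-\gamma})$. This closes (iii), with the uniform smallness of $\overline{\mu}^\kappa$ being the key input that makes $\mathscr{L}_\kappa(a)$ finite for every fixed $a$.
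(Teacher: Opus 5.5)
Your proof is correct. Part (i) is essentially the paper's argument in different packaging: the paper sandwiches $\kappa^\gamma\sinh^2(\nu(\kappa q))$ between $(\fk{c}_\nu\pm\varepsilon)|q|^{-\gamma}$ uniformly in $q$, uses $x\le-\ln(1-x)\le x/(1-x)$, and lets $\kappa\to+\infty$ and then $\varepsilon\to0$. You instead use dominated convergence for $\kappa^\gamma\mu^\kappa$ and a quadratic remainder bounded by $s^2\kappa^\gamma\overline{\mu}^\kappa\mu^\kappa$.

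For (ii) and (iii) your route is genuinely different. The paper never returns to the product structure there. It feeds the limit (i) into the abstract Proposition~\ref{prop:small-numbers}, which holds for any $\mathbb{N}_0$-valued family with $r_N^{-1}\ln\mathbb{E}[e^{tX_N}]\to\fk{c}(e^{Kt}-1)$ for all real $t$. Its proof uses a uniform domination $\mathbb{P}(X_N=k)\le C_\lambda r_N e^{-\lambda k}$ and a diagonal extraction. It then identifies the limiting masses via the identity theorem for an entire generating series, and finishes with dominated convergence.

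You instead compute $\mathbb{P}(G^\kappa=0)=\prod_p(1+\mu_p^\kappa)^{-1}$ and $\mathbb{P}(G^\kappa=1)=\mathbb{P}(G^\kappa=0)\sum_p\mu_p^\kappa/(1+\mu_p^\kappa)$ in closed form, and get $\mathbb{P}(G^\kappa\ge2)=o(\kappa^{-\gamma})$ by complement. You then dispose of the whole weighted tail at once through $\mathbb{E}[r^{G}\mathbbm{1}_{\{G\ge2\}}]=\mathbb{E}[r^{G}]-\mathbb{P}(G=0)-r\,\mathbb{P}(G=1)$, which needs (i) only at the single point $a=\Re\lambda+\epsilon$. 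This is shorter and avoids compactness and complex analysis. It also gives explicit error terms of order $\sum_p(\mu_p^\kappa)^2+(\mu^\kappa)^2$.

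The paper's route buys generality: its proposition uses nothing but the log-Laplace limit, with no independence and arbitrary $K$. That is why it can be quoted verbatim for the quantum depletion in Theorem~\ref{thm: BEC-small a}, where the log-Laplace limit is obtained only indirectly through $G^\kappa_{\mathrm{sum}}$.
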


Since $t\mapsto \fk{m}_\nu(e^{2t}-1)$ is precisely the log-Laplace transform of $2\,\mathrm{Poi}(\fk{m}_\nu)$, part~(i) states that, to leading order in $\kappa^{-\gamma}$,
\[
	N^\kappa \;\approx\; 2\,\mathrm{Poi}\bigl(\fk{m}_\nu\,\kappa^{-\gamma}\bigr) \qquad\text{as $\kappa\to+\infty$,}
\]
that is, the particle number obeys a law of small numbers with vanishing intensity, and parts~(ii) and~(iii) make this leading order explicit. We emphasize that the normalization $\kappa^{-\gamma}$ in \eqref{eq:sparse-laplace-limit} \emph{vanishes} as $\kappa\to+\infty$, so that \eqref{eq:sparse-laplace-limit} does not by itself express a large deviation principle: deviations in the sparse regime are polynomially, not exponentially, rare.

\medskip
Estimate \eqref{eq:sparse-two-point} identifies the leading correction to $\sf{N}^\kappa = 0$, but bounds the probability of observing $k\ge 2$ pairs only by $o(\kappa^{-\gamma})$. The correct order for every $k$ is captured by a large deviation principle at logarithmic speed.

\begin{theorem}[LDP at logarithmic speed]\label{thm:log-ldp}
Let $\nu:\bb{R}^d\backslash\{0\}\to\bb{R}$ be continuous and satisfy \eqref{eq:sparse-condition} with $\fk{c}_\nu\in(0,+\infty)$ for some $\gamma>d$. Then:
\begin{enumerate}[(i)]
	\item \emph{(Local asymptotics)} For every $k\in\bb{N}_0$,
	\[
		\lim_{\kappa\to+\infty} \frac{1}{\ln\kappa}\ln \bb{P}\big(G^\kappa = k\big) = -\gamma k .
	\]
    More precisely, we have, uniformly in $k \in \bb{N}_0$,
    \[
        \ln \bb{P}(G^\kappa = k) = -\gamma k \ln(\kappa) + (k+1) O(1).
    \]
    
	\item \emph{(Logarithmic LDP)} The family $\{G^\kappa\}_{\kappa}$ satisfies, as $\kappa\to+\infty$, a large deviation principle on $\bb{R}$ with speed $\ln\kappa$ and good rate function
	\[
		\mathscr{J}_\gamma \!(x) := \begin{cases} \gamma x & \text{for $ x\in\bb{N}_0$},\\ +\infty & \text{otherwise},\end{cases}
	\]
	i.e., for every Borel set $E\in\cl{B}(\bb{R})$,
	\[
		-\inf_{x\in \overset{\circ}{E}} \mathscr{J} \!(x)
		\le \liminf_{\kappa\to+\infty} \frac{\ln \bb{P}(G^\kappa\in E)}{\ln\kappa}
		\le \limsup_{\kappa\to+\infty} \frac{\ln \bb{P}(G^\kappa\in E)}{\ln\kappa}
		\le -\inf_{x\in \overline{E}} \mathscr{J}_\gamma \!(x).
	\]
	\item Consequently, the family of number operators $\sf{N}^\kappa$ in the quasi-free vectors $\Uppsi_\nu^\kappa$ satisfies a large deviation principle with speed $\ln\kappa$ and good rate function
	\[
		\mathscr{J}_{\sf{N}} : x \to \mathscr{J}_{\gamma}(x/2).
	\]
	In particular, the local asymptotics satisfy 
	\[
		\ln\bb{P}_{\Uppsi_\nu^\kappa}(\sf{N}^\kappa = 2k) = -\gamma k \ln \kappa + k O(1),
	\]
    uniformly in $k \in \bb{N}_0$. 
\end{enumerate}
\end{theorem}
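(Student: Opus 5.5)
Everything rests on the representation of Theorem~\ref{thm:N+_bogo}: $G^\kappa=\sum_{p\in\kappa\bb{Z}^d_0}G_p$ with independent $G_p\sim\mathrm{Geo}(\sech^2\nu(p))$. Write $q_p:=\tanh^2(\nu(p))=\mu_p/(1+\mu_p)$, where $\mu_p=\sinh^2\nu(p)$. Assumption \eqref{eq:sparse-condition} gives $\mu_p=\fk{c}_\nu|p|^{-\gamma}(1+o(1))$ as $|p|\to\infty$. On $\kappa\bb{Z}^d_0$ we have $|p|\ge\kappa$, so for $\kappa$ large $q_p\asymp \kappa^{-\gamma}|n|^{-\gamma}$ uniformly in $p=\kappa n$, $n\in\bb{Z}^d_0$. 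The probability generating function is
\[
\bb{E}[s^{G^\kappa}]=\prod_p\frac{1-q_p}{1-q_ps},
\]
so $\bb{P}(G^\kappa=k)=Z_\kappa\, h_k(q)$. Here $Z_\kappa=\prod_p(1-q_p)$ and $h_k(q)=\sum_{|m|=k}\prod_p q_p^{m_p}$ is the complete homogeneous symmetric function of the infinite family $q=(q_p)$. Since $\sum_p q_p\asymp\kappa^{-\gamma}\zeta_d(\gamma)\to0$, we get $\ln Z_\kappa=O(\kappa^{-\gamma})=O(1)$ uniformly, so it suffices to show $\ln h_k(q)=-\gamma k\ln\kappa+(k+1)O(1)$ uniformly in $k$.

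\textbf{Lower bound.} Keep only the term $m=k\,\delta_{p_*}$, where $p_*=\kappa e_1$ is a site closest to the origin. This gives $h_k\ge q_{p_*}^k$. Moreover $q_{p_*}\ge c\,\kappa^{-\gamma}$, with $c$ fixed, for all large $\kappa$: this follows from continuity of $\nu$ and the tail condition, which give $\mu(\kappa e_1)\sim\fk{c}_\nu\kappa^{-\gamma}$. Hence $\ln h_k\ge -\gamma k\ln\kappa+k\ln c$.

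\textbf{Upper bound.} Use $h_k(q)\le \bigl(\sum_p q_p\bigr)^k$, which holds because every monomial of $h_k$ appears in the multinomial expansion of $(\sum_p q_p)^k$ with coefficient at least $1$. Then $\sum_p q_p\le C\kappa^{-\gamma}$ with $C$ of order $\fk{c}_\nu\zeta_d(\gamma)$, so $\ln h_k\le -\gamma k\ln\kappa+k\ln C$. Together the two bounds give (i), uniformly in $k$. The only genuinely delicate point is the uniformity of $q_{\kappa n}\le C\kappa^{-\gamma}|n|^{-\gamma}$ over all $n$. Only large $|p|$ matter here, since $|p|\ge\kappa\to\infty$, so the tail condition alone suffices and continuity of $\nu$ near $0$ is never used. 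This is the step I would check most carefully, including that the comparison constant in $\mu_p\le(\fk{c}_\nu+\epsilon)|p|^{-\gamma}$ is valid for $|p|\ge R_\epsilon$.

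\textbf{LDP (ii).} $G^\kappa$ is supported on $\bb{N}_0$. For a Borel set $E$, the lower bound follows from (i): if $x\in\overset{\circ}{E}\cap\bb{N}_0$, then $\bb{P}(G^\kappa\in E)\ge\bb{P}(G^\kappa=x)$. For the upper bound, let $k_0=\min(\overline{E}\cap\bb{N}_0)$; if this set is empty the bound is trivial. Then
\[
\bb{P}(G^\kappa\in E)\le\sum_{k\ge k_0}\bb{P}(G^\kappa=k)\le Z_\kappa\sum_{k\ge k_0}(C\kappa^{-\gamma})^k\le 2(C\kappa^{-\gamma})^{k_0}
\]
for $\kappa$ large, which gives $\limsup\le-\gamma k_0$. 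Goodness of $\mathscr{J}_\gamma$ is immediate, since its sublevel sets are finite subsets of $\bb{N}_0$.

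\textbf{Transfer (iii).} By Theorem~\ref{thm:N+_bogo}, $\sf{N}^\kappa$ has the law of $2G^\kappa$. The contraction principle, applied to the continuous map $x\mapsto2x$, then yields the rate function $\mathscr{J}_\gamma(x/2)$ and the stated local asymptotics. The uniform error is written $kO(1)$ rather than $(k+1)O(1)$ because for $k=0$ we have $\ln\bb{P}=\ln Z_\kappa=O(\kappa^{-\gamma})\to0$.
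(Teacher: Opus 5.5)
Your proof is correct and shares the paper's skeleton, but you bound the upper side differently. The common parts are: the uniform site bounds $\mu_{\kappa n}\le(1+\varepsilon)\fk{c}_\nu\kappa^{-\gamma}|n|^{-\gamma}$ for all $n\in\bb{Z}^d_0$ once $\kappa\ge R_\varepsilon$ (Lemma~\ref{lem:site-bounds}); the single-site lower bound $Z_\kappa q_{p_*}^k$, which is exactly Lemma~\ref{lem:lower-local}; the passage from local bounds to the LDP; and the contraction principle for (iii).

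The paper uses an exponential Markov inequality, $\bb{P}(G^\kappa\ge k)\le\theta_\kappa^{-k}\,\bb{E}[\theta_\kappa^{G^\kappa}]$ with $\theta_\kappa=\kappa^\gamma/(3\fk{c}_\nu)$. This has two consequences:
\begin{enumerate}[(a)]
\item It bounds the tail in one step, so the LDP upper bound needs no summation.
\item It carries a non-vanishing additive constant $\zeta_d(\gamma)$ from $\ln\bb{E}[\theta_\kappa^{G^\kappa}]$, which is the source of the $(k+1)O(1)$ error in (i).
\end{enumerate}
You instead use the exact formula $\bb{P}(G^\kappa=k)=Z_\kappa h_k(q)$ together with the monomial domination $h_k(q)\le(\sum_p q_p)^k$. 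This controls point masses, so you need a geometric series to get tails. In return it yields
\[
\ln\bb{P}(G^\kappa=k)=-\gamma k\ln\kappa+kO(1)+O(\kappa^{-\gamma}),
\]
because the only additive error is $\ln Z_\kappa\to0$. Your upper bound is also sharp at $k=1$: $h_1=\sum_p q_p$ exactly, and $Z_\kappa\sum_p q_p\sim\fk{m}_\nu\kappa^{-\gamma}$, consistent with Theorem~\ref{thm:poisson-sparse}.

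This refinement is what the $kO(1)$ form in (iii) actually requires. The paper asserts it by appeal to (i), which only gives $(k+1)O(1)$. Strictly, the statement should be read as $kO(1)+O(\kappa^{-\gamma})$, since $\ln\bb{P}_{\Uppsi_\nu^\kappa}(\sf{N}^\kappa=0)=\ln Z_\kappa\neq0$; your closing remark says this only loosely. The paper's route, for its part, follows the standard exponential-moment template for large deviations.
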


\subsubsection*{\bf\em Dense regime, $\kappa\to 0$} In this regime, we assume that the weights take the form $\nu_p^\kappa=\nu(p)$, where $\nu:\bb{R}^d_0 \to \bb{R}$ is a continuous function satisfying
\begin{align}\label{ass:dense-conditions}\tag{$\sf{B}\nu$}
	\left\{\quad\begin{aligned}
			\limsup_{|x|\to +\infty} |x|^\gamma\sinh^2(\nu(x))<+\infty &\qquad \text{for some $\gamma>d$}, \\
		\limsup_{|x|\to 0} |x|^\beta\sinh^2(\nu(x))<+\infty &\qquad \text{for some $\beta\in(0,d/2)$}.
	\end{aligned}\right.
\end{align}

The first condition provides tail estimates, while the second controls integrability around zero upon rescaling. Under these two conditions, we obtain the following central limit theorem for the sequence of rescaled observables as $\kappa\to 0$, where we set $\mu^\kappa := \bb{E}[G^\kappa]$ and $(\sigma^\kappa)^2 := \bb{V}[G^\kappa]$ (in particular $\bb{E}_{\Psi_\nu^\kappa}[N^\kappa] = 2 \mu^\kappa$ and $\bb{V}_{\Psi_\nu^\kappa}[N^\kappa] = 4 (\sigma^\kappa)^2$). 

\begin{theorem}\label{thm:central-limit}
    Let $\nu:\bb{R}^d\backslash\{0\}\to \bb{R}$ be a continuous function satisfying condition \eqref{ass:dense-conditions} and consider the sequence of scaled centered random variables
    \[
    	X^\kappa := (G^\kappa - \mu^\kappa)/\sigma^\kappa.
    \]
    Then the sequence of distributions $\rm{Law}(X^\kappa)$ converges weakly to the standard normal distribution $\cl{N}(0,1)$ along with all its $\theta$-moments for $\theta\in[1,d/\beta)$.
        
    As a result, the family of scaled centered observables $(\sf{N}^\kappa-2\mu^\kappa)/(2\sigma^\kappa)$ in the quasi-free vector $\Uppsi_\nu^\kappa$ converges in distribution to a standard normal random variable.
\end{theorem}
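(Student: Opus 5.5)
We prove Theorem~\ref{thm:central-limit} with a Lyapunov-type argument built on the cumulants of independent geometric variables, using Theorem~\ref{thm:N+_bogo} to transfer the result to $\sf{N}^\kappa$. Write $\mu_p := \sinh^2(\nu(p))$ for $p\in\kappa\bb{Z}^d_0$. The first step is to find the asymptotics of $\mu^\kappa$ and $(\sigma^\kappa)^2 = \sum_p \mu_p(1+\mu_p)$ by Riemann sums. Since $\kappa^d\sum_{p\in\kappa\bb{Z}^d_0} f(p) \to \int_{\bb{R}^d} f(x)\,dx$ for continuous $f$ that are integrable near $0$ and near infinity, condition \eqref{ass:dense-conditions} makes $\sinh^2\nu$ and $\sinh^4\nu$ integrable. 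Indeed, $|x|^{-\beta}$ and $|x|^{-2\beta}$ are integrable at $0$ because $2\beta<d$, and $|x|^{-\gamma}$ is integrable at infinity because $\gamma>d$. To justify the limit we split the sum into the regions $|p|\le\delta$, $\delta\le|p|\le R$ and $|p|\ge R$. On the middle region continuity gives uniform convergence. On the two outer regions we bound the sums by $C\kappa^d\sum_{|p|\le\delta}|p|^{-2\beta}\lesssim \delta^{d-2\beta}$ and the analogous tail estimate. This yields $(\sigma^\kappa)^2 \sim \kappa^{-d} s^2$ with $s^2 = \int(\sinh^2\nu+\sinh^4\nu)\,dx\in(0,\infty)$, provided $\nu\not\equiv 0$, so $\sigma^\kappa\to\infty$.

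The second step is the CLT itself. The log-Laplace transform of $G^\kappa$ is $-\sum_p \ln(1-\mu_p(e^{t}-1))$, so its $j$-th cumulant is $\kappa_j(G^\kappa) = \sum_p c_j(\mu_p)$, where $c_j(m)$ is the $j$-th cumulant of $\mathrm{Geo}$ with mean $m$. This $c_j$ is a polynomial in $m$ of degree $j$ with no constant term, so $|c_j(m)|\le C_j(m+m^j)$. The cumulant of order $j\ge 3$ of $X^\kappa$ is $\kappa_j(G^\kappa)/(\sigma^\kappa)^j$. Here $\sum_p\mu_p\sim\kappa^{-d}$, while for the higher-power term $\sum_p\mu_p^j$ we use the bound at the origin, $\mu_p\lesssim|p|^{-\beta}$. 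This gives $\sum_p \mu_p^j \lesssim \kappa^{-d} + \sum_{0<|p|\le 1}|p|^{-j\beta}$. The second sum is of order $\kappa^{-j\beta}$ when $j\beta>d$ (dominated by the nearest sites $|p|\sim\kappa$), of order $\kappa^{-d}$ when $j\beta<d$, and logarithmic in the boundary case. Since $\beta<d/2$, we have $j\beta< jd/2$. Hence $\kappa_j(G^\kappa)/(\sigma^\kappa)^j \lesssim \kappa^{jd/2}(\kappa^{-d}+\kappa^{-j\beta}\log) \to 0$ for every $j\ge3$. All cumulants of $X^\kappa$ beyond the second therefore vanish, so all moments converge to Gaussian moments. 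Since the normal distribution is moment-determinate, $X^\kappa\Rightarrow\cl{N}(0,1)$.

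The claim about $\theta$-moments then follows. We would have convergence of every integer moment, so $\sup_\kappa\bb{E}|X^\kappa|^{m}<\infty$ for every $m$, and uniform integrability gives convergence of every $\theta$-moment, in particular for $\theta\in[1,d/\beta)$. If the cumulant bound fails for large $j$, the fallback is to prove the Lyapunov condition with exponent $\theta$: bound $\sum_p\bb{E}|G_p-\mu_p|^{\theta}\lesssim\sum_p(\mu_p+\mu_p^\theta)$, where the second sum is $o(\kappa^{-d\theta/2})$ exactly because $\theta\beta<d$ and $\beta<d/2$. Uniform integrability of $|X^\kappa|^{\theta'}$ with $\theta<\theta'<d/\beta$, via Rosenthal's inequality, then gives the stated moment range. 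Finally, Theorem~\ref{thm:N+_bogo} identifies the law of $\sf{N}^\kappa$ under $\Uppsi_\nu^\kappa$ with that of $2G^\kappa$, so $(\sf{N}^\kappa-2\mu^\kappa)/(2\sigma^\kappa)$ has the law of $X^\kappa$.

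The main obstacle is the singularity at the origin. The sites nearest $0$ carry means $\mu_p\sim\kappa^{-\beta}$ that blow up, and we must check that no single mode dominates the variance. The key estimate is $\max_p \mathrm{Var}(G_p)/(\sigma^\kappa)^2\lesssim \kappa^{-2\beta}\kappa^{d}\to0$, which is where $\beta<d/2$ is used. The logarithmic boundary cases in the Riemann-sum bounds need some care.
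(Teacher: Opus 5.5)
Your argument is correct, and its core differs from the paper's. Your first step is essentially the paper's Riemann-sum lemma.

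For the CLT itself, the paper extends Lindeberg's theorem to infinite arrays, truncating to finitely many sites plus one tail variable of variance $<1/n$. It checks a Lyapunov condition of order $2+\delta$ via the power-deviation bound $\bb{E}|G_p-\mu_p|^{2+\delta}\le\mathsf{c}_{1+\delta}(\mu_p+\mu_p^{2+\delta})$ together with the Riemann-sum lemma applied to $f_\mu^{2+\delta}$. It then gets the moments from Rosenthal's inequality and compactness in $\bb{W}_\theta$.

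You use the method of cumulants instead. This exploits that the geometric law has an explicit log-Laplace transform with $c_j(m)=\sum_{n=1}^{j}(n-1)!\,S(j,n)\,m^n$, where $S(j,n)$ are Stirling numbers of the second kind. Two points should be written out:
\begin{enumerate}[(a)]
\item the cumulants of the infinite sum are the sums of the site cumulants, by termwise differentiation of $-\sum_p\ln(1-\mu_p(e^t-1))$ on a small complex disc around $t=0$; this is legitimate because $\overline{\mu}^\kappa<\infty$ and $\sum_p\mu_p<\infty$ for fixed $\kappa$;
\item the method of moments applies because $\cl{N}(0,1)$ is moment-determinate.
\end{enumerate}

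The paper's route is more robust, since it works for any independent array with $2+\delta$ moments. Yours is tied to the geometric structure but gives a sharper result. The paper needs $(2+\delta)\beta<d$ only because it bounds $\sum_p\mu_p^{2+\delta}$ by a convergent Riemann sum of order $\kappa^{-d}$. You observe that $\sum_p\mu_p^j=o(\kappa^{-jd/2})$ suffices, and this holds for every $j\ge 3$ once $\beta<d/2$, because the nearest sites contribute only $\kappa^{-j\beta}$.

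Hence every moment of $X^\kappa$ converges. This contains the stated range $\theta\in[1,d/\beta)$ and shows that the restriction is an artifact of the paper's estimate; the same remark makes the paper's Lyapunov condition hold for every $\delta>0$. Your fallback is therefore unnecessary. The negligibility estimate $\max_p\mathrm{Var}(G_p)/(\sigma^\kappa)^2\to0$ is also not needed separately in the cumulant route.
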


A formal argument for the result is easily obtained by examining the asymptotic behavior of the characteristic function. Indeed, let $\tau_p^2:= \tanh^2(\nu(p))\in(0,1)$. Then, the characteristic function of the random variable $G^\kappa$ takes the form
\begin{align*}
    \mathbb{E}\big[ e^{it G^\kappa} \big] &= \prod_{p \in \Lambda^\kappa} \frac{1- \tau_p^2}{1-e^{it}\tau_p^2} 
    = \exp\bigl(i \vartheta(t) - \Upsilon(t)\bigr),
\end{align*}
with the functions
\begin{align*}
    \vartheta(t) &= \sum_{p \in \Lambda^\kappa} \arctan\left(\frac{\sin(t)\tau_p^2}{1-\cos(t)\tau_p^2}\right) = \sum_{p \in \Lambda^\kappa} \sum_{n\ge 1} \tau_p^{2n}\frac{\sin(nt)}{n},\\
    \Upsilon(t) &= \frac{1}{2}\sum_{p \in \Lambda^\kappa} \ln\left(1 + \frac{4\sin^2(t/2)\tau_p^2}{(1- \tau_p^2)^2}\right) = 2\sum_{p \in \Lambda^\kappa} \sum_{n\ge 1} \tau_p^{2n}\frac{\sin^2(nt/2)}{n},
\end{align*}
where $\Upsilon$ is related to the logarithm of the Poisson kernel and $\vartheta$ is related to its phase.

In the asymptotic regime $t\to 0$, we obtain
\begin{align*}
    \vartheta(t) &= t \sum_{p \in \Lambda^\kappa} \frac{\tau_p^2}{1-\tau_p^2} + O(t^3) = \mu_\Lambda^\kappa t + O(t^3),\\
    \Upsilon(t) &= \frac{1}{2}t^2 \sum_{p \in \Lambda^\kappa} \frac{\tau_p^2}{(1-\tau_p^2)^2} + O(t^4) = \frac{1}{2}(\sigma_\Lambda^\kappa)^2 t^2 + O(t^4).
\end{align*}
Therefore, when considering random variables $X^\kappa = (G^\kappa - \mu^\kappa)/\sigma^\kappa$, we obtain
\[
	\mathbb{E}\big[ e^{it X^\kappa} \big] \longrightarrow e^{-\frac{1}{2}t^2},
\]
where the right-hand side is the characteristic function of the standard normal distribution. 

While the characteristic function approach is intuitive, making the computations above rigorous requires additional effort. In Section~\ref{sec:dense-regime}, we take a different and more elementary approach to proving Theorem~\ref{thm:central-limit} that is based on a generalization of Lindeberg's Central Limit Theorem.

\medskip
As a by-product of Theorem~\ref{thm:central-limit}, we obtain the law of large numbers.

\begin{corollary}[Law of large numbers]
Let the family $\{\kappa^d G^\kappa\}_{\kappa>0}$ of random variables be defined on a common probability space $(\Omega,\cl{F},\bb{P})$. Then,
\[
	\kappa^d G^\kappa \longrightarrow \mu^*:=\int_{\bb{R}^d_0} \sinh^2(\nu(x)) \dd{x} \qquad\text{in $L^\theta(\Omega)$\;\; for any $\theta\in[1,d/\beta)$.}
\]
As a consequence, the sequence of rescaled number operators $\kappa^d N^\kappa$ in $\Psi^\kappa_\nu$ satisfies 
\[
    \kappa^d N^\kappa \to 2 \mu^* = 2 \int_{\bb{R}^d_0} \sinh^2(\nu(x)) \dd{x} \qquad\text{in $L^\theta(\Omega)$\;\; for any $\theta\in[1,d/\beta)$.}
\]
Here we abuse notation by writing $\kappa^d N^\kappa$ for a realization as random variable of the law of $\kappa^d N^\kappa$ under $\Psi^\kappa_\nu$ on a common probability space $(\Omega,\cl{F},\bb{P})$. 
\end{corollary}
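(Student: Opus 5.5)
Write $\kappa^d G^\kappa - \mu^* = \kappa^d\sigma^\kappa X^\kappa + (\kappa^d\mu^\kappa - \mu^*)$ and show both terms vanish in $L^\theta(\Omega)$. Everything is deduced from Theorem~\ref{thm:central-limit}, which gives $\sup_{\kappa}\bb{E}|X^\kappa|^\theta<\infty$ for small $\kappa$ and all $\theta\in[1,d/\beta)$: convergence of $\theta$-moments to those of $\cl{N}(0,1)$ forces uniform boundedness along $\kappa\to0$. The $L^\theta$ statement is a statement about laws of the marginals, so any coupling on $(\Omega,\cl F,\bb P)$ works. By the triangle inequality,
\[
	\|\kappa^d G^\kappa-\mu^*\|_{L^\theta}\le \kappa^d\sigma^\kappa\,\|X^\kappa\|_{L^\theta} + |\kappa^d\mu^\kappa-\mu^*| .
\]
The claim therefore reduces to two deterministic lattice-sum asymptotics as $\kappa\to0$:
\[
	\kappa^d\mu^\kappa=\kappa^d\sum_{p\in\kappa\bb{Z}^d_0}\sinh^2(\nu(p))\to\mu^*,\qquad \kappa^{2d}(\sigma^\kappa)^2=\kappa^{2d}\sum_{p\in\kappa\bb{Z}^d_0}\cosh^2(\nu(p))\sinh^2(\nu(p))\to0 .
\]

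\emph{Mean (the main obstacle).} The first sum is a Riemann sum for $\mu^*$. Write $f=\sinh^2\circ\nu$. By \eqref{ass:dense-conditions} and continuity, $f(x)\le C\min(|x|^{-\beta},|x|^{-\gamma})$ on $\bb{R}^d_0$ (boundedness on compact annuli comes from continuity). Hence $f\in L^1$, since $\beta<d<\gamma$. Split $\kappa\bb{Z}^d_0$ into three regions:
\begin{itemize}
	\item For $|p|\le\delta$: the sum is at most $C\kappa^d\sum_{0<|p|\le\delta}|p|^{-\beta}\lesssim\delta^{d-\beta}$, by comparing the lattice sum with the integral $\int_{|x|\le 2\delta}|x|^{-\beta}\dd x$ (valid because $|x|^{-\beta}$ is comparable on each cell away from the origin cell).
	\item For $|p|\ge R$: similarly the sum is $\lesssim R^{d-\gamma}$.
	\item On the compact annulus $\delta\le|x|\le R$: $f$ is uniformly continuous, so the Riemann sum converges to $\int_{\delta\le|x|\le R}f$.
\end{itemize}
The corresponding integrals over $|x|\le\delta$ and $|x|\ge R$ have the same small bounds. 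Letting $\kappa\to0$, then $\delta\to0$ and $R\to\infty$, gives $\kappa^d\mu^\kappa\to\mu^*$.

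\emph{Variance.} Use $\cosh^2\sinh^2=f+f^2$. The $f$-part gives $\kappa^{2d}\sum f=\kappa^d\cdot(\kappa^d\mu^\kappa)=O(\kappa^d)\to0$. For the $f^2$-part, $f^2\le C\min(|x|^{-2\beta},|x|^{-2\gamma})$, and $2\beta<d$ by assumption, so $f^2\in L^1$. The same Riemann-sum argument gives $\kappa^d\sum f^2\to\int f^2<\infty$, and hence $\kappa^{2d}\sum f^2=O(\kappa^d)\to0$. (This is exactly where $\beta<d/2$ enters; it also shows $\sigma^\kappa\sim\kappa^{-d/2}$, consistent with Theorem~\ref{thm:central-limit}.)

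Combining, $\kappa^d\sigma^\kappa=O(\kappa^{d/2})$ and $\|X^\kappa\|_{L^\theta}$ is bounded, so the first term vanishes, and the mean asymptotics kill the second. This gives $\kappa^dG^\kappa\to\mu^*$ in $L^\theta$ for $\theta\in[1,d/\beta)$. By Theorem~\ref{thm:N+_bogo}, the law of $\sf{N}^\kappa$ under $\Uppsi^\kappa_\nu$ equals that of $2G^\kappa$. Realizing it as $2G^\kappa$ yields $\kappa^d\sf{N}^\kappa\to2\mu^*$ in $L^\theta$.
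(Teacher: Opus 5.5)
Your argument is correct and follows the paper's route: the corollary is presented as a by-product of Theorem~\ref{thm:central-limit}, via the split $\kappa^d G^\kappa-\mu^* = \kappa^d\sigma^\kappa X^\kappa + (\kappa^d\mu^\kappa-\mu^*)$ together with the uniform $L^\theta$ bound on $X^\kappa$ that comes from moment convergence. Your three-region Riemann-sum argument for $\kappa^d\mu^\kappa\to\mu^*$ and $\kappa^d\sum f^2\to\int f^2$ simply reproves Lemma~\ref{lem:dense-mean-variance} (with $\theta=1,2$), which you could have cited instead.
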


\section{Application to Bose--Einstein condensates}\label{sec:BEC}

As mentioned in the introduction, we apply the theory developed above to a concrete model in many-body quantum physics, namely, the mathematical description of Bose--Einstein condensates. In fact, locally interacting quasi-free states in the sense of Definition~\ref{def:bogostate} appear in the physics and mathematics literature as an effective description of quantum fluctuations in Bose--Einstein condensates.

In Theorem \ref{thm: Bogo_Proba_Representation} we show that, in the limit $N \rightarrow \infty$, the statistics of the quantum depletion are described by an infinite sum of independent geometrically distributed random variables. Moreover, we prove that all moments, as well as the moment generating function of the quantum depletion, admit, in the large-particle limit, a representation in terms of the same infinite sum of geometric random variables. These results rely on approximating fluctuations around the condensate by a Bogoliubov state. This Bogoliubov approximation of quantum fluctuations around Bose--Einstein condensates is well established in both the physical and mathematical literature and has been extensively studied in recent years \cite{BBCS2018,BBCS2019,BSS2, HST2022,NT}. 

To the best of our knowledge, however, a mathematically rigorous derivation of this probabilistic description is new. Explicit asymptotic formulas for the first and second moments of the quantum depletion were previously derived in \cite{BBCS2019}. More recently, bounds on the moment-generating function \cite{NR2024, BBR} were refined in \cite{Rademacher}, where an implicit asymptotic characterization was obtained. In Theorem \ref{thm: Bogo_Proba_Representation}, we complement this result by deriving an explicit asymptotic formula expressed solely in terms of the scattering length of the interaction potential. 

Our results illustrate the role of interactions in the probabilistic representation of the quantum depletion, i.e., of the number of particles outside the condensate, for an \textit{interacting} Bose--Einstein condensate at zero temperature on the unit torus. Recently, however, \cite{Deuchert} studied condensate number fluctuations in \textit{non-interacting} Bose--Einstein condensates at positive temperature in the canonical ensemble. There, deviations of the condensate occupation number from its mean are characterized by left- and right-tail bounds. Earlier, in \cite{Deuchert2}, the distribution of the condensate occupation number was analyzed for the grand-canonical Gibbs state of the mean-field Bose gas. 

\subsection{Bose--Einstein condensation}

The description of Bose--Einstein condensates is based on a gas of $N\in\bb{N}$ bosons on the three-dimensional unit torus $\bb{T}^3$ described on the symmetric Hilbert space $L_{\sym}^2(\bb{T}^{3N})$ by the Hamiltonian
\begin{align*}
H_N = \sum_{i=1}^N (-\Delta_i) + \sum_{i<j} v_N(x_i-x_j),\qquad v_N(x) := N^2 v(Nx).
\end{align*}
Here, the scaling of the two-body interaction potential $N^2 v( N \cdot )$ in $N$ is chosen so that the scattering length is of order $N^{-1}$ in the limit $N \to \infty$. More precisely, let $f$ be the solution of the zero-energy scattering equation
\[
	\bigg( -\Delta + \frac{1}{2} v \bigg) f = 0,\qquad \text{with\;\;$f(x) \to 1$\; as\; $|x| \to \infty$}.
\]
Then the associated scattering length is defined by
\[
\fk{a} := \frac{1}{8\pi}\int v(x)f(x)\,dx,
\]
equivalently through the asymptotics $f(x) = 1 - \fk{a}/|x| + o(|x|^{-1})$ as $|x|\to\infty$. For the scaled interaction potential $v_N$, the corresponding scattering length is precisely $\fk{a}/N$. This scaling is known as the Gross--Pitaevskii regime and describes a system with strong, short-range particle interactions.

It is mathematically convenient to describe the $N$-particle system on the bosonic Fock space with bosonic creation and annihilation operators. The Fock space formalism is particularly useful for describing fluctuations around Bose--Einstein condensates, since it naturally allows for fluctuations in particle number. Moreover, it is well adapted to Fourier analysis and therefore provides a natural framework for passing from position space to momentum space. Indeed, on the torus $\mathbb{T}^3$, the system is translation invariant, and the Hamiltonian commutes with the translation operator. It is therefore natural to work in the eigenbasis of translations, namely the plane waves $\varphi_k(x)=e^{ik\cdot x}$ with $k\in 2\pi \mathbb Z^3,$ which corresponds precisely to expanding the system into Fourier modes. Expressed in terms of the momentum-space creation and annihilation operators, the Hamiltonian takes the form 
\[
    H_N = \sum_{p \in 2 \pi \mathbb{Z}^3} p^2 \sf{a}_p^*\sf{a}_p + \frac{1}{N} \sum_{p, q, \ell \in 2 \pi \mathbb{Z}^3} \widehat{v} (\ell/N) \sf{a}_{p + \ell}^* \sf{a}_{q-\ell}^* \sf{a}_p \sf{a}_q .
\]

It is known that the ground state vector $\uppsi_N\in L_{\sym}^2(\bb{T}^{3N})$ of the Hamiltonian $H_N$ exhibits Bose--Einstein condensation with respect to the condensate wave function $\varphi \equiv 1$ on $\bb{T}^3$, corresponding in momentum space to the zero mode $p=0$. Morally, this means that the expected number of particles occupying the zero mode is asymptotically equal to $N$, while only a negligible fraction occupies orthogonal modes. Mathematically, Bose--Einstein condensation is formulated in terms of the quantum depletion operator, defined by
\[ 
 \sf{N}_+ := \sum_{p \in 2\pi \mathbbZ^3_0} \sf{a}_p^*\sf{a}_p,
\]
which counts the number of particles outside the condensate.

As first proved in \cite{LiebSeiringer2002}, the ground state $\uppsi_N$ satisfies Bose--Einstein condensation in the sense that 
 \begin{align}
 \label{eq:BEC}
  \frac{1}{N}\,\mathbb{E}_{\uppsi_N} [ \mathsf{N}_+ ]  \to 0 \qquad \text{as $N\rightarrow \infty$.}
 \end{align}
Since then, the convergence rate in \eqref{eq:BEC} has been widely studied and optimal error estimates have been established \cite{BBCS2018,BBCS2019,H2021,HST2022}. Later, these results were also extended from Bose gases on the torus $\mathbb{T}^3$ to Bose gases in $\mathbb{R}^3$ with trapping potentials \cite{BSS1,BSS2,LS2006,NNRT,NRS,NT}.

\medskip 
These results rely on an effective description of the quantum fluctuations around the condensate in terms of quasi-free states in position space that, in the setting of the torus, preserve translation invariance. Passing to momentum space, these translation-invariant quasi-free states naturally give rise to locally interacting quasi-free states in the sense of Definition \ref{def:bogostate}. Remarkably, these locally interacting quasi-free states admit an effective description on a lattice whose spacing is determined by the scattering length of the interaction potential.

\subsection{Asymptotic probabilistic representation of the depletion operator}

The connection of Bose--Einstein condensation to the framework of locally interacting quasi-free states developed above rests on a precise asymptotic result for the depletion operator in the large-particle limit. To state this result, we define the operator on $\cl{F}_{\sym}(\ell^2(2 \pi \bb{Z}^3))$
\[
    \sf{K}_\fk{a} := \frac{1}{2} \sum_{p \in 2 \pi \bb{Z}^3 \setminus \{0 \}} \nu_p^\fk{a} (\sf{a}_{p}^* \sf{a}_{-p}^* - \sf{a}_p \sf{a}_{-p}), \quad \text{where} \quad \nu_p^\fk{a} = \frac{1}{4} \ln \left ( \frac{|p|^2}{|p|^2 + 16 \pi \fk{a}} \right ). 
\]
We also define the corresponding quasi-free vector $\Uppsi_\fk{a} = e^{\sf{K}_\fk{a}} \Upomega$. It turns out that this vector precisely captures the fluctuations of $\uppsi_N$ away from the condensate. While this can be stated in various ways, we shall use the following statement on the moment generating function proved by the second named author in \cite{Rademacher}\footnote{We remark that the published version \cite{Rademacher} contains a minor typographical error in the definition of the quasi-free state. More precisely, in \cite[Eqs~(3.54),(4.2)]{Rademacher}, the factor $1/2$ in the operator $K_\nu$ is missing.}. 

\begin{proposition}[Lemma 4.1 of \cite{Rademacher}]  \label{prop:SimoneAsymptotic}
    There exists $\lambda_0 > 0$ such that for all $\lambda \in \bb{C}$ with $|\lambda| \leq \lambda_0$, and $k \in \bb{N}$, we have
    \[
        \bb{E}_{\uppsi_N}[(\sf{N}_+)^k e^{\lambda \sf{N}_+}] = \bb{E}_{\Uppsi_{\fk{a}}}[(\sf{N}_+)^k e^{\lambda \sf{N}_+}] + \cl{O}(N^{-1/4}).
    \]
\end{proposition}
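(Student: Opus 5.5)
Since this proposition is imported from \cite{Rademacher}, the plan is to retrace that argument rather than build a new one; we only indicate how the factor $1/2$ in $\sf{K}_\fk{a}$ enters. The strategy is to factor out the condensate and compare the ground state with a quasi-free vector.

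\textbf{Step 1: excitation vector and exponential bound.} First pass from $\uppsi_N$ to its excitation vector. Using the unitary excitation map $U_N$ of Lewin--Nam--Serfaty--Solovej, which removes the zero mode, set $\xi_N := U_N\uppsi_N\in\cl{F}_+^{\le N}$. Then $U_N\sf{N}_+U_N^* = \sf{N}_+$, so
\[
\bb{E}_{\uppsi_N}[(\sf{N}_+)^k e^{\lambda\sf{N}_+}] = \langle \xi_N,(\sf{N}_+)^k e^{\lambda \sf{N}_+}\xi_N\rangle .
\]
The first analytic input is an a priori exponential bound $\langle\xi_N, e^{\lambda_1\sf{N}_+}\xi_N\rangle\le C$ for some $\lambda_1>0$, uniformly in $N$, as in \cite{NR2024,BBR}. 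This bound fixes $\lambda_0<\lambda_1$. Polynomial factors $(\sf{N}_+)^k$ are then absorbed by $e^{(\lambda_1-\lambda_0)\sf{N}_+}$, which makes the expansion below legitimate uniformly for complex $|\lambda|\le\lambda_0$.

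\textbf{Step 2: approximation by the quasi-free vector.} Next, use the renormalization scheme of \cite{BBCS2018,BBCS2019}. One writes $\xi_N \approx e^{B(\eta)}\,\widetilde\xi_N$, where $B(\eta)$ is a (generalized) Bogoliubov transformation whose kernel is built from the solution $f$ of the Neumann scattering problem. The remaining vector $\widetilde\xi_N$ is close to the ground state of a quadratic Bogoliubov Hamiltonian, which is diagonalized by a second Bogoliubov transformation $e^{B(\tau)}$.

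Composing the two transformations, the total kernel converges to $\nu^\fk{a}_p=\tfrac14\ln\bigl(|p|^2/(|p|^2+16\pi\fk{a})\bigr)$; this is where the scattering length enters, via $\widehat{v f}(0)=8\pi\fk{a}$. The pair $\{p,-p\}$ is counted twice in the sum over $2\pi\bb{Z}^3_0$, which accounts for the factor $1/2$ in $\sf{K}_\fk{a}$.

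One then shows that $\|\xi_N - e^{\sf{K}_\fk{a}}\Upomega\|$ is small in a norm weighted by $e^{\lambda_1\sf{N}_+/2}$. The relevant error sources are:
\begin{itemize}
\item the cubic and quartic remainder terms;
\item the difference between $\eta$ and its $N\to\infty$ limit, which is controlled via the decay $|\eta_p|\lesssim |p|^{-2}$ and produces the rate;
\item the cutoff $\sf{N}_+\le N$.
\end{itemize}

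\textbf{Step 3: transfer to the moments.} Finally, write the difference of expectations as $\langle\xi_N-\Uppsi_\fk{a}, F(\xi_N+\Uppsi_\fk{a})\rangle$ with $F=(\sf{N}_+)^k e^{\lambda\sf{N}_+}$, a real-part symmetrization when $\lambda$ is complex. Cauchy--Schwarz then reduces everything to the weighted bound of Step 2 together with the exponential moments of both vectors. For $\Uppsi_\fk{a}$, these exponential moments are explicit by Theorem~\ref{thm:N+_bogo} and finite for small $\lambda$.

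\textbf{Main obstacle.} The hard part is Step 2. Norm estimates for Bogoliubov-type approximations are standard, but here they must hold in exponentially weighted norms: one must show that $e^{\lambda\sf{N}_+}$ conjugated by $e^{B(\eta)}$ and by the cubic renormalization stays bounded by $e^{C\lambda\sf{N}_+}$. I would attempt this with Gronwall arguments along the interpolating flow $s\mapsto e^{sB}$, using $\|\eta\|_{\ell^2}$ small relative to $\lambda_1$. The $N^{-1/4}$ rate is what survives after balancing the momentum cutoff in the cubic renormalization against the resulting errors.
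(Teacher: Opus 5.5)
The paper does not prove this statement. It is quoted from \cite{Rademacher}, with only the footnote on the missing factor $\tfrac12$, so deferring to that source and to the machinery of \cite{BBCS2019} is exactly what the paper does, and your outline is the natural one.

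Within your outline, the step you single out as the main obstacle is not needed. For any vectors $\xi,\phi$ and $F=\sf{N}_+^k e^{\lambda\sf{N}_+}$,
\[
\langle\xi,F\xi\rangle-\langle\phi,F\phi\rangle=\langle\xi-\phi,F\xi\rangle+\langle F^*\phi,\xi-\phi\rangle,\qquad \|F\xi\|^2=\langle\xi,\sf{N}_+^{2k}e^{2\,\mathrm{Re}(\lambda)\,\sf{N}_+}\xi\rangle .
\]
Take $\xi=\xi_N$ and $\phi=\Uppsi_{\fk{a}}$, with $2\lambda_0<\lambda_1$ and $2\lambda_0<\tfrac12\ln(1+1/\overline{\mu}_{\fk{a}})$. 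Then you only need:
\begin{enumerate}[(a)]
\item an \emph{unweighted} bound on $\|\xi_N-\Uppsi_{\fk{a}}\|$;
\item the a priori bound $\langle\xi_N,e^{\lambda_1\sf{N}_+}\xi_N\rangle\le C$;
\item the explicit exponential moments of $\Uppsi_{\fk{a}}$ from Theorem~\ref{thm:N+_bogo}.
\end{enumerate}
No symmetrization is required. No conjugation of $e^{\lambda\sf{N}_+}$ by $e^{B(\eta)}$, $e^{A}$ or $e^{B(\tau)}$ enters beyond what the imported a priori bound already contains. The weighted version would only buy a larger $\lambda_0$, which the statement does not ask for. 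In any case, a weighted bound follows from the unweighted one, with a worse rate, by log-convexity of $s\mapsto\langle\psi,e^{s\sf{N}_+}\psi\rangle$ for $\psi=\xi_N-\Uppsi_{\fk{a}}$. Also, smallness of $\|\eta\|$ says nothing about $e^{B(\tau)}$, whose kernel is of order one, so your Gronwall plan would not cover the full transformation.

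What your sketch does leave open is the rate. The transfer is linear in $\|\xi_N-\Uppsi_{\fk{a}}\|$, so this norm must be $O(N^{-1/4})$, not just small, and that is not automatic. Suppose you only have an $O(N^{-1/4})$ quadratic-form bound on the remainder of the renormalized Hamiltonian. Combined with the energy comparison and the spectral gap of the diagonalized quadratic part, it gives $\|\chi-\Upomega\|^2=O(N^{-1/4})$ for the renormalized ground state $\chi$ (suitably phased). That is a norm error $O(N^{-1/8})$, so your outline as written proves the statement with $O(N^{-1/8})$ in place of $O(N^{-1/4})$.

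Reaching $N^{-1/4}$ requires using the remainder more finely, for example through its matrix elements against $\Upomega$. That is precisely the content being imported from \cite{Rademacher}. Relatedly, the rate is set by the cubic renormalization, as you say at the end. I would expect the difference between $\eta$ and its limit, which you also name as a source of the rate, to contribute only at lower order.
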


In other words, the moment generating function of the quantum depletion operator in the state $\uppsi_N$ converges to the one in the state $\Uppsi_{\fk{a}}$. In \cite{Rademacher}, this convergence result was supplemented with an integral representation of the limiting characteristic function. Using our results on locally interacting quasi-free states developed in the previous Section, we will improve on the integral formula by deriving a probabilistic representation of the law of $\sf{N}_+$ in the quasi-free state $\Uppsi_{\fk{a}}$. 

\medskip
The key observation is that only pairs $\{p,-p\}$ couple in $\sf{K}_\fk{a}$, which suggests indexing the modes by such pairs. Accordingly, let $\sim$ be the equivalence relation on $\bb{R}^3_0$: for all $x,y \in \bb{R}^3_0$
\[
    x \sim y \Longleftrightarrow x = \pm y,
\]
and observe that
\[
    \sf{K}_\fk{a} = \sum_{p \in 2 \pi \bb{Z}^3_0 / \sim} \nu_p^{\fk{a}} (\sf{a}_p^* \sf{a}_{-p}^* - \sf{a}_p \sf{a}_{-p}),
\]
the factor $\tfrac12$ having been absorbed by summing over pairs rather than over all modes.

The weights $\nu^\fk{a}$ then prescribe a natural rescaling of the lattice. Indeed, writing
\[
    \nu(x) := \frac{1}{4} \ln \left ( \frac{|x|^2}{|x|^2 + 1} \right ),
\]
we have $\nu_p^\fk{a} = \nu\bigl(p / 4 \sqrt{\pi \fk{a}}\bigr)$ for every $p\in 2\pi\bb{Z}^3_0$, so that the substitution $x = p/4\sqrt{\pi\fk{a}}$ carries $2\pi\bb{Z}^3_0$ onto $\kappa\bb{Z}^3_0$ with
\begin{equation}\label{eq:kappa-a}
    \kappa := \frac{2\pi}{4\sqrt{\pi\fk{a}}} = \sqrt{\frac{\pi}{4\fk{a}}}\,.
\end{equation}
Let us introduce the quotient of the punctured lattice with spacing $\kappa$, $\Lambda^\kappa := \kappa \bb{Z}^3_0/{\sim}$ and, for each $p\in\Lambda^\kappa$, fix once and for all a representative of the class $p$. In the following, we shall write $A_\Lambda^\kappa$ instead of $A_{\Lambda^\kappa}$ for any object whose definition depends on $\Lambda^\kappa$. 

\begin{proposition}[Depletion as a locally interacting quasi-free observable]\label{prop:depletion-representation}
    Let $\fk{a}>0$, let $\kappa$ be given by \eqref{eq:kappa-a}, and let $\Uppsi^\kappa_\nu$ be the locally interacting quasi-free vector on $\fk{h}_{\Lambda}^\kappa$ associated with $\nu_p = \nu(p)$. Then the law of $\sf{N}_+$ in the state $\Uppsi_\fk{a}$ coincides with the law of $\sf{N}_\Lambda^\kappa$ in the state $\Uppsi_\nu^\kappa$. In particular,
    \[
        \bb{E}_{\Uppsi_\fk{a}}\bigl[\sf{N}_+^k e^{\lambda \sf{N}_+}\bigr] = \bb{E}_{\Uppsi_\nu^\kappa}\bigl[(\sf{N}_\Lambda^\kappa)^k e^{\lambda \sf{N}_\Lambda^\kappa}\bigr]
    \]
    for every $k\in \bb{N}_0$ and every $\lambda\in\bb{C}$ for which either side is finite.
\end{proposition}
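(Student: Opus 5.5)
The plan is to build an explicit unitary identification between the Fock space $\cl{F}_{\sym}(\ell^2(2\pi\bb{Z}^3_0))$ and $\fk{h}_\Lambda^\kappa$, and then show that it intertwines the relevant vectors and operators. We only need $2\pi\bb{Z}^3_0$ and not the zero mode, since $\sf{K}_\fk{a}$ and $\sf{N}_+$ act trivially on the zero mode. The factor $\cl{F}_{\sym}(\ell^2(\{0\}))$ therefore splits off tensorially, and on it the vector is the vacuum.

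\textbf{Step 1: the identification.} For each class $p\in\Lambda^\kappa$ we use its fixed representative $\hat p\in\kappa\bb{Z}^3_0$. Let $q = 4\sqrt{\pi\fk{a}}\,\hat p\in 2\pi\bb{Z}^3_0$ be the corresponding momentum; by \eqref{eq:kappa-a}, $x\mapsto 4\sqrt{\pi\fk{a}}\,x$ maps $\kappa\bb{Z}^3_0$ bijectively onto $2\pi\bb{Z}^3_0$. Define the one-particle unitary
\[
U:\ell^2(2\pi\bb{Z}^3_0)\to\ell^2(\Lambda^\kappa\times\{+,-\}),\qquad \delta_{q}\mapsto\delta_{(p,+)},\quad \delta_{-q}\mapsto\delta_{(p,-)} .
\]
This is a bijection of orthonormal bases: each $\pm q$ pair in $2\pi\bb{Z}^3_0$ corresponds to exactly one class, and $q\neq -q$ because $0$ is excluded. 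Its second quantization $\Gamma(U)$ is a unitary between the symmetric Fock spaces that maps the vacuum to the vacuum and satisfies
\[
\Gamma(U)\,\sf{a}_{\pm q}\,\Gamma(U)^* = \sf{a}_{(p,\pm)},\qquad \Gamma(U)\,\sf{a}^*_{\pm q}\,\Gamma(U)^* = \sf{a}^*_{(p,\pm)} .
\]

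\textbf{Step 2: transporting the operators.} By the rewriting of $\sf{K}_\fk{a}$ as a sum over pairs given just before the statement, and since $\nu^\fk{a}_q=\nu^\fk{a}_{-q}=\nu(\hat p)$,
\[
\Gamma(U)\,\sf{K}_\fk{a}\,\Gamma(U)^* = \sum_{p\in\Lambda^\kappa}\nu(p)\bigl(\sf{a}^*_{(p,+)}\sf{a}^*_{(p,-)}-\sf{a}_{(p,+)}\sf{a}_{(p,-)}\bigr)=\sf{A}_\nu .
\]
Here $\nu(p)$ is well defined on classes because $\nu$ is radial. Similarly,
\[
\Gamma(U)\,\sf{N}_+\,\Gamma(U)^* = \sum_p\bigl(\sf{a}^*_{(p,+)}\sf{a}_{(p,+)}+\sf{a}^*_{(p,-)}\sf{a}_{(p,-)}\bigr)=\sf{N}_\Lambda^\kappa ,
\]
since $\sf{N}_+=\mathrm{d}\Gamma(\mathbbm{1})$ and $\mathrm{d}\Gamma$ commutes with unitary conjugation. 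For the anti-self-adjoint generator we use $\Gamma(U)e^{\sf{K}_\fk{a}}\Gamma(U)^*=e^{\Gamma(U)\sf{K}_\fk{a}\Gamma(U)^*}$. Together with $\Gamma(U)\Upomega=\Upomega$, this gives $\Gamma(U)\Uppsi_\fk{a}=\Uppsi^\kappa_\nu$. We also check that $\nu(\cdot)\in\ell^2(\Lambda^\kappa)$: $|\nu(x)|\sim \tfrac14|x|^{-2}$ at infinity, so the squares decay like $|x|^{-4}$, which is summable in $d=3$.

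\textbf{Step 3: conclusion.} Unitary equivalence transports spectral projections, $\mathbbm{1}_E(\Gamma(U)\sf{N}_+\Gamma(U)^*)=\Gamma(U)\mathbbm{1}_E(\sf{N}_+)\Gamma(U)^*$. Hence $\bb{P}_{\Uppsi_\fk{a}}(\sf{N}_+\in E)=\bb{P}_{\Uppsi_\nu^\kappa}(\sf{N}_\Lambda^\kappa\in E)$ for every Borel set $E$. The moment identity then follows by integrating $x^k e^{\lambda x}$ against the common law; finiteness of either side is a property of the law alone.

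\textbf{Main obstacle.} The only delicate point is justifying the operator identities with unbounded, infinite-sum operators. This means checking that $\sf{K}_\fk{a}$ and $\sf{A}_\nu$ are essentially anti-self-adjoint on finite-particle vectors, and that conjugation by $\Gamma(U)$ maps one core to the other. I would handle this by working on the dense space of finite-particle vectors with finitely many occupied modes, which $\Gamma(U)$ preserves. On this space the identities hold termwise. Uniqueness of closures, or of the unitary groups via Stone's theorem, then extends them to the exponentials.
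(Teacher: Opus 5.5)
Your proposal is correct and follows essentially the same route as the paper. You split off the zero mode as a vacuum tensor factor and identify each pair $\{q,-q\}$ with the spin space at one site of $\Lambda^\kappa$ via $q = \frac{2\pi}{\kappa}\hat p$; the second-quantized unitary then maps $\mathsf{K}_{\mathfrak{a}}$, $\mathsf{N}_+$ and the vacuum onto $\mathsf{A}_\nu$, $\mathsf{N}_\Lambda^\kappa$ and the vacuum, so equality of the laws follows by unitary equivalence. Your extra care about cores, about the exponential of the conjugated generator, and the check that $\nu\in\ell^2(\Lambda^\kappa)$ go slightly beyond the paper, which leaves these points implicit.
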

\begin{proof}
    Splitting the one-particle space according to the zero mode and the pairs $\{p,-p\}$, and identifying each such pair with the spin space $\bb{S}$ by $\ket{\pm}\leftrightarrow \pm\tfrac{2\pi}{\kappa}p$, gives the orthogonal decomposition
    \[
        \ell^2(2\pi\bb{Z}^3) \;\cong\; \bb{C}\oplus\bigoplus_{p\in\Lambda^\kappa} \bb{S}.
    \]
    Second quantization turns this direct sum into a tensor product, so that $\cl{F}_{\sym}(\ell^2(2 \pi \bb{Z}^3)) \cong \cl{F}_{\sym}(\bb{C}) \otimes \fk{h}_{\Lambda}^\kappa$, under an isometry determined by
    \[
        \sf{a}_{\pm \frac{2 \pi}{\kappa} p} \longleftrightarrow \sf{a}_{(p,\pm)}, \quad \text{and} \quad \sf{a}_0 \longleftrightarrow \sf{a}_{(0)}.
    \]
    Under this isometry, $\sf{K}_{\fk{a}}$ is mapped to the operator $\mathrm{I}_{\cl{F}_{\sym}(\bb{C})} {\otimes} \sf{K}_\nu^\kappa$ with
    \[
        \sf{K}_\nu^\kappa = \sum_{p \in \Lambda^\kappa} \nu(p)(\sf{a}^*_{(p,+)} \sf{a}^*_{(p,-)} - \sf{a}_{(p,+)} \sf{a}_{(p,-)}),
    \]
    since $\nu^\fk{a}_{(2\pi/\kappa)p} = \nu(p)$ by construction, and consequently $\Uppsi_\fk{a} \cong \Upomega_0 {\otimes} \Uppsi_\nu^\kappa$. The number operator decomposes in the same way, $\sf{N}_+$ being mapped to $\mathrm{I}_{\cl{F}_{\sym}(\bb{C})} {\otimes} \sf{N}_\Lambda^\kappa$. Since the vacuum factor $\Upomega_0$ carries no particles, both assertions follow.
\end{proof}

Combining this with the probabilistic representation provided by Theorem \ref{thm:N+_bogo}, and the convergence result of \cite{Rademacher} recalled in Proposition \ref{prop:SimoneAsymptotic}, we derive an asymptotic probabilistic representation of the law of the quantum depletion operator in the large-particle limit. 

\begin{theorem} \label{thm: Bogo_Proba_Representation}
    Let $\uppsi_N$ be the ground state of the Hamiltonian $H_N$ with scattering length $\fk{a}$, and let $\kappa$ be given by \eqref{eq:kappa-a}. Then, there exists $\lambda_0 > 0$ such that for all $\lambda \in \bb{C}$ with $|\lambda| \leq \lambda_0$ and $k \in \bb{N}$,
    \[
        \bb{E}_{\uppsi_N}[\sf{N}_+^k e^{\lambda \sf{N}_+}] = \bb{E}_{\Uppsi_\nu^\kappa}[(\sf{N}_\Lambda^\kappa)^k e^{\lambda \sf{N}_\Lambda^\kappa}] + \cl{O}_{\kappa,k}(N^{-1/4}).
    \]
    Consequently, the statistics of the quantum depletion operator $\sf{N}_+$ in the large-particle limit are approximated by the statistics of the random variable 
    \[
        2 G^\kappa_\Lambda = 2 \sum_{p \in \Lambda^\kappa} G_p,
    \]
    where $(G_p)_{p \in \Lambda^\kappa}$ are independent with $G_p \sim \mathrm{Geo}(\sech^2(\nu(p)))$.
\end{theorem}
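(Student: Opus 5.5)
The proof chains three results already established above: Proposition~\ref{prop:SimoneAsymptotic}, Proposition~\ref{prop:depletion-representation}, and Theorem~\ref{thm:N+_bogo}.

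\emph{Step 1: the expansion.} Fix $\lambda_0>0$ as in Proposition~\ref{prop:SimoneAsymptotic}. For $|\lambda|\le\lambda_0$ and $k\in\bb{N}$ that proposition gives
\[
\bb{E}_{\uppsi_N}[\sf{N}_+^k e^{\lambda\sf{N}_+}]=\bb{E}_{\Uppsi_\fk{a}}[\sf{N}_+^k e^{\lambda\sf{N}_+}]+\cl{O}(N^{-1/4}).
\]
Since $\fk{a}$ fixes $\kappa$ through \eqref{eq:kappa-a}, the implicit constant may depend on $\kappa$ and $k$, which is exactly what the subscript in $\cl{O}_{\kappa,k}$ records. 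Proposition~\ref{prop:depletion-representation} then replaces the first term on the right by $\bb{E}_{\Uppsi_\nu^\kappa}[(\sf{N}_\Lambda^\kappa)^k e^{\lambda \sf{N}_\Lambda^\kappa}]$, provided this quantity is finite.

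\emph{Step 2: finiteness.} This is the only point needing care. By Theorem~\ref{thm:N+_bogo}, $\sf{N}^\kappa_\Lambda$ under $\Uppsi^\kappa_\nu$ has the law of $2G^\kappa_\Lambda$. Let $\mu_p=\sinh^2(\nu(p))$. Then
\[
\bb{E}[e^{2\Re(\lambda)G^\kappa_\Lambda}]=\prod_{p}\bigl(1-\mu_p(e^{2\Re\lambda}-1)\bigr)^{-1}.
\]
This product is finite when $\Re\lambda<\tfrac12\ln(1+1/\overline{\mu}^\kappa_\Lambda)$ and $\sum_p\mu_p<\infty$.

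For the specific $\nu(x)=\tfrac14\ln\bigl(|x|^2/(|x|^2+1)\bigr)$ we have $\nu\le 0$ and $\sinh^2(\nu(x))\sim 1/(16|x|^4)$ at infinity, so $\sum_p\mu_p$ converges in $d=3$. Near the origin $\sinh^2\nu$ blows up, but the lattice $\kappa\bb{Z}^3_0$ avoids $0$, so $\overline{\mu}^\kappa_\Lambda<\infty$. The polynomial factor $|2G|^k$ is absorbed by any slightly larger exponential. The constant $\lambda_0$ of Proposition~\ref{prop:SimoneAsymptotic} is presumably already small enough for these moments to be finite, since finiteness of the $\Uppsi_\fk{a}$ side is implicit in that proposition. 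If it is not, I would shrink $\lambda_0$ to lie below $\tfrac12\ln(1+1/\overline{\mu}^\kappa_\Lambda)$, which depends only on $\kappa$.

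\emph{Step 3: the probabilistic representation.} Theorem~\ref{thm:N+_bogo} again gives
\[
\bb{E}_{\Uppsi_\nu^\kappa}[(\sf{N}_\Lambda^\kappa)^k e^{\lambda\sf{N}_\Lambda^\kappa}]=\bb{E}[(2G^\kappa_\Lambda)^k e^{2\lambda G^\kappa_\Lambda}],
\]
because equality of laws transfers to expectations of the Borel function $x\mapsto x^ke^{\lambda x}$. Here the $G_p\sim\mathrm{Geo}(\sech^2\nu(p))$ are independent, and the sum runs over the chosen representatives of $\Lambda^\kappa$. This yields the "consequently" clause: all these mixed moments and exponential moments of $\sf{N}_+$ under $\uppsi_N$ converge to those of $2G^\kappa_\Lambda$.

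If one also wants convergence in law, the limiting variable has a moment generating function that is finite near $0$. Convergence of this generating function on a real neighbourhood of $0$ then implies convergence in distribution, by Curtiss' theorem.

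The only genuine obstacle is the bookkeeping of finiteness and the uniformity of constants in Step~2. Everything else is a direct composition of the results above.
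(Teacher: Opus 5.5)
Your proposal is correct and follows the paper's argument: the paper obtains the theorem by composing Proposition~\ref{prop:SimoneAsymptotic}, Proposition~\ref{prop:depletion-representation}, and Theorem~\ref{thm:N+_bogo}, exactly as you do. Your finiteness check in Step~2 is a harmless addition that the paper leaves implicit; since the two laws coincide, the two expectations in Proposition~\ref{prop:depletion-representation} are finite or infinite together, so $\lambda_0$ never needs to be shrunk.
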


\subsection{Asymptotics of the quantum depletion}

In the following, we study two asymptotic regimes of the scattering length: the regime $\fk{a} \to 0$ corresponds to a weak-coupling limit in which interaction effects become asymptotically negligible, and the Bose gas approaches an ideal Bose gas. 
The regime of $\fk a \to \infty$ corresponds to a Bose gas dominated by correlation effects. We also provide tail estimates in the intermediate regime. 

Notice that the scattering length is inversely proportional to the square of the lattice spacing, and as a consequence:
\begin{itemize}
    \item The small scattering regime $\fk{a} \to 0$ corresponds to the sparse limit $\kappa \to +\infty$.
    \item The large scattering regime $\fk{a} \to +\infty$ corresponds to the dense limit $\kappa \to 0$. 
\end{itemize}

Moreover, the weight function $\nu$ satisfies property \eqref{ass:dense-conditions}. Indeed, setting
\begin{equation*}
    \mu^\mathsf{QD}(x) := \sinh^2(\nu(x)) =  \frac{\left ( \sqrt{\frac{|x|^2}{1+|x|^2}} - 1 \right )^2}{4 \sqrt{\frac{|x|^2}{1+|x|^2}}},\qquad x\in \bb{R}^3_0,
\end{equation*}
then elementary computations yield
\begin{equation*}
    \limsup_{|x|\to +\infty} |x|^\gamma\mu^\mathsf{QD}(x) = \begin{cases} 0 & \text{for $\gamma \in[0,4)$}, \\
    \tfrac{1}{16} & \text{for $\gamma=4$}, \\
    +\infty & \text{for $\gamma>4$}.
    \end{cases},\quad
    \limsup_{|x|\to 0} |x|^\beta\mu^\mathsf{QD}(x) = \begin{cases} +\infty & \text{for $\beta \in (0,1)$}, \\
    \tfrac{1}{4} & \text{for $\beta=1$}, \\
    0 & \text{for $\beta>1$}.
    \end{cases}
\end{equation*}
In particular, property \eqref{ass:dense-conditions} is satisfied with the non-trivial parameters $\gamma=4>3$ and $\beta=1<\frac{3}{2}$. Moreover, the limit superior at infinity is in fact attained as a genuine limit, so that $\nu$ also satisfies condition \eqref{eq:sparse-condition} with the same exponent $\gamma=4>3$ and constant $\fk{c}_\nu = \tfrac{1}{16}$. Both regimes of Section~\ref{subsec:tail} are therefore available.

\medskip
To apply the results of Section~\ref{subsec:tail}, we need to replace the sum over the lattice $\Lambda^\kappa$ by a sum over $\kappa \bb{Z}^3_0$. To do so, we consider for each $p \in \Lambda^\kappa$ two independent copies $G_p^+,G_p^-$ of $G_p$, and we let $G^\kappa_+,G^\kappa_-$ be the corresponding sums. We also consider for $p \in \kappa \bb{Z}^3_0$ independent random variables $G^\rm{sum}_p$ with $G^\rm{sum}_p \sim \mathrm{Geo}(\sech^2(\nu(p)))$. It is not hard to see that $G^\kappa_+ + G^\kappa_-$ has the same law as $G^\kappa_\rm{sum} = \sum_{p \in \kappa \bb{Z}^3_0} G^\rm{sum}_p$. Owing to the results of Section~\ref{subsec:tail}, $G^\kappa_\rm{sum}$ obeys a law of small numbers and a large deviation principle in the regime $\kappa \to +\infty$, and satisfies a central limit theorem in the regime $\kappa \to 0$. 

\subsubsection*{\bf\em Small scattering limit, $\fk{a} \to 0$}

In the limit $\fk a \to 0 $, corresponding to asymptotically weak interactions, the quantum depletion becomes asymptotically negligible, and the system approaches the behavior of an almost ideal gas. Theorem \ref{thm: BEC-small a} captures this regime by establishing a law of small numbers for the quantum depletion, thereby characterizing the statistics of rare events away from complete condensation.

\begin{theorem}[Law of small numbers for quantum depletion]
\label{thm: BEC-small a}
    Set $\fk{c}_\mathsf{QD} := \zeta_3(4)/(2 \pi^2)$. Then, as $\fk{a} \to 0$, the quantum depletion $\sf{N}_+$ in the quasi-free state $\Uppsi_\fk{a}$ satisfies:
    \begin{enumerate}[(i)]
        \item \emph{(Scaled log-Laplace transform)} For every $t \in \bb{R}$,
        \begin{equation}\label{eq:bec-laplace-limit}
            \lim_{\fk{a} \to 0} \frac{1}{\fk{a}^2} \ln \bb{E}_{\Uppsi_\fk{a}}\bigl[e^{t \sf{N}_+}\bigr] = \fk{c}_\mathsf{QD}\bigl(e^{2t} - 1\bigr).
        \end{equation}
        \item \emph{(Two-point law)} For every $k \in \bb{N}_0$,
        \begin{equation}\label{eq:bec-two-point}
            \bb{P}_{\Uppsi_\fk{a}}\bigl(\sf{N}_+ = 2 k\bigr) = o(\fk{a}^2) +
            \begin{cases}
                1 - \fk{c}_\mathsf{QD}\, \fk{a}^2 & \text{for $k = 0$}, \\
                \fk{c}_\mathsf{QD}\, \fk{a}^2 & \text{for $k = 1$}, \\
                0 & \text{otherwise}.
            \end{cases}
        \end{equation}
        \item \emph{(Exponential moments)} For every $\theta > 0$ and every $\lambda \in \bb{C}$,
        \begin{equation}\label{eq:bec-moments}
            \bb{E}_{\Uppsi_\fk{a}}\bigl[\sf{N}_+^{\theta} e^{\lambda \sf{N}_+}\bigr] = \fk{c}_\mathsf{QD}\, 2^{\theta} e^{2 \lambda} \fk{a}^2 + o(\fk{a}^2).
        \end{equation}
    \end{enumerate}
\end{theorem}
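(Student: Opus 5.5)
The plan is to transfer everything to the quotient lattice $\Lambda^\kappa=\kappa\bb{Z}^3_0/{\sim}$ and then to the full punctured lattice, where Theorem~\ref{thm:poisson-sparse} applies. By Proposition~\ref{prop:depletion-representation} and Theorem~\ref{thm:N+_bogo}, the law of $\sf{N}_+$ in $\Uppsi_\fk{a}$ is the law of $2G^\kappa_\Lambda$, with $\kappa=\sqrt{\pi/(4\fk{a})}$. Hence $\fk{a}\to0$ corresponds to $\kappa\to+\infty$, and
\[
\kappa^{-4}=\frac{16\fk{a}^2}{\pi^2}.
\]
We use the coupling described before the statement. Take independent copies $G^\kappa_\pm$ of $G^\kappa_\Lambda$, so that $G^\kappa_\rm{sum}=G^\kappa_++G^\kappa_-$ in law. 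We checked above that $\nu$ satisfies \eqref{eq:sparse-condition} with $\gamma=4$, $d=3$ and $\fk{c}_\nu=\tfrac1{16}$. Theorem~\ref{thm:poisson-sparse} therefore applies to $G^\kappa_\rm{sum}$ with $\fk{m}_\nu=\zeta_3(4)/16$.

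\emph{Step 1: part (i).} Independence gives $\ln\bb{E}[e^{2tG^\kappa_\rm{sum}}]=2\ln\bb{E}[e^{2tG^\kappa_\Lambda}]$, so $\ln\bb{E}_{\Uppsi_\fk{a}}[e^{t\sf{N}_+}]=\tfrac12\mathscr{L}_\kappa(t)$. By \eqref{eq:sparse-laplace-limit} this equals
\[
\tfrac12\,\kappa^{-4}\,\fk{m}_\nu\,(e^{2t}-1)+o(\kappa^{-4})
=\frac{\zeta_3(4)}{2\pi^2}\,\fk{a}^2\,(e^{2t}-1)+o(\fk{a}^2),
\]
which is \eqref{eq:bec-laplace-limit} with $\fk{c}_\mathsf{QD}=\tfrac12\fk{m}_\nu\cdot 16/\pi^2$.

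\emph{Step 2: part (ii).} Probabilities cannot simply be halved, so we use the convolution identities. Since $P(G^\kappa_\rm{sum}=0)=P(G^\kappa_\Lambda=0)^2$, taking square roots in \eqref{eq:sparse-two-point} gives
\[
P(G^\kappa_\Lambda=0)=1-\tfrac12\fk{m}_\nu\kappa^{-4}+o(\kappa^{-4}).
\]
Next, $P(G^\kappa_\rm{sum}=1)=2P(G^\kappa_\Lambda=0)P(G^\kappa_\Lambda=1)$. Using the previous line, this yields $P(G^\kappa_\Lambda=1)=\tfrac12\fk{m}_\nu\kappa^{-4}+o(\kappa^{-4})$. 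For $k\ge2$ we have $P(G^\kappa_\Lambda=k)P(G^\kappa_\Lambda=0)\le P(G^\kappa_\rm{sum}=k)=o(\kappa^{-4})$, and $P(G^\kappa_\Lambda=0)\to1$. Converting $\kappa^{-4}$ into $\fk{a}^2$ as in Step 1 gives \eqref{eq:bec-two-point}.

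\emph{Step 3: part (iii).} Set $h(x)=(2x)^\theta e^{2\lambda x}$, so that $h(0)=0$ and $|h(x)|\le C e^{cx}$ for some $c>0$ depending on $\theta$ and $\Re\lambda$. Split
\[
\bb{E}[h(G^\kappa_\Lambda)]=h(1)P(G^\kappa_\Lambda=1)+\bb{E}\bigl[h(G^\kappa_\Lambda);G^\kappa_\Lambda\ge2\bigr].
\]
The first term is $2^\theta e^{2\lambda}\fk{c}_\mathsf{QD}\fk{a}^2+o(\fk{a}^2)$ by Step 2. For the remainder, $x\mapsto e^{cx}\mathbbm{1}_{x\ge2}$ is nondecreasing and $G^\kappa_\Lambda=G^\kappa_+\le G^\kappa_\rm{sum}$ pointwise under the coupling. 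Hence the remainder is bounded in absolute value by
\[
C\,\bb{E}\bigl[e^{cG^\kappa_\rm{sum}};G^\kappa_\rm{sum}\ge2\bigr]
=C\Bigl(\bb{E}[e^{cG^\kappa_\rm{sum}}]-P(G^\kappa_\rm{sum}=0)-e^{c}P(G^\kappa_\rm{sum}=1)\Bigr).
\]
By Theorem~\ref{thm:poisson-sparse}(i) applied with $t=c/2$ and exponentiated, $\bb{E}[e^{cG^\kappa_\rm{sum}}]=1+\fk{m}_\nu\kappa^{-4}(e^{c}-1)+o(\kappa^{-4})$. Part (ii) of the same theorem gives $P(G^\kappa_\rm{sum}=0)=1-\fk{m}_\nu\kappa^{-4}+o(\kappa^{-4})$ and $e^cP(G^\kappa_\rm{sum}=1)=e^c\fk{m}_\nu\kappa^{-4}+o(\kappa^{-4})$. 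The leading terms cancel, so the remainder is $o(\kappa^{-4})=o(\fk{a}^2)$.

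I expect this last tail control to be the main obstacle. It needs the pointwise domination by the coupling and the exact cancellation of the order-$\kappa^{-4}$ terms, and for complex $\lambda$ it needs $|h|\le Ce^{cx}$ rather than $h$ itself. As a fallback, Theorem~\ref{thm:poisson-sparse}(iii) could be applied directly to $G^\kappa_\rm{sum}$ to produce the same bound.
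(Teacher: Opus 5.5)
Your proof is correct. Part (i) is exactly the paper's argument, but for parts (ii) and (iii) you take a different route. The paper observes that once (i) holds for $\sf{N}_+$ itself, Proposition~\ref{prop:small-numbers} applies verbatim. One takes $X_N$ distributed as $\sf{N}_+$ in $\Uppsi_{\fk{a}_N}$, with $r_N=\fk{a}_N^2$, $\fk{c}=\fk{c}_\mathsf{QD}$ and $K=2$. So (ii) and (iii) follow at once from the scaled log-Laplace limit, and the doubling structure ($G^\kappa_{\mathrm{sum}}$ having the law of $G^\kappa_++G^\kappa_-$) is used only once, in (i).

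You instead transfer the full-lattice conclusions of Theorem~\ref{thm:poisson-sparse}(ii) back to the quotient lattice by hand. The convolution identities $\bb{P}(G^\kappa_{\mathrm{sum}}=0)=\bb{P}(G^\kappa_\Lambda=0)^2$ and $\bb{P}(G^\kappa_{\mathrm{sum}}=1)=2\,\bb{P}(G^\kappa_\Lambda=0)\,\bb{P}(G^\kappa_\Lambda=1)$ give the point masses. The lower bound $\bb{P}(G^\kappa_{\mathrm{sum}}=k)\ge \bb{P}(G^\kappa_\Lambda=k)\,\bb{P}(G^\kappa_\Lambda=0)$ handles $k\ge 2$. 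In (iii), the pointwise domination $G^\kappa_+\le G^\kappa_{\mathrm{sum}}$ and the cancellation of the order-$\kappa^{-4}$ terms control the tail. These steps all check out: $|h(x)|\le Ce^{cx}$ with $c>0$ handles complex $\lambda$, $x\mapsto e^{cx}\mathbbm{1}_{x\ge 2}$ is indeed nondecreasing, and the expansions of $\bb{E}[e^{cG^\kappa_{\mathrm{sum}}}]$, $\bb{P}(G^\kappa_{\mathrm{sum}}=0)$ and $\bb{P}(G^\kappa_{\mathrm{sum}}=1)$ cancel to $o(\kappa^{-4})$ as claimed.

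Your version shows, point mass by point mass, where the factor $\tfrac12$ in $\fk{c}_\mathsf{QD}=\tfrac12\fk{m}_\nu\cdot 16/\pi^2$ comes from. It is, however, longer, and it re-derives for this specific pairing what the abstract proposition gives in one line. Your tail estimate in Step 3 does the job of the transfer step inside the proof of Proposition~\ref{prop:small-numbers}, via a different device (leading-order cancellation rather than exponential domination of the normalized point masses). Since you had already established (i) for $\sf{N}_+$ directly, citing Proposition~\ref{prop:small-numbers} would have sufficed.
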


In other words, the quantum depletion obeys a law of small numbers in the weak coupling limit: to leading order in $\fk{a}^2$,
\[
    \sf{N}_+ \approx 2\,\mathrm{Poi}\bigl(\fk{c}_\mathsf{QD}\,\fk{a}^2\bigr)\qquad\text{as $\fk{a}\to 0$}.
\]
Condensation is thus complete with probability $1 - \fk{c}_\mathsf{QD}\fk{a}^2 + o(\fk{a}^2)$, and the leading correction consists of the excitation of a single pair of particles with opposite momenta.

\begin{remark}
    The constant $\zeta_3(4)$ has, for instance, been computed in \cite{beane_n-boson_2007}, giving the value $\zeta_3(4) \approx 16.532315959$. This then gives $\fk{c}_\mathsf{QD} \approx 0.83753691065$.
\end{remark}

\begin{proof}[Proof of Theorem~\ref{thm: BEC-small a}]
    Since $\nu$ satisfies condition \eqref{eq:sparse-condition} on $\kappa\bb{Z}^3_0$ with exponent $\gamma = 4 > 3$ and constant $\fk{c}_\nu = \tfrac{1}{16}$, Theorem~\ref{thm:poisson-sparse}\emph{(i)}, applied to $G^\kappa_\rm{sum}$, gives
    \begin{equation*}
        \kappa^4 \ln \bb{E}\bigl[e^{2 t G^\kappa_\rm{sum}}\bigr] \longrightarrow \tfrac{1}{16} \zeta_3(4)\,(e^{2 t} - 1)\qquad\text{as $\kappa \to +\infty$}.
    \end{equation*}
    On the other hand, $G^\kappa_\rm{sum}$ has the same law as $G^\kappa_+ + G^\kappa_-$, where $G^\kappa_\pm$ are independent copies of $G^\kappa$, so that $\bb{E}[e^{2 t G^\kappa_\rm{sum}}] = \bigl(\bb{E}[e^{2 t G^\kappa}]\bigr)^2$ and therefore
    \begin{equation*}
        2 \kappa^4 \ln \bb{E}\bigl[e^{2 t G^\kappa}\bigr] \longrightarrow \frac{1}{16} \zeta_3(4)\,(e^{2 t} - 1).
    \end{equation*}
    By Proposition~\ref{prop:depletion-representation} together with Theorem~\ref{thm:N+_bogo}, the observable $\sf{N}_+$ in the state $\Uppsi_\fk{a}$ has the law of $2G^\kappa$, hence, $\ln \bb{E}[e^{2 t G^\kappa}] = \ln \bb{E}_{\Uppsi_\fk{a}}[e^{t \sf{N}_+}]$. Substituting $\kappa = \sqrt{\pi / 4 \fk{a}}$, that is $\kappa^{-4} = 16\fk{a}^2/\pi^2$, we conclude that
    \begin{equation*}
        \frac{1}{\fk{a}^2} \ln \bb{E}_{\Uppsi_\fk{a}}\bigl[e^{t \sf{N}_+}\bigr] \longrightarrow \frac{\zeta_3(4)}{2 \pi^2}\,(e^{2t} - 1) = \fk{c}_\mathsf{QD}\,(e^{2t}-1),
    \end{equation*}
    which is \eqref{eq:bec-laplace-limit}.

    Parts~\emph{(ii)} and~\emph{(iii)} then follow from Proposition~\ref{prop:small-numbers} in Appendix~\ref{app:small-numbers}, applied along an arbitrary sequence $\fk{a}_N \to 0$ with $X_N$ having the law of $\sf{N}_+$ in $\Uppsi_{\fk{a}_N}$, and with $r_N = \fk{a}_N^2$, $\fk{c} = \fk{c}_\mathsf{QD}$ and $K = 2$.
\end{proof}

As in the $\kappa \bb{Z}^d_0$ lattice case, the previous Theorem identifies the leading correction but does not capture the correct order. We thereby complement it with a large-deviation principle.

\begin{theorem}[LDP at logarithmic speed for quantum depletion]
    The following statement holds.
    \begin{enumerate}[(i)]
        \item \emph{(Local asymptotics)} As $\fk{a} \to 0$, uniformly in $k \in \bb{N}_0$,
        \[
            \ln \bb{P}_{\Psi_\fk{a}}(\sf{N}_+ = 2 k) = -2 k \ln \fk{a}^{-1} + (k+1) O(1).
        \]
        \item \emph{(Logarithmic LDP)} The family of laws of the depletion operator $\sf{N}_+$ in the condensate $\Psi_\fk{a}$ satisfies, as $\fk{a} \to 0$, a large deviation principle on $\bb{R}$ with speed $\ln \fk{a}^{-1}$ and good rate function 
        \[
            \mathscr{J}_{\mathrm{BE}}(x) = \begin{cases}
                x & \text{if $x \in 2 \bb{N}_0$}, \\ +\infty & \text{else}.
            \end{cases}
        \]
        i.e. for every Borel $E \in \cl{B}(\bb{R})$ one has
        \[
            -\inf_{\overset{\circ}{E}} \mathscr{J}_{\mathrm{BE}} \leq \liminf_{\fk{a} \to 0} \frac{1}{\ln \fk{a}^{-1}} \ln \bb{P}_{\Psi_\fk{a}}(\sf{N}_+ \in E) \leq \limsup_{\fk{a} \to 0} \frac{1}{\ln \fk{a}^{-1}} \ln \bb{P}_{\Psi_\fk{a}}(\sf{N}_+ \in E) \leq -\inf_{\overline{E}} \mathscr{J}_{\mathrm{BE}}
        \]
    \end{enumerate}
\end{theorem}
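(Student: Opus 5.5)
The plan is to reduce everything to the random variable $G^\kappa_\Lambda$ on the quotient lattice $\Lambda^\kappa=\kappa\bb{Z}^3_0/{\sim}$. By Proposition~\ref{prop:depletion-representation} and Theorem~\ref{thm:N+_bogo}, $\bb{P}_{\Psi_\fk{a}}(\sf{N}_+=2k)=\bb{P}(G^\kappa_\Lambda=k)$, with $\kappa=\sqrt{\pi/(4\fk{a})}$. In particular $\ln\kappa=\tfrac12\ln\fk{a}^{-1}+O(1)$. It therefore suffices to show, uniformly in $k\in\bb{N}_0$ and as $\kappa\to+\infty$,
\[
\ln\bb{P}(G^\kappa_\Lambda=k)=-4k\ln\kappa+(k+1)O(1).
\]
Indeed, $-4k\ln\kappa=-2k\ln\fk{a}^{-1}+kO(1)$, which gives part (i).

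\emph{Upper bound.} I would not reprove Theorem~\ref{thm:log-ldp} on the quotient lattice. Instead I would use the coupling $G^\kappa_\rm{sum}\overset{d}{=}G^\kappa_++G^\kappa_-$ with independent copies $G^\kappa_\pm$ of $G^\kappa_\Lambda$. This yields
\[
\bb{P}(G^\kappa_\Lambda=k)\,\bb{P}(G^\kappa_\Lambda=0)\le\bb{P}(G^\kappa_\rm{sum}=k).
\]
Here $\bb{P}(G^\kappa_\Lambda=0)=\prod_p(1+\mu_p)^{-1}\ge\exp(-\sum_p\mu_p)\to1$, since $\sum_p\mu_p=O(\kappa^{-4})$. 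Theorem~\ref{thm:log-ldp}(i), applied to $G^\kappa_\rm{sum}$ on $\kappa\bb{Z}^3_0$ with $\gamma=4$ (valid since $\nu$ satisfies \eqref{eq:sparse-condition} with $\fk{c}_\nu=1/16$), then gives $\ln\bb{P}(G^\kappa_\Lambda=k)\le-4k\ln\kappa+(k+1)O(1)$ uniformly in $k$.

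\emph{Lower bound.} I would argue directly. Fix a representative $p_0=\kappa e_1\in\Lambda^\kappa$ and put all $k$ quanta on that single site:
\[
\bb{P}(G^\kappa_\Lambda=k)\ge\sech^2(\nu(p_0))\tanh^{2k}(\nu(p_0))\prod_{q\ne p_0}\bb{P}(G_q=0).
\]
The product is $\ge\exp(-\sum_q\mu_q)\to1$, and $\sech^2(\nu(p_0))\to1$. Moreover, $\tanh^2(\nu(p_0))=\mu_{p_0}/(1+\mu_{p_0})\sim\tfrac1{16}\kappa^{-4}$. Hence $2k\ln\tanh(\nu(p_0))=-4k\ln\kappa-kO(1)$, with a constant independent of $k$. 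This completes part (i).

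\emph{Part (ii).} The LDP follows from (i) by the standard argument for lattice-supported laws, writing $L=\ln\fk{a}^{-1}$.
\begin{itemize}
\item Lower bound: if $\overset{\circ}{E}$ contains some $2k$, then $\bb{P}(\sf{N}_+\in E)\ge\bb{P}(\sf{N}_+=2k)$. Hence $\liminf L^{-1}\ln\bb{P}(\sf{N}_+\in E)\ge-2k$, and optimizing over $k$ gives $-\inf_{\overset{\circ}{E}}\mathscr{J}_{\mathrm{BE}}$.
\item Upper bound: let $2k_0$ be the least element of $\overline{E}\cap2\bb{N}_0$ (if this set is empty, the probability is $0$). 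Then
\[
\bb{P}(\sf{N}_+\in E)\le\sum_{k\ge k_0}e^{-2kL+C(k+1)}.
\]
For $\fk{a}$ small this geometric series is bounded by $2e^{-2k_0L+C(k_0+1)}$, which gives the limsup bound.
\end{itemize}
Goodness of $\mathscr{J}_{\mathrm{BE}}$ is immediate, since its sublevel sets $\{x\in2\bb{N}_0:x\le c\}$ are finite.

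The main obstacle I expect is the \emph{uniformity in $k$} of the error term $(k+1)O(1)$. In the lower bound this requires the constant coming from $\ln\tanh^2(\nu(p_0))+4\ln\kappa$ to be $k$-independent, which it is. In the upper bound it relies on the uniform statement of Theorem~\ref{thm:log-ldp}(i) and on the $k$-independent bound for $\bb{P}(G^\kappa_\Lambda=0)$. Once this uniformity is in place, the summation in the upper bound of part (ii) is routine.
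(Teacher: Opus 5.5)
Your proposal is correct and follows the paper's proof. The lower bound is the single-site argument of Lemma~\ref{lem:lower-local} transplanted to $\Lambda^\kappa$. The upper bound uses the same doubling $G^\kappa_{\mathrm{sum}}\overset{d}{=}G^\kappa_++G^\kappa_-$ combined with Theorem~\ref{thm:log-ldp}(i). You use the event $\{G^\kappa_+=k,\,G^\kappa_-=0\}$ together with $\bb{P}(G^\kappa_\Lambda=0)\to1$, whereas the paper uses $\{G^\kappa_+=k,\,G^\kappa_-=k\}$ and halves the logarithm; the difference is immaterial. Your part~(ii) is exactly Proposition~\ref{prop: two_sided_to_LDP} reproved inline.

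One cosmetic fix: since $\nu<0$, $\tanh(\nu(p_0))$ is negative, so write $k\ln\tanh^2(\nu(p_0))$ instead of $2k\ln\tanh(\nu(p_0))$.
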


\begin{proof}
    \emph{Local lower bound.} A careful inspection of the proof of Lemma~\ref{lem:lower-local} shows that one can use $\Lambda^\kappa$ instead of $\kappa \bb{Z}^d_0$. Indeed, the \emph{site bound} of Lemma~\ref{lem:site-bounds} extends to the quotient lattice, and hence the proof of the local bound follows. Therefore, for $\gamma = 4$, we get
    \[
        \ln \bb{P}(G^\kappa = k) \geq -4 k \ln \kappa + (k+1) O(1),
    \]
    uniformly in $k$ when $\kappa \to +\infty$. Substituting Substituting the expression for $\kappa$ we obtain
    \[
        \ln \bb{P}(G^\kappa = k) \geq -2 k \ln \fk{a}^{-1} + (k+1) O(1)
    \]
    uniformly in $k$ as $\fk{a} \to 0$. 

    \medskip \noindent \emph{Local upper bound.} We have
    \[
        \ln \bb{P}(G^\kappa = k) = \frac{1}{2} \ln \bb{P}(G^\kappa_+ = k, G^\kappa_- = k) \leq \frac{1}{2} \ln \bb{P}(G^\kappa_{\rm{sum}} = 2k).
    \]
    By Theorem~\ref{thm:log-ldp}(i) the latter is $-4 (2k) \ln \kappa + k O(1)$. Therefore, substituting again the expression for $\kappa$, we get
    \[
        \ln \bb{P}(G^\kappa = k) \leq -2 k \ln \fk{a}^{-1} + (k+1) O(1).
    \]

    \medskip \noindent \emph{Log-LDP.} Combining the local lower and upper bounds we get as $\fk{a} \to 0$, uniformly in $k$,
    \[
        \ln \bb{P}(G^\kappa = k) = -2 k \ln \fk{a}^{-1} + (k+1) O(1).
    \]
    By Proposition~\ref{prop: two_sided_to_LDP} with $\gamma = 2$, after replacing the parameter $\kappa$ in the statement by $\fk{a}^{-1}$, $(G^\kappa)_\kappa$ satisfies a LDP as $\fk{a} \to 0$ with speed $\ln \fk{a}^{-1}$ and good rate function $\mathscr{J}_2$. By transfer, the law of $\sf{N}_+$ in $\Psi_\fk{a}$ satisfies a LDP with speed $\ln \fk{a}^{-1}$ and good rate function $x \to \mathscr{J}_2(x/2)$. 
\end{proof}

\subsubsection*{\bf\em Large scattering limit, $\fk{a}\to +\infty$}

The limit $\fk{a} \to +\infty$, also referred to as the Thomas-Fermi regime (see, for example, \cite{Coreggi1, Coreggi2, Dimonte}), corresponds physically to a regime of strongly interacting Bose gases. In this regime, correlation effects become dominant, and the fluctuation structure changes qualitatively. 
This behavior is reflected in Theorem \ref{thm: BEC-large a}, where the quantum depletion is shown to satisfy a central limit theorem. We set
\[
    \mu_\fk{a}^\mathsf{QD} = \frac{1}{2} \bb{E}_{\Uppsi_\fk{a}}[\sf{N}_+] = \sum_{p \in \Lambda^\kappa} \mu_p^\mathsf{QD}, \quad \text{and} \quad (\sigma_\fk{a}^\mathsf{QD})^2 = \frac{1}{4} \bb{V}_{\Uppsi_\fk{a}}[\sf{N}_+] = \sum_{p \in \Lambda^\kappa} \mu_p^\mathsf{QD}(1+\mu_p^\mathsf{QD}). 
\]

\begin{theorem}[Large scattering limit, $\fk{a}\to +\infty$]
\label{thm: BEC-large a}
    As $\fk{a} \to +\infty$ we have:
    \begin{enumerate}[(i)]
        \item The distributions of the scaled centered observables
        \[
            \frac{\sf{N}_+ - 2\mu_{\fk{a}}^\mathsf{QD}}{2 \sigma_\fk{a}^\mathsf{QD}}\quad\text{converge weakly to $\cl{N}(0,1)$,}
        \]
        together with all its moments of order $\theta \in [1,3)$.
        \item The mean and the variance of the quantum depletion satisfy
        \begin{equation}\label{eq:bec-mean-variance}
            \bb{E}_{\Uppsi_\fk{a}}[\sf{N}_+] = 2 \mu_\fk{a}^\mathsf{QD} = \frac{8}{3 \sqrt{\pi}} \fk{a}^{3/2}  + O(\sqrt{\fk{a}}), \quad \text{and} \quad \bb{V}_{\Uppsi_\fk{a}}[\sf{N}_+] \sim 2\sqrt{\pi}\, \fk{a}^{3/2}.
        \end{equation}
        \item Consequently, the law in $\Uppsi_\fk{a}$ of the operators
        \[
            \fk{a}^{3/4} \left(\frac{1}{\fk{a}^{3/2}} \sf{N}_+ - \frac{8}{3 \sqrt{\pi}} \right)\quad \text{converges weakly to $\cl{N}\bigl(0,2\sqrt{\pi}\bigr)$,}
        \]
        together with its moments of order $\theta \in [1,3)$.
    \end{enumerate}
\end{theorem}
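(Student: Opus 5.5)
The plan is to reduce everything to the punctured-lattice results of Section~\ref{subsec:tail} through Proposition~\ref{prop:depletion-representation} and Theorem~\ref{thm:N+_bogo}. Recall that $\sf{N}_+$ in $\Uppsi_\fk{a}$ has the law of $2G^\kappa$ with $\kappa=\sqrt{\pi/4\fk{a}}$, so $\fk{a}\to+\infty$ means $\kappa\to0$. Recall also that $\nu$ satisfies \eqref{ass:dense-conditions} with $d=3$, $\gamma=4$, $\beta=1$, which gives $d/\beta=3$.

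\emph{Step 1: part (i).} I would apply Theorem~\ref{thm:central-limit} to $G^\kappa_{\rm sum}$ on $\kappa\bb{Z}^3_0$. This gives $X_{\rm sum}^\kappa:=(G^\kappa_{\rm sum}-2\mu^{\sf QD}_\fk{a})/(\sqrt2\,\sigma^{\sf QD}_\fk{a})\to\cl{N}(0,1)$, with convergence of $\theta$-moments for $\theta<3$. Here I use that $\mu^{\sf QD}$ depends only on $|p|$, so the mean and variance of $G^\kappa_{\rm sum}$ are exactly twice those of $G^\kappa$. Since $G^\kappa_{\rm sum}\overset{d}{=}G^\kappa_++G^\kappa_-$ with i.i.d.\ copies, setting $X^\kappa=(G^\kappa-\mu^{\sf QD}_\fk{a})/\sigma^{\sf QD}_\fk{a}$ we have $X_{\rm sum}^\kappa\overset{d}{=}(X^\kappa_++X^\kappa_-)/\sqrt2$.
\begin{itemize}
\item \emph{Distributional limit.} The characteristic functions satisfy $\varphi_{X^\kappa}(s/\sqrt2)^2\to e^{-s^2/2}$. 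The family $X^\kappa$ is tight because its variance is $1$. Any subsequential limit $\psi$ is therefore a characteristic function with $\psi^2=e^{-s^2}$, $\psi$ continuous and $\psi(0)=1$. Hence $\psi=e^{-s^2/2}$, and $X^\kappa\to\cl{N}(0,1)$.
\item \emph{Moments.} Conditioning on $X^\kappa_+$ and using Jensen with $\bb{E}X^\kappa_-=0$ gives $\bb{E}|X^\kappa_+|^{\theta'}\le\bb{E}|X^\kappa_++X^\kappa_-|^{\theta'}$. This is uniformly bounded for every $\theta'<3$. Uniform integrability then yields convergence of all moments of order $\theta\in[1,3)$.
\end{itemize}
As a fallback, one can rerun the Lindeberg argument of Section~\ref{sec:dense-regime} directly on $\Lambda^\kappa$: every Lindeberg quantity over $\Lambda^\kappa$ is exactly half of the corresponding one over $\kappa\bb{Z}^3_0$.

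\emph{Step 2: part (ii), Riemann sums.} By symmetry,
\[
2\mu^{\sf QD}_\fk{a}=\sum_{p\in\kappa\bb{Z}^3_0}\mu^{\sf QD}(p),\qquad 4(\sigma^{\sf QD}_\fk{a})^2=2\sum_{p\in\kappa\bb{Z}^3_0}\mu^{\sf QD}(p)\bigl(1+\mu^{\sf QD}(p)\bigr).
\]
I would compare these sums with $\kappa^{-3}\int\mu^{\sf QD}$ and $\kappa^{-3}\int(\mu^{\sf QD}+(\mu^{\sf QD})^2)$. The integrals are radial one-dimensional integrals in the variable $s=r/\sqrt{1+r^2}$, and I expect $\int\mu^{\sf QD}=\pi/3$ and $\int(\mu^{\sf QD}+(\mu^{\sf QD})^2)=\pi^2/8$. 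With $\kappa^{-3}=8\fk{a}^{3/2}/\pi^{3/2}$ these give the constants $8/(3\sqrt\pi)$ and $2\sqrt\pi$. For the variance only $o(\kappa^{-3})$ accuracy is needed; it follows from local integrability of $(\mu^{\sf QD})^2\sim|x|^{-2}/16$ and the $|x|^{-4}$ tail, via dominated convergence for the piecewise constant interpolant.

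\emph{Main obstacle.} The sharp error $O(\sqrt{\fk{a}})=O(\kappa^{-1})$ for the mean is the delicate part. I would write $\kappa^3\sum_p\mu^{\sf QD}(p)-\int\mu^{\sf QD}$ as a sum over cubes $Q_p$ of side $\kappa$ of $\int_{Q_p}(\mu^{\sf QD}(p)-\mu^{\sf QD}(x))\,dx$, plus the integral over the cube at the origin.
\begin{itemize}
\item The origin cube contributes $\int_{|x|\lesssim\kappa}|x|^{-1}\,dx=O(\kappa^2)$.
\item Using $|\nabla\mu^{\sf QD}|\lesssim|x|^{-2}$ near $0$ and $\lesssim|x|^{-5}$ at infinity, the first-order terms are bounded by $\kappa\int_{|x|>\kappa}|\nabla\mu^{\sf QD}|\,dx$, which is $O(\kappa)$.
\end{itemize}
Multiplying by $\kappa^{-3}$ gives an error $O(\kappa^{-2})$. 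Since only $O(\kappa^{-1})$ is claimed, this is more than enough; the first-order bound already suffices with margin.

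\emph{Step 3: part (iii).} Decompose
\[
\fk{a}^{3/4}\Bigl(\tfrac{\sf{N}_+}{\fk{a}^{3/2}}-\tfrac{8}{3\sqrt\pi}\Bigr)=\frac{\sf{N}_+-2\mu^{\sf QD}_\fk{a}}{2\sigma^{\sf QD}_\fk{a}}\cdot\frac{2\sigma^{\sf QD}_\fk{a}}{\fk{a}^{3/4}}+\fk{a}^{-3/4}\Bigl(2\mu^{\sf QD}_\fk{a}-\tfrac{8}{3\sqrt\pi}\fk{a}^{3/2}\Bigr).
\]
By (ii), the deterministic factor $2\sigma^{\sf QD}_\fk{a}/\fk{a}^{3/4}$ tends to $(2\sqrt\pi)^{1/2}$, and the shift is $O(\fk{a}^{-1/4})\to0$. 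Slutsky's lemma together with (i) then gives weak convergence to $\cl{N}(0,2\sqrt\pi)$. Moment convergence for $\theta\in[1,3)$ follows from (i) via Minkowski's inequality, since the factor and the shift are deterministic.
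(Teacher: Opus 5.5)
Your treatment of the mean in (ii) gets the direction of the asymptotics backwards. The first-order cube estimate gives $\kappa^3\sum_{p\in\kappa\mathbb{Z}^3_0}\mu^{\mathsf{QD}}(p)-\int\mu^{\mathsf{QD}}=O(\kappa)$, hence $\mathbb{E}[\mathsf{N}_+]=\frac{\pi}{3}\kappa^{-3}+O(\kappa^{-2})$. Since $\kappa\to0$, we have $\kappa^{-2}\gg\kappa^{-1}$, so this bound is weaker than the claim, not stronger. In terms of the scattering length, it is an error $O(\mathfrak a)$ instead of $O(\sqrt{\mathfrak a})$.

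The same gap breaks Step 3. With only $O(\mathfrak a)$ control, the deterministic shift $\mathfrak a^{-3/4}\bigl(2\mu^{\mathsf{QD}}_{\mathfrak a}-\frac{8}{3\sqrt\pi}\mathfrak a^{3/2}\bigr)$ is only $O(\mathfrak a^{1/4})$, which does not tend to zero. Part (iii) needs at least $\kappa^3\sum_p\mu^{\mathsf{QD}}(p)-\int\mu^{\mathsf{QD}}=o(\kappa^{3/2})$, which no first-order bound delivers. The sharp statement in (ii) needs $O(\kappa^2)$.

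The missing ingredient is a second-order Riemann-sum estimate, which is what Lemma~\ref{lem: improved_Riemann_approx} provides. The cubes $Q_\kappa(p)=p+[-\kappa/2,\kappa/2]^3$ are centred at $p$, so the linear Taylor term integrates to zero. Each cube then costs $\lesssim\kappa^5\sup_{Q_\kappa(p)}|D^2\mu^{\mathsf{QD}}|$. Away from the origin this sums to $O(\kappa^2)$, using $|D^2\mu^{\mathsf{QD}}|\lesssim|x|^{-6}$ at infinity.

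Near the origin, however, $|D^2\mu^{\mathsf{QD}}|\sim|x|^{-3}$, and $\kappa^5\sum_{\kappa\le|p|\le1}|p|^{-3}\sim\kappa^2\ln(1/\kappa)$. This logarithmic loss would still suffice for (iii), giving a shift $O(\mathfrak a^{-1/4}\ln\mathfrak a)$, but not for the stated $O(\sqrt{\mathfrak a})$. The paper removes it as follows:
\begin{enumerate}
\item Write $\mu^{\mathsf{QD}}=\frac{1}{4|x|}+g$, where $g$ is bounded and $|D^2g|=O(|x|^{-1})$ near $0$, and treat $g$ by the second-order bound.
\item Expand the Coulomb part to third order. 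The quadratic term integrates to $\frac{1}{12}\Delta(|x|^{-1})(p)=0$ by harmonicity in $\mathbb{R}^3$.
\item After rescaling, the cubic remainder is $O(|q|^{-4})$, which is summable over $\mathbb{Z}^3_0$, so this part is $O(\kappa^2)$.
\end{enumerate}

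The rest of your plan is sound. Your derivation of (i) via characteristic functions and the conditional-Jensen bound $\mathbb{E}|X_+|^{\theta'}\le\mathbb{E}|X_++X_-|^{\theta'}$ is a valid substitute for the paper's three-copy trick and moment-generating-function identification in Lemma~\ref{lemma: CLT_sum_CLT_non_sum}. The variance asymptotics via dominated convergence are fine. Step 3 goes through once the correct mean asymptotics are in place.
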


\begin{proof}
    We recall that $\sf{N}_+$ has the law of $2 G_\kappa$ in the condensate $\Uppsi_\fk{a}$, where $\kappa = \sqrt{\pi / 4 \fk{a}}$. We also recall the notation $G^\kappa_\mathrm{sum} = G^\kappa_+ + G^\kappa_- = \sum_{p \in \kappa \bb{Z}^3_0} G_p^{sum}$, where $G_\pm^\kappa$ are independent copies of $G^\kappa$. 

	\emph{(i)} As $\mu^\mathsf{QD}$ satisfies assumption \eqref{ass:dense-conditions-proof} with $\gamma = 4$ and $\beta = 1$, the variable $G_\mathrm{sum}^\kappa = G^\kappa_+ + G^\kappa_-$ satisfies a central limit theorem. More precisely, the law of the scaled random variable 
	\[
		X^\kappa_\mathrm{sum} := \frac{G_\mathrm{sum}^\kappa - \bb{E}[G_\mathrm{sum}^\kappa]}{\sqrt{\bb{V}[G_\mathrm{sum}^\kappa]}}\qquad\text{converges weakly to $\cl{N}(0,1)$,}
	\]
	together with all its moments of order $\theta \in [1,3)$. But we can write $X^\kappa_\mathrm{sum} = X^\kappa_+ + X^\kappa_-$, where $X_\pm^\kappa$ are independent copies having the law of $\frac{G^\kappa - \bb{E}[G^\kappa]}{\sqrt{2\bb{V}[G^\kappa]}}$. By Lemma \ref{lemma: CLT_sum_CLT_non_sum}, we deduce that $\frac{G^\kappa - \bb{E}[G^\kappa]}{\sqrt{\bb{V}[G^\kappa]}}$ also converges weakly to $\cl{N}(0,1)$, together with all its moments of order $\theta \in [1,3)$, which is precisely the assertion since $\bb{E}[G^\kappa] = \mu_\fk{a}^\mathsf{QD}$ and $\bb{V}[G^\kappa] = (\sigma_\fk{a}^\mathsf{QD})^2$.
     
     \medskip   
     \emph{(ii)} We have
        \[
            \bb{E}_{\Uppsi_\fk{a}}[\sf{N}_+] = 2 \mu_\fk{a}^\mathsf{QD} = 2 \sum_{p \in \Lambda^\kappa} \mu_p^\mathsf{QD} = \sum_{p \in \kappa \bb{Z}^3_0} \mu_p^\mathsf{QD}.
        \]
        Similarly
        \[
            \bb{V}_{\Uppsi_\fk{a}}[\sf{N}_+] = 4 (\sigma_{\fk{a}}^\mathsf{QD})^2 = 2 \sum_{p \in \kappa \bb{Z}^3_0} \mu_p^\mathsf{QD}(\mu_p^\mathsf{QD} + 1).
        \]
        As $\mu_x^\mathsf{QD} \asymp |x|^{-1}$ as $|x| \to 0$, and $\mu_x^\mathsf{QD} \asymp |x|^{-4}$ when $|x| \to +\infty$, we have $\mu_x^\mathsf{QD}(\mu_x^\mathsf{QD} + 1) \asymp |x|^{-2}$ as $|x| \to 0$, and $\asymp |x|^{-4}$ as $|x| \to +\infty$. Therefore we can apply Lemma \ref{lemma: Riemann_approx} to deduce that
        \[
            \kappa^3 \sum_{p \in \kappa \bb{Z}^3_0} \mu_p^\mathsf{QD}(\mu_p^\mathsf{QD} + 1) \to \int_{\bb{R}^3_0} \mu_x^\mathsf{QD}(\mu_x^\mathsf{QD} + 1) \dd{x}. 
        \]
        By Lemma~\ref{lem:QD-integrals} below, we have $\int_{\bb{R}^3_0} \mu_x^\mathsf{QD}(1+\mu_x^\mathsf{QD})\dd{x} = \pi^2/8$. Replacing $\kappa = \sqrt{\pi/4 \fk{a}}$, we deduce that
        \[
            \bb{V}_{\Uppsi_\fk{a}}[\sf{N}_+] \sim \frac{\pi^2}{4} \frac{1}{\kappa^3} \sim 2\sqrt{\pi}\, \fk{a}^{3/2}.
        \]
        For the mean, we prove that $\mu_x^\mathsf{QD}$ satisfies the assumptions of Lemma \ref{lem: improved_Riemann_approx}. We recall that if $f(x) = h(|x|^2)$ for some function $h$ of class $C^2(\bb{R}_+^*)$, then
        \[
            D^2 f(x) = 2 h'(|x|^2)\, \rm{I}_3 + 4 h''(|x|^2) \, x \otimes x.
        \]
        In particular $|D^2 f(x)| \leq C (|h'(|x|^2)| + |x|^2 |h''(|x|^2)|)$. In our case, we have $\mu_x^\mathsf{QD} = m_\mathsf{QD}(|x|^2)$ with
        \begin{equation*}
            m_\mathsf{QD}(r) = \frac{1}{4} \left ( \sqrt{\frac{r}{1+r}} + \sqrt{\frac{r+1}{r}} - 2 \right ).
        \end{equation*}
        This gives 
        \begin{equation*}
            m_\mathsf{QD}'(r) = -\frac{1}{8} \frac{1}{r^{3/2} (1+r)^{3/2}}, \quad \text{and} \quad m''_\mathsf{QD}(r) = \frac{3}{16} \frac{2r+1}{r^{5/2}(r+1)^{5/2}}.
        \end{equation*} 
        For $r \gg 1$ we get $m'_\mathsf{QD}(r) \leq C/r^3$ and $m''_\mathsf{QD}(r) \leq C/r^4$. This gives $|D^2 \mu_x^\mathsf{QD}| \leq C/|x|^6$. Therefore, the asymptotic behavior for large $|x|$ of the Hessian holds with $\beta = 6 > 3$. It remains to check the behavior as $|x| \to 0$. To do so, we let $g_x^\mathsf{QD} = \mu_x^\mathsf{QD} - \frac{1}{4 |x|}$.
        \begin{equation*}
            D^2 g_x^\mathsf{QD} = 2 \left ( m'_\mathsf{QD}(|x|^2) + \frac{1}{8 |x|^3} \right ) \rm{I}_3 + 4 \left ( m''_\mathsf{QD}(|x|^2) - \frac{3}{16 |x|^5} \right ) x \otimes x.
        \end{equation*}
        But for small $|x|$, we have
        \[
            m'_\mathsf{QD}(r) = -\frac{1}{8 r^{3/2}}\bigl(1 + O(r)\bigr) = -\frac{1}{8r^{3/2}} + O \biggl ( \frac{1}{\sqrt{r}} \biggr ),
        \]
        and
        \[
            m''_\mathsf{QD}(r) = \frac{3}{16 r^{5/2}} \bigl ( 1+ O(r) \bigr ) = \frac{3}{16 r^{5/2}} + O \biggl ( \frac{1}{r^{3/2}} \biggr ).
        \]
        Thus,
        \[
            |D^2 g_x^\mathsf{QD}| = O \biggl ( \frac{1}{|x|} \biggr ).
        \]
        Since we also have $\mu_x^\mathsf{QD} = \frac{1}{4 |x|} + O(1)$, we deduce that the asymptotic assumption at $|x| \to 0$ holds with $\alpha = 1 < 3$. This shows that we can apply Lemma \ref{lem: improved_Riemann_approx}. Since $\int_{\bb{R}^3_0} \mu_x^\mathsf{QD} \dd{x} = \pi/3$ by Lemma~\ref{lem:QD-integrals}, we deduce that
        \[
            \bb{E}_{\Uppsi_\fk{a}}[\sf{N}_+] = \sum_{p \in \kappa \bb{Z}^3_0} \mu_p^\mathsf{QD} = \frac{1}{\kappa^3} \int_{\bb{R}^3_0} \mu_x^\mathsf{QD} \dd{x} + O(\kappa^{-1}) = \frac{8}{3 \sqrt{\pi}} \fk{a}^{3/2}  + O(\sqrt{\fk{a}})
        \]
        
        \medskip
        \emph{(iii)} It remains to prove the last statement. By part~(ii), we have
        \[
            \sqrt{\bb{V}_{\Uppsi_\fk{a}}[\sf{N}_+]} = \bigl(2\sqrt{\pi}\bigr)^{1/2} \fk{a}^{3/4}\,(1+o(1)),\qquad \bb{E}_{\Uppsi_\fk{a}}[\sf{N}_+] = \frac{8}{3\sqrt{\pi}}\fk{a}^{3/2} + O(\sqrt{\fk{a}}),
        \]
        and therefore
        \[
            \fk{a}^{3/4}\left(\frac{1}{\fk{a}^{3/2}}\sf{N}_+ - \frac{8}{3\sqrt{\pi}}\right)
            = \frac{\sf{N}_+ - \bb{E}_{\Uppsi_\fk{a}}[\sf{N}_+]}{\fk{a}^{3/4}} + O\bigl(\fk{a}^{-1/4}\bigr).
        \]
        By part~(i), the first term on the right-hand side equals
        \[
            \frac{\sqrt{\bb{V}_{\Uppsi_\fk{a}}[\sf{N}_+]}}{\fk{a}^{3/4}}\cdot \frac{\sf{N}_+ - \bb{E}_{\Uppsi_\fk{a}}[\sf{N}_+]}{\sqrt{\bb{V}_{\Uppsi_\fk{a}}[\sf{N}_+]}},
        \]
        where the first factor converges to $(2\sqrt{\pi})^{1/2}$ and the second converges weakly, together with its moments of order $\theta\in[1,3)$, to $\cl{N}(0,1)$. Since convergence in law and of moments is preserved under the addition of a sequence converging to zero, we conclude that
        \[
            \fk{a}^{3/4} \left ( \frac{1}{\fk{a}^{3/2}} \sf{N}_+ - \frac{8}{3 \sqrt{\pi}} \right ) \longrightarrow \cl{N}\bigl(0,2\sqrt{\pi}\bigr),
        \]
        weakly in the condensate $\Uppsi_\fk{a}$, together with its moments of order $\theta \in [1,3)$.
\end{proof}

The proof above relies on the following two explicit integral formulas, whose computation we defer to Appendix~\ref{app:refined-QD}.

\begin{lemma}[Explicit integrals]\label{lem:QD-integrals}
    Let $\mu^\mathsf{QD}_x = \sinh^2(\nu(x))$ with $\nu(x) = \tfrac14\ln\bigl(|x|^2/(|x|^2+1)\bigr)$. Then
    \[
        \int_{\bb{R}^3_0} \mu_x^\mathsf{QD} \dd{x} = \frac{\pi}{3},\qquad\text{and}\qquad \int_{\bb{R}^3_0} \mu_x^\mathsf{QD}\bigl(1+\mu_x^\mathsf{QD}\bigr) \dd{x} = \frac{\pi^2}{8}.
    \]
\end{lemma}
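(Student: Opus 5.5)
Both integrals reduce, via radial symmetry and the substitution $|x|=\sinh u$, to elementary one-dimensional integrals. First, write $\mu_x^\mathsf{QD}=m_\mathsf{QD}(|x|^2)$ with
\[
m_\mathsf{QD}(r)=\tfrac14\Bigl(\sqrt{\tfrac{r}{1+r}}+\sqrt{\tfrac{1+r}{r}}-2\Bigr),
\]
as in the proof of Theorem~\ref{thm: BEC-large a}. Polar coordinates then give $\int_{\bb{R}^3_0} F(\mu_x^\mathsf{QD})\dd{x}=4\pi\int_0^\infty \rho^2 F(m_\mathsf{QD}(\rho^2))\dd{\rho}$ for any admissible function $F$. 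Integrability near $0$ and near infinity is already guaranteed by the asymptotics $\mu^\mathsf{QD}_x\asymp |x|^{-1}$ and $|x|^{-4}$ recorded earlier.

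The key step is the substitution $\rho=\sinh u$, $u\in(0,\infty)$, so that $\dd{\rho}=\cosh u\dd{u}$. Under it, $\sqrt{r/(1+r)}=\tanh u$, and
\[
\tanh u+\coth u-2=\frac{(\cosh u-\sinh u)^2}{\sinh u\cosh u}=\frac{e^{-2u}}{\sinh u\cosh u}.
\]
Hence $m_\mathsf{QD}(\sinh^2 u)=e^{-2u}/(4\sinh u\cosh u)$. This is the only non-routine observation; everything afterwards is mechanical.

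For the first integral, substituting gives
\[
4\pi\int_0^\infty \sinh^2 u\,\frac{e^{-2u}}{4\sinh u\cosh u}\cosh u\dd{u}=\pi\int_0^\infty \sinh u\, e^{-2u}\dd{u}=\frac{\pi}{2}\int_0^\infty (e^{-u}-e^{-3u})\dd{u}=\frac{\pi}{3}.
\]

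For the second integral, it suffices to compute $\int (\mu^\mathsf{QD})^2$ and add the first integral. The same substitution yields
\[
4\pi\int_0^\infty \sinh^2u\cosh u\,\frac{e^{-4u}}{16\sinh^2u\cosh^2u}\dd{u}=\frac{\pi}{4}\int_0^\infty\frac{e^{-4u}}{\cosh u}\dd{u}.
\]
Writing $1/\cosh u=2e^{-u}/(1+e^{-2u})$ and setting $t=e^{-u}$ turns the remaining integral into
\[
2\int_0^1\frac{t^4}{1+t^2}\dd{t}=2\Bigl(\tfrac13-1+\tfrac{\pi}{4}\Bigr)=\frac{\pi}{2}-\frac43,
\]
by the polynomial division $t^4/(1+t^2)=t^2-1+1/(1+t^2)$. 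Thus $\int(\mu^\mathsf{QD})^2=\pi^2/8-\pi/3$. Adding $\pi/3$ gives $\pi^2/8$, as claimed.

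The only real obstacle is finding the hyperbolic substitution that collapses the square roots into $e^{-2u}/(\sinh u\cosh u)$. Once that is in hand, the remaining steps are straightforward one-variable calculus, and convergence at both endpoints is evident from the exponential decay of the integrands in $u$.
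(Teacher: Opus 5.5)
Your proposal is correct and follows essentially the same route as the paper. Both use polar coordinates and the substitution $|x|=\sinh u$, which turns $\mu^{\mathsf{QD}}$ into $e^{-2u}/(4\sinh u\cosh u)$; both split off the linear part and evaluate $\frac{\pi}{4}\int_0^\infty e^{-4u}/\cosh u\,\dd{u}$ via $t=e^{-u}$ and the division $t^4/(1+t^2)=t^2-1+1/(1+t^2)$.
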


%


\subsubsection*{\bf\em Intermediate scattering length: tail estimates}

This follows by a direct application of Theorem~\ref{thm:ldp-intermediate} in the case of the quotient lattice $\Lambda^\kappa$. $\overline{\mu}_{\Lambda^\kappa}$ is then attained for $|p| = \kappa$. We thus introduce
\[
    \overline{\mu}_{\fk{a}} = \frac{1}{4} \left ( \sqrt{\frac{\pi/4}{\fk{a} + \pi/4}} + \sqrt{\frac{\fk{a} + \pi/4}{\pi/4}} - 2 \right ) = \overline{\mu}_{\Lambda^\kappa}.
\]
Set  $2 \mu_\fk{a} = \bb{E}_{\Psi_\fk{a}}[\sf{N}_+]$. Then Theorem~\ref{thm:ldp-intermediate} yields the following.

\begin{theorem}
    For all $\lambda \geq 1$, the following two-sided bound holds.
    \[
        (\lfloor \lambda \mu_\fk{a} \rfloor +1) \ln \left ( \frac{\overline{\mu}_\fk{a}}{1 + \overline{\mu}_\fk{a}} \right ) \leq \ln \bb{P}_{\Psi_\fk{a}}(\sf{N}_+ > 2 \lambda \mu_\fk{a}) \leq -\lambda \mu_\fk{a} \ln \left ( \frac{\lambda(1+\overline{\mu}_\fk{a}}{1+ \lambda \overline{\mu}_\fk{a}} \right ) - \frac{\mu_\fk{a}}{\overline{\mu}_\fk{a}} \ln \left ( \frac{1 + \overline{\mu}_\fk{a}}{1+ \lambda \overline{\mu}_\fk{a}} \right ).
    \]
    The lower bound is also valid for $\lambda \in (0,1)$. Consequently,
    \[
        \lim_{\lambda \to +\infty} \frac{1}{\lambda} \ln \bb{P}_{\Psi_\fk{a}}(\sf{N}_+ > 2 \lambda \mu_\fk{a}) = -\cl{J}_\fk{a}, \quad \cl{J}_\fk{a} := \mu_\fk{a} \ln \left ( 1 + \frac{1}{\overline{\mu}_\fk{a}} \right ).
    \]
\end{theorem}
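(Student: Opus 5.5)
The plan is to reduce everything to Theorem~\ref{thm:ldp-intermediate} applied on the quotient lattice $\Lambda = \Lambda^\kappa$ with weights $\nu_p = \nu(p)$, and to use Proposition~\ref{prop:depletion-representation} to transfer the statement to $\sf{N}_+$ in $\Uppsi_\fk{a}$.

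\textbf{Step 1 (identify the quantities).} By Proposition~\ref{prop:depletion-representation}, the law of $\sf{N}_+$ in $\Uppsi_\fk{a}$ equals the law of $\sf{N}_\Lambda^\kappa$ in $\Uppsi^\kappa_\nu$. In particular $\mu_\fk{a} = \mu_{\Lambda^\kappa} = \sum_{p\in\Lambda^\kappa}\sinh^2(\nu(p))$, and this is finite since $\sinh^2(\nu(p)) \lesssim |p|^{-4}$ at infinity. Hence $\nu \in \ell^2(\Lambda^\kappa)$, because $\nu^2 \lesssim \sinh^2\nu$ for bounded $\nu$. Moreover $\nu\not\equiv 0$, so the hypotheses of Theorem~\ref{thm:ldp-intermediate} and of its two-sided bounds \eqref{eq:tail-lower-bound}--\eqref{eq:tail-upper-bound} are satisfied.

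\textbf{Step 2 (the maximal local mean).} Show that $\overline{\mu}_{\Lambda^\kappa} = \overline{\mu}_\fk{a}$.
\begin{itemize}
\item The map $r\mapsto m_\mathsf{QD}(r)$ is strictly decreasing, since $m'_\mathsf{QD}<0$ as computed earlier. So the supremum of $\mu^\mathsf{QD}$ over $\Lambda^\kappa$ is attained at a point of minimal norm, namely $|p|=\kappa$ (for instance $p=\kappa e_1$).
\item Evaluate $m_\mathsf{QD}(\kappa^2)$ using $\kappa^2 = \pi/(4\fk{a})$. This gives $\frac{r}{1+r} = \frac{\pi/4}{\fk{a}+\pi/4}$, and the displayed formula for $\overline{\mu}_\fk{a}$ follows.
\end{itemize}

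\textbf{Step 3 (apply the bounds).} Insert $\mu_\Lambda=\mu_\fk{a}$ and $\overline{\mu}_\Lambda=\overline{\mu}_\fk{a}$ into the two-sided estimates quoted after Theorem~\ref{thm:ldp-intermediate}. This yields the stated inequality for $\lambda\ge 1$, and the lower bound for all $\lambda>0$. The limit $\lambda\to\infty$ then follows directly from Theorem~\ref{thm:ldp-intermediate}, with $\cl{J}_\fk{a}=\cl{I}_{\Lambda^\kappa}$. One can also read it off the bounds:
\begin{itemize}
\item the lower bound divided by $\lambda$ tends to $\mu_\fk{a}\ln\frac{\overline{\mu}_\fk{a}}{1+\overline{\mu}_\fk{a}}$;
\item in the upper bound, $\ln\frac{\lambda(1+\overline{\mu})}{1+\lambda\overline{\mu}} \to \ln\frac{1+\overline{\mu}}{\overline{\mu}}$, while the second term is $O(\ln\lambda)$.
\end{itemize}

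\textbf{Main obstacle.} The only nontrivial point is verifying that Theorem~\ref{thm:ldp-intermediate} applies on the quotient $\Lambda^\kappa$ rather than on $\kappa\bb{Z}^3_0$. This is immediate, because that theorem is stated for an arbitrary countable $\Lambda$. A secondary point is that the maximum is attained independently of which representative is chosen in each class, which holds because $\nu$ is radial.
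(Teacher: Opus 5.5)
Your proposal is correct and follows the paper's route: you apply Theorem~\ref{thm:ldp-intermediate} and its two-sided bounds \eqref{eq:tail-lower-bound}--\eqref{eq:tail-upper-bound} on the quotient lattice $\Lambda^\kappa$, transferred to $\sf{N}_+$ via Proposition~\ref{prop:depletion-representation}, with $\overline{\mu}_{\Lambda^\kappa}$ attained at $|p|=\kappa$. Your checks of the hypotheses ($\nu\in\ell^2$, $\nu\not\equiv 0$), the monotonicity of $m_\mathsf{QD}$, and the evaluation at $r=\kappa^2=\pi/(4\fk{a})$ simply make explicit what the paper's one-line argument leaves implicit.
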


Therefore the probability that $\sf{N}_+$ exceeds $\lambda$ times its mean decays exponentially, with the leading-order decay governed solely by the occupation law of the site closest to the origin.

\section{Probabilistic representation}
\label{sec:intro-proof}

As described in the introduction, we consider bosonic spin particles on $\Lambda\subset\bb{Z}_\kappa^d$ represented by the lattice Fock space 
\[
	\fk{h}_\Lambda = \cl{F}_{\sym}(\ell^2(\Lambda{\times}\{+,-\}))\cong \bigotimes_{p \in \Lambda} \fk{h}_p,\qquad \fk{h}_p = \cl{F}_{\sym}(\bb{S}).
\]
Here $\bb{S}=\text{span}_\bb{C}\{\ket{+},\ket{-}\}\cong\bb{C}^2$ is the local spin space. In this section, we prove Theorem \ref{thm:N+_bogo} on the distribution of the particle number operator
\[
	\sf{N}_\Lambda = \sum_{p\in \Lambda} \sf{N}_p,\qquad \sf{N}_p = \sf{a}_{(p,+)}^*\sf{a}_{(p,+)} + \sf{a}_{(p,-)}^* \sf{a}_{(p,-)}\,,
\]
with respect to locally interacting quasi-free states $\varrho_\nu = \langle\Uppsi_\nu,\cdot\,\Uppsi_\nu\rangle$ as defined in Definition~\ref{def:bogostate}, where, for any $\nu\in\ell_\bb{R}^2(\Lambda)$,
\[
	\Uppsi_\nu = \exp(\sf{A}_\nu)\Upomega,\qquad \sf{A}_\nu = \sum_{p\in\Lambda} \nu_p \big[ \sf{a}_{(p,+)}^* \sf{a}_{(p,-)}^* - \sf{a}_{(p,+)} \sf{a}_{(p,-)}\big].
\]

    The proof of Theorem~\ref{thm:N+_bogo} is based on the observation that since $\sf{A}_\nu$ is a sum of terms 
    \[
    	\sf{A}_{\nu_p}=\nu_p \big[ \sf{a}_{(p,+)}^* \sf{a}_{(p,-)}^* - \sf{a}_{(p,+)} \sf{a}_{(p,-)}\big],\qquad p\in\Lambda,
    \]
     involving only the creation and annihilation operators at the site $p\in\Lambda$, for which $\sf{A}_{\nu_p}$ and $\sf{A}_{\nu_q}$ commute when $p\ne q$, one can express the locally interacting quasi-free vector as
\[
	\Uppsi_\nu = \prod_{p \in \Lambda} \exp(\sf{A}_{\nu_p})\Upomega = \bigotimes_{p\in\Lambda} \exp(\nu_p \sf{A}_\mathrm{loc})\Upomega_p\,,\qquad \sf{A}_\mathrm{loc} := \sf{a}_{+}^*\sf{a}_{-}^* - \sf{a}_{+}\sf{a}_{-},
\]
where $\Upomega_p$ is the vacuum vector in $\fk{h}_p$ for each $p\in\Lambda$ and $\sf{a}_{\pm}$ ($\sf{a}_{\pm}^*$) are the annihilation (creation) operators on the single spin-particle Fock space $\fk{h}_p=\cl{F}_{\sym}(\bb{S})$.

Moreover, since the $\sf{N}_p$ and $\exp(\sf{A}_{\nu_q})$ commute for $p\ne q$, we then easily find that
\begin{align}\label{eq:tensorization}
	\begin{aligned}
	\bb{P}_{\Uppsi_\nu}(\sf{N}_p\in E) &= \langle \Uppsi_\nu,\mathbbm{1}_E(\sf{N}_p)\Uppsi_\nu\rangle = \langle \Upomega,\exp(-\sf{A}_{\nu_p})\mathbbm{1}_E(\sf{N}_p)\exp(\sf{A}_{\nu_p})\Upomega\rangle \\
	&= \langle \Uppsi_{\nu_p},\mathbbm{1}_E(\sf{N}_\mathrm{loc})\Uppsi_{\nu_p}\rangle_{\fk{h}_p},\qquad \Uppsi_{\nu_p} := \exp(\nu_p \sf{A}_\mathrm{loc})\Upomega_p\in\fk{h}_p,
	\end{aligned}
\end{align}
for any Borel set $E\in \cl{B}(\bb{R})$, 
where $\sf{N}_\mathrm{loc} = \sf{a}_{+}^*\sf{a}_{+} + \sf{a}_{-}^* \sf{a}_{-}$ is the number operator and $\Upomega_p$ is the vacuum vector on the single spin-particle Fock space $\fk{h}_p$.

Using a disentanglement formula for $\exp(\nu_p \sf{A}_\mathrm{loc})$ that we recall and prove in Section \ref{subsec:disentanglement}, we derive an explicit form of the single spin-particle Fock vector $\Uppsi_{\nu_p}$. This allows us to determine the distribution of $\sf{N}_\mathrm{loc}$ in the quasi-free vector $\Uppsi_{\nu_p}$. Together with \eqref{eq:tensorization}, we obtain the distribution of $\sf{N}_p$ in $\varrho_\nu$ and show the independence of $\sf{N}_p$ and $\sf{N_q}$ for $p\ne q$ in Section~\ref{sec:Np}. Finally, we use these results to prove Theorem \ref{thm:N+_bogo}.

\subsection{Disentanglement Formula}
\label{subsec:disentanglement}

    In this section, we prove a disentanglement formula as a preliminary result for the proof of Theorem \ref{thm:N+_bogo}. 
    Even though the formula is well known to experts in the field, where it appears as the \emph{normal form} of the displacement operator of $\fk{su}(1,1)$ (see \cite[\S5]{Perelomov})\footnote{We note that \cite[Eq.~(5.2.7)]{Perelomov} contains a minor typographical error: the coefficient of $\sf{K}_0$ there has the opposite sign from the exponential formula in Proposition~\ref{prop:disentanglement_formula}---the correct sign is the one that ensures the unitarity of the left-hand side.}, we shall recall it for the sake of completeness. We formulate and prove the disentanglement formula in a more general setting in Proposition~\ref{prop:disentanglement_formula} and apply it in Lemma~\ref{cor:bogo-dis} to the quasi-free states $\Uppsi_\nu$ of the form \eqref{eq:bogostate}. 
 
    \begin{proposition}
    \label{prop:disentanglement_formula}
        Let $\cl{H}$ be a separable Hilbert space, $\sf{K}_\pm,\sf{K}_0$ be three closed operators, admitting a common dense subspace of analytic vectors, such that $\sf{K}_-^* = \sf{K}_+$, and satisfying the $\fk{su}(1,1)$ commutation relations
        \[ 
            [\sf{K}_+,\sf{K}_-] = -2 \sf{K}_0 \qquad [\sf{K}_0,\sf{K}_\pm] = \pm \sf{K}_\pm \; 
        \]
        on this common dense subspace. Then for any $\nu \in \bb{R}$, we have
        \[
            \exp(\nu(\sf{K}_+ - \sf{K}_-)) = \exp(\tanh(\nu) \sf{K}_+) \exp(-2 \ln \cosh(\nu) \sf{K}_0) \exp(-\tanh(\nu) \sf{K}_-) \; . 
        \]
    \end{proposition}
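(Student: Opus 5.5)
The plan is to prove the identity by the standard ODE (or "Wei--Norman") method. For $s\in\bb{R}$ define the two operator-valued functions
\[
	U(s) := \exp\bigl(s\nu(\sf{K}_+-\sf{K}_-)\bigr),\qquad V(s) := \exp\bigl(a(s)\sf{K}_+\bigr)\exp\bigl(b(s)\sf{K}_0\bigr)\exp\bigl(c(s)\sf{K}_-\bigr),
\]
with $a(s)=\tanh(s\nu)$, $b(s)=-2\ln\cosh(s\nu)$ and $c(s)=-\tanh(s\nu)$. Both satisfy $U(0)=V(0)=\mathrm{I}$. We show that, on the common dense set $\cl{D}$ of analytic vectors, both solve the same linear equation $\tfrac{d}{ds}W(s)\psi=\nu(\sf{K}_+-\sf{K}_-)W(s)\psi$. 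Uniqueness then gives $U(1)=V(1)$ on $\cl{D}$, and the identity extends to all of $\cl{H}$ by density and boundedness. Since $U(s)$ is unitary, $V(1)$ extends to a bounded operator.

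The algebraic core is to differentiate $V$ and move every generator to the left of the product. Differentiating gives
\[
V'V^{-1}=a'\sf{K}_+ + b'\,e^{a\sf{K}_+}\sf{K}_0e^{-a\sf{K}_+} + c'\,e^{a\sf{K}_+}e^{b\sf{K}_0}\sf{K}_-e^{-b\sf{K}_0}e^{-a\sf{K}_+}.
\]
The conjugations can be evaluated from the commutation relations, because the adjoint series terminate:
\begin{align*}
	e^{a\sf{K}_+}\sf{K}_0e^{-a\sf{K}_+} &= \sf{K}_0-a\sf{K}_+,\\
	e^{b\sf{K}_0}\sf{K}_-e^{-b\sf{K}_0} &= e^{-b}\sf{K}_-,\\
	e^{a\sf{K}_+}\sf{K}_-e^{-a\sf{K}_+} &= \sf{K}_- - 2a\sf{K}_0 + a^2\sf{K}_+,
\end{align*}
using $[\sf{K}_+,\sf{K}_-]=-2\sf{K}_0$ and $[\sf{K}_+,\sf{K}_0]=-\sf{K}_+$. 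Matching the coefficients of $\sf{K}_+,\sf{K}_0,\sf{K}_-$ with $\nu\sf{K}_+-\nu\sf{K}_-$ yields the Riccati system
\[
	a' - ab' + c'e^{-b}a^2=\nu,\qquad b'-2ac'e^{-b}=0,\qquad c'e^{-b}=-\nu.
\]
With $a=\tanh(s\nu)$ the first equation reads $a'=\nu(1-a^2)$, which is solved by $\tanh$. The second gives $b'=-2\nu a$, so $b=-2\ln\cosh(s\nu)$. The third gives $c'=-\nu\,\sech^2(s\nu)$, so $c=-\tanh(s\nu)$. A quick check confirms that the sign of the $\sf{K}_0$ coefficient, $-2\ln\cosh$, matches the statement; this is consistent with the footnote about Perelomov's sign.

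The main obstacle is analytic rather than algebraic. The operators $\sf{K}_\pm$ are unbounded, so we must justify the manipulations on $\cl{D}$: that $e^{a\sf{K}_+}$, $e^{b\sf{K}_0}$ and $e^{c\sf{K}_-}$ make sense on analytic vectors, that they preserve a suitable core, and that the termwise differentiation and conjugation formulas hold there. The first thing to try is to work with the power series on analytic vectors, taking $s$ small first so that the series converge. In this setting the conjugation identities are identities of formal power series, and they hold because both sides satisfy the same first-order ODE in $a$ (respectively $b$). One then obtains $U(s)=V(s)$ for small $|s|$ and extends to all $s$ by the one-parameter group property, or by analytic continuation in $\nu$.

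If domain issues become too delicate, there is a fallback that suffices for the paper's application. There $\sf{K}_+=\sf{a}_+^*\sf{a}_-^*$, $\sf{K}_-=\sf{a}_+\sf{a}_-$ and $\sf{K}_0=\tfrac12(\sf{N}_{\mathrm{loc}}+1)$. These operators leave invariant the finite-particle subspaces of each sector with fixed $\sf{a}_+^*\sf{a}_+-\sf{a}_-^*\sf{a}_-$, and finite-particle vectors are analytic. In that concrete setting the verification can be carried out directly on the basis vectors $(\sf{a}_+^*\sf{a}_-^*)^n\Upomega$, which is all that Lemma~\ref{cor:bogo-dis} needs.
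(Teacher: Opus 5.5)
Your proposal is correct and is essentially the paper's own proof: the same product ansatz, the same three terminating conjugation formulas, and the same coefficient-matching ODE system solved by $\tanh$ and $-2\ln\cosh$, closed by uniqueness for $\dot U=(\sf{K}_+-\sf{K}_-)U$ with $U(0)=\mathrm{I}$. The differences are cosmetic. You keep the $\sf{K}_-$ coefficient $c$ as an independent unknown and parametrize by $s\nu$, whereas the paper imposes the coefficient $-f$ from the outset and solves three consistent equations for two unknowns. You also discuss the domain issues more explicitly than the paper, which simply works on a common stable core.
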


    \begin{proof}
        Using the commutation relations, we obtain the following on a common stable core 
        \begin{align}\label{prop:ccr:disentanglement}
        \begin{aligned}
            e^{t \sf{K}_+} \sf{K}_0 e^{-t \sf{K}_+} &= \sf{K}_0 - t \sf{K}_+ \\ 
            e^{t \sf{K}_0} \sf{K}_- e^{-t \sf{K}_0} &= e^{-t} \sf{K}_- \\
            e^{t \sf{K}_+} \sf{K}_- e^{-t \sf{K}_+} &= \sf{K}_- - 2 t \sf{K}_0 + t^2 \sf{K}_+\,.
        \end{aligned} 
        \end{align}
        Let $f,h$ be two $C^1$-functions, satisfying $f(0) = h(0) = 0$. Consider the unitary operator
        \[
            U(\nu) := e^{f(\nu) \sf{K}_+} e^{h(\nu) \sf{K}_0} e^{-f(\nu) \sf{K}_-} \; .  
        \]
        Our goal is to choose appropriate functions $f$ and $h$ such that $\dot{U} = (\sf{K}_+ - \sf{K}_-) U$, which will imply that $U(\nu) = e^{\nu(\sf{K}_+-\sf{K}_-)}$, since $U(0) = I$. To do so, we compute
        \[
            \dot{U} = \biggl ( f' \sf{K}_+ + \underset{(1)}{\underbrace{h' e^{f \sf{K}_+} \sf{K}_0 e^{-f \sf{K}_+}}} - \underset{(2)}{\underbrace{f' e^{f \sf{K}_+} e^{h \sf{K}_0} \sf{K}_- e^{-h \sf{K}_0} e^{-f \sf{K}_+}}} \biggr) U\,  . 
        \]
        Using the relations \eqref{prop:ccr:disentanglement}, we have that
        \begin{align*}
            (1) &= h' (\sf{K}_0 - f \sf{K}_+)\,, \\
            (2) &= e^{-h} f' [\sf{K}_- - 2 f \sf{K}_0 + f^2 \sf{K}_+]\,,
        \end{align*}
        and therefore,
        \[
            \dot{U} = \Bigl([f' - h' f - e^{-h} f'f^2] \sf{K}_+ - e^{-h} f' \sf{K}_- + [h' + 2 f f' e^{-h}] \sf{K}_0\Bigr)U.
        \]
     	We obtain the system of ordinary differential equations 
        \begin{align*}
        	f' - h' f - e^{-h} f'f^2 = 1\,,\qquad e^{-h} f' = 1\,,\qquad h' + 2 f f' e^{-h} &= 0\,.
        \end{align*}
        Combining the second and third equations gives $h' = -2 f$, together with the first equation, this yields $f' + f^2 = 1$, hence $f = \tanh(\nu)$, and $h = -2 \ln \cosh(\nu)$. One can then check that this is a genuine solution.   
    \end{proof}

    We now apply Proposition~\ref{prop:disentanglement_formula} to a locally interacting quasi-free vector to obtain

    \begin{lemma}\label{cor:bogo-dis}
        Let $\nu\in\bb{R}$ and $\Uppsi_{\nu}=\exp(\nu\sf{A}_\mathrm{loc})\Upomega\in \cl{F}_{\sym}(\bb{S})$ be a locally interacting quasi-free vector on the single spin-particle Fock space. Then, $\Uppsi_{\nu}$ takes the explicit form
        \begin{align}\label{eq:disentanglement-vector}
            \Uppsi_\nu = \sech(\nu) \sum_{n\ge 0} \tanh^n(\nu) \ket{n,n},
        \end{align}
        where $(\ket{n,m})_{n,m \geq 0}$ is the eigenbasis of the number operator $\sf{N}_\mathrm{loc}$ on $\cl{F}_{\sym}(\bb{S})$.
    \end{lemma}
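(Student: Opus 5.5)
We apply Proposition~\ref{prop:disentanglement_formula} on the single spin-particle Fock space $\cl{F}_{\sym}(\bb{S})$ with the choice
\[
	\sf{K}_+ := \sf{a}_+^*\sf{a}_-^*,\qquad \sf{K}_- := \sf{a}_+\sf{a}_-,\qquad \sf{K}_0 := \tfrac12\bigl(\sf{a}_+^*\sf{a}_+ + \sf{a}_-^*\sf{a}_- + 1\bigr) = \tfrac12(\sf{N}_\mathrm{loc}+1),
\]
so that $\sf{A}_\mathrm{loc} = \sf{K}_+ - \sf{K}_-$ and $\sf{K}_-^* = \sf{K}_+$.

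\textbf{Step 1: hypotheses of Proposition~\ref{prop:disentanglement_formula}.} Using \eqref{eq:CCR}, one first checks the commutation relations
\[
	[\sf{K}_+,\sf{K}_-] = \sf{a}_+^*\sf{a}_+ \cdot 1 - \ldots = -(\sf{a}_+^*\sf{a}_+ + \sf{a}_-^*\sf{a}_- + 1) = -2\sf{K}_0,
	\qquad [\sf{N}_\mathrm{loc},\sf{K}_\pm] = \pm 2\sf{K}_\pm,
\]
and the second identity gives $[\sf{K}_0,\sf{K}_\pm]=\pm\sf{K}_\pm$. As the common dense subspace we take the finite linear span of the occupation basis $\ket{n,m}$. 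Each $\sf{K}_\bullet$ maps this span into itself with $\|\sf{K}_\pm\ket{n,m}\|\le (n+m+2)$. Hence vectors of bounded particle number are analytic: the series $\sum_k \|\sf{A}_\mathrm{loc}^k\phi\|t^k/k!$ grows at most like $\sum_k (C+2k)^k t^k/k!$, which converges for small $t$.

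\textbf{Step 2: apply the formula to the vacuum.} With $\Upomega=\ket{0,0}$ we have $\sf{K}_-\Upomega=0$, so $e^{-\tanh(\nu)\sf{K}_-}\Upomega=\Upomega$. Next, $\sf{K}_0\Upomega=\tfrac12\Upomega$, so $e^{-2\ln\cosh(\nu)\sf{K}_0}\Upomega = e^{-\ln\cosh\nu}\Upomega = \sech(\nu)\Upomega$. Finally, $(\sf{a}_+^*\sf{a}_-^*)^n\ket{0,0} = n!\,\ket{n,n}$ for normalized $\ket{n,n}$, because $(\sf{a}_\pm^*)^n\ket{0}=\sqrt{n!}\ket{n}$. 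Therefore
\[
	e^{\tanh(\nu)\sf{K}_+}\Upomega = \sum_{n\ge0}\tanh^n(\nu)\ket{n,n},
\]
which yields \eqref{eq:disentanglement-vector}. As a consistency check, $\sech^2(\nu)\sum_n\tanh^{2n}(\nu)=1$, as unitarity requires.

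\textbf{Main obstacle.} The main difficulty is justifying that the operator identity of Proposition~\ref{prop:disentanglement_formula} may be applied to $\Upomega$ with the exponentials of unbounded operators interpreted correctly. The factor $e^{\tanh(\nu)\sf{K}_+}$ is not bounded. I would handle this by noting that Step 2 only uses its action on $\Upomega$, where the series converges in norm because $|\tanh\nu|<1$.

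Alternatively, to avoid these domain issues altogether, I would verify directly that the explicit vector $\Phi(\nu):=\sech(\nu)\sum_n \tanh^n(\nu)\ket{n,n}$ solves $\partial_\nu\Phi = \sf{A}_\mathrm{loc}\Phi$ with $\Phi(0)=\Upomega$. This reduces to the coefficient identity
\[
	\tfrac{d}{d\nu}\bigl[\sech\nu\tanh^n\nu\bigr] = n\,\sech\nu\tanh^{n-1}\nu - (n+1)\,\sech\nu\tanh^{n+1}\nu,
\]
which follows from $\sf{K}_+\ket{n-1,n-1}=n\ket{n,n}$ and $\sf{K}_-\ket{n+1,n+1}=(n+1)\ket{n,n}$. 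Uniqueness of solutions of this equation, given by Stone's theorem since $\sf{A}_\mathrm{loc}$ is skew-adjoint with this analytic core, then gives $\Phi(\nu)=e^{\nu\sf{A}_\mathrm{loc}}\Upomega$.
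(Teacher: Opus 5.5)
Your argument is the paper's: you use the same realization $\sf{K}_+=\sf{a}_+^*\sf{a}_-^*$, $\sf{K}_-=\sf{a}_+\sf{a}_-$, $\sf{K}_0=\tfrac12(\sf{N}_\mathrm{loc}+1)$, apply Proposition~\ref{prop:disentanglement_formula} to $\Upomega$ using $\sf{K}_-\Upomega=0$ and $\sf{K}_0\Upomega=\tfrac12\Upomega$, and expand $e^{\tanh(\nu)\sf{K}_+}\Upomega=\sum_{n\ge0}\tanh^n(\nu)\ket{n,n}$. Your additional checks (the commutators, analytic vectors on the finite-particle span) and your backup verification that the explicit vector $\Phi(\nu)$ solves $\partial_\nu\Phi=\sf{A}_\mathrm{loc}\Phi$, $\Phi(0)=\Upomega$, with uniqueness from Stone's theorem, are correct; the latter settles the domain issue for the unbounded factor $e^{\tanh(\nu)\sf{K}_+}$, which the paper leaves implicit.
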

    \begin{proof}
        Consider the operators $\sf{K}_+ = \sf{a}_+^* \sf{a}_-^*$, $\sf{K}_- = \sf{a}_+ \sf{a}_-$ and $\sf{K}_0 = \frac{1}{2}(\sf{N}_\mathrm{loc} + 1)$. Since these operators satisfy the $\fk{su}(1,1)$ commutation relations, we can use Proposition~\ref{prop:disentanglement_formula}, which gives that
        \begin{align*}
            \Uppsi_\nu &= \exp(\nu(\sf{K}_+ - \sf{K}_-)) \Upomega \\
            &= \exp(\tanh(\nu) \sf{K}_+) \exp(-2 \ln(\cosh(\nu)) \sf{K}_0) \exp(-\tanh(\nu) \sf{K}_-) \Upomega\,.
        \end{align*}
        Since $\sf{K}_-$ only annihilates particles, $\sf{K}_- \Upomega = 0$, and we get that $\exp(-\tanh(\nu) \sf{K}_-) \Upomega = \Upomega$. Similarly, we have $\sf{K}_0 \Upomega = \frac{1}{2} \Upomega$ since $\sf{N} \Upomega = 0$, as there are no particles in the vacuum state. Hence, we have $\exp(-2 \ln(\cosh(\nu)) \sf{K}_0) \Upomega = e^{-\ln(\cosh(\nu))} \Upomega = \sech(\nu) \Upomega$. Consequently, we arrive at
        \[
            \Uppsi_\nu = \sech(\nu) \exp(\tanh(\nu) \sf{K}_+) \Upomega.
        \]
        We now expand the exponential. We have, for $\tau := \tanh(\nu)$,
        \begin{align*}
            \exp(\tau\sf{K}_+) \Upomega &= \left ( \sum_{n \ge 0}\frac{\tau^n}{n!} \sf{K}_+^n \right ) \Upomega
            = \left ( \sum_{n \ge 0} \frac{\tau^n}{n!} (\sf{a}_+^*)^n (\sf{a}_-^*)^n \right ) \ket{0,0} 
            = \sum_{n \ge 0} \tau^n \ket{n,n},
        \end{align*}
        where we used the fact that for all $n,m \in \bb{N}_0$ $\sf{a}_+^* \ket{n,m} = \sqrt{n+1} \ket{n+1,m}$ and $\sf{a}_-^* \ket{n,m} = \sqrt{m+1} \ket{n,m+1}$. Hence, we conclude that $\Uppsi_\nu$ takes the asserted form.
     \end{proof}

\subsection{Properties of $\sf{N}_p$ }
\label{sec:Np}

In this section, we prove that the local number operators $\mathsf{N}_p$ in a locally interacting quasi-free vector $\Uppsi_\nu$ from Definition \ref{def:bogostate} are mutually independent and geometrically distributed random variables for every $p \in \Lambda$.

\begin{lemma}
\label{lemma:Np}
	Let $\nu \in \ell^2(\Lambda)$ and $\Uppsi_\nu$ be the corresponding locally interacting quasi-free vector given by \eqref{eq:bogostate}. Then the local number operators $\sf{N}_p$ in the vector $\Uppsi_\nu$
	\begin{enumerate}[(i)]
		\item share the same distribution as the random variable $2G_p$, where $G_p\sim \mathrm{Geo}(\sech^2(\nu_p))$ is a geometric random variables for every $p\in\Lambda$, and are
		\item mutually independent.
	\end{enumerate}
\end{lemma}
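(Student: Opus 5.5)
The plan is to reduce everything to the single-site computation of Lemma~\ref{cor:bogo-dis}, using the tensor-product structure $\fk{h}_\Lambda\cong\bigotimes_{p\in\Lambda}\fk{h}_p$ and the factorization $\Uppsi_\nu=\bigotimes_{p}\Uppsi_{\nu_p}$ recalled before \eqref{eq:tensorization}. Under this identification $\sf{N}_p$ acts as $\sf{N}_\mathrm{loc}$ on the $p$-th factor and as the identity elsewhere. Hence the family $(\sf{N}_p)_{p\in\Lambda}$ consists of commuting self-adjoint operators. Their joint spectral measure in $\Uppsi_\nu$ is therefore well defined, and it is what the statement is about.

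For part (i), I use \eqref{eq:tensorization} to write $\bb{P}_{\Uppsi_\nu}(\sf{N}_p\in E)=\langle \Uppsi_{\nu_p},\mathbbm{1}_E(\sf{N}_\mathrm{loc})\Uppsi_{\nu_p}\rangle$. Then I insert the explicit form \eqref{eq:disentanglement-vector}. Since $\ket{n,n}$ is an eigenvector of $\sf{N}_\mathrm{loc}$ with eigenvalue $2n$, we have $\mathbbm{1}_E(\sf{N}_\mathrm{loc})\ket{n,n}=\mathbbm{1}_E(2n)\ket{n,n}$. The basis vectors are orthonormal, so this gives
\[
\bb{P}_{\Uppsi_\nu}(\sf{N}_p\in E)=\sum_{n\ge0}\mathbbm{1}_E(2n)\,\sech^2(\nu_p)\tanh^{2n}(\nu_p)=\bb{P}(2G_p\in E).
\]
This is exactly the law of $2G_p$ with $G_p\sim\mathrm{Geo}(\sech^2(\nu_p))$. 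As a sanity check, the weights sum to one since $\sech^2+\tanh^2\cdot\sech^2/(1-\tanh^2)\ldots$, i.e.\ $\sum_n\tanh^{2n}=\cosh^2$.

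For part (ii), independence means that for every finite set $F\subset\Lambda$ and Borel sets $(E_p)_{p\in F}$,
\[
\Big\langle \Uppsi_\nu,\prod_{p\in F}\mathbbm{1}_{E_p}(\sf{N}_p)\Uppsi_\nu\Big\rangle=\prod_{p\in F}\bb{P}_{\Uppsi_\nu}(\sf{N}_p\in E_p).
\]
The operator $\prod_{p\in F}\mathbbm{1}_{E_p}(\sf{N}_p)$ is the elementary tensor $\bigotimes_{p\in F}\mathbbm{1}_{E_p}(\sf{N}_\mathrm{loc})\otimes\mathrm{I}$, and $\Uppsi_\nu$ is a product vector with normalized factors. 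The inner product therefore factorizes into $\prod_{p\in F}\langle\Uppsi_{\nu_p},\mathbbm{1}_{E_p}(\sf{N}_\mathrm{loc})\Uppsi_{\nu_p}\rangle$, and by (i) each factor is the corresponding one-site probability. Independence of an infinite family is by definition independence of all finite subfamilies, so this concludes (ii).

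The main obstacle is rigor in the infinite product rather than any computation. One has to make sure that $\Uppsi_\nu=\prod_p e^{\sf{A}_{\nu_p}}\Upomega$ is a genuine vector in the infinite tensor product with respect to the vacuum reference sequence. For this I would note that $\langle\Upomega_p,\Uppsi_{\nu_p}\rangle=\sech(\nu_p)$ and that $\sum_p(1-\sech(\nu_p))\lesssim\sum_p\nu_p^2<\infty$ because $\nu\in\ell^2$. This guarantees that the finite truncations $\prod_{p\in F}e^{\sf{A}_{\nu_p}}\Upomega$ converge in norm to $\Uppsi_\nu$. Alternatively, I would simply compute the finite-$F$ identities for truncated vectors and pass to the limit, which suffices since $\mathbbm{1}_E(\sf{N}_p)$ is bounded. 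The remark following Definition~\ref{def:bogostate} allows taking this construction as standard.
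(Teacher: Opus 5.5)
Your proposal is correct and follows essentially the same route as the paper: you reduce via \eqref{eq:tensorization} and the disentangled form \eqref{eq:disentanglement-vector} to the single-site law, and you obtain independence from the factorization of expectations of products of local spectral projections in the product vector $\Uppsi_\nu$. The differences are minor. You treat arbitrary Borel sets and arbitrary finite families directly, where the paper uses point masses and two sites. You also justify the infinite product via $\sum_p(1-\sech(\nu_p))<\infty$, a step the paper leaves to standard references.
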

\begin{proof}
We start by deriving the distribution of $\sf{N}_p$ in $\varrho_\nu$ for any fixed $p\in\Lambda$. Recall from \eqref{eq:tensorization} that
\[
	\bb{P}_{\Uppsi_\nu}(\sf{N}_p = k) = \langle \Uppsi_{\nu_p},\mathbbm{1}_k(\sf{N}_\mathrm{loc})\Uppsi_{\nu_p}\rangle,\qquad \Uppsi_{\nu_p} := \exp(\nu_p \sf{A}_\mathrm{loc})\Upomega_p\in\fk{h}_p.
\]
Using the disentanglement formula \eqref{eq:disentanglement-vector} and expressing $\sf{N}_\mathrm{loc}$ in its eigenbasis $(\ket{j,l})_{j,l\ge 0}$, we then obtain for any $k\in\bb{N}_0$,
\begin{align*}
	\langle \Uppsi_{\nu_p},\mathbbm{1}_k(\sf{N}_\mathrm{loc})\Uppsi_{\nu_p}\rangle &= \sech^2(\nu_p)\sum_{n,m\ge 0} \tanh^{m+n}(\nu_p)\bra{m,m} \mathbbm{1}_{k}(\sf{N}_\mathrm{loc})\ket{n,n} \\
	&= \sech^2(\nu_p)\sum_{n,m\ge 0} \sum_{j+l = k} \tanh^{m+n}(\nu_p)\bra{m,m}\ket{j,l}\bra{j,l}\ket{n,n} \\
	&=\sech^2(\nu_p)\sum_{2n = k} \tanh^{2n}(\nu_p) 
	= \begin{cases}\sech^2(\nu_p)\tanh^k(\nu_p) & \text{for $k$ even,} \\
 	0 &\text{for $k$ odd.}	
 \end{cases}
\end{align*}
Since $G_p\sim \text{Geo}(\sech^2(\nu_p))$ satisfies $\bb{P}(G_p=k) = \sech^2(\nu_p)\tanh^{2k}(\nu_p)$ for every $k\in\bb{N}_0$, we then conclude that $\mathbb{P}_{\Uppsi_\nu}( \mathsf{N}_p = k ) = \bb{P}(2G_p=k)$ as asserted. This single-site law is the one obtained, in the language of coherent states for the discrete series of $\fk{su}(1,1)$, in \cite[\S5]{Perelomov}.

Next, we prove that the random variables $(\mathsf{N}_p)_{p \in \Lambda}$ are mutually independent. To do so, it suffices to show that any finite family is independent. For the clarity, we only do it for two sites, but the argument extends immediately to a finite number of sites. Let us show that $\mathsf{N}_p$ and $\mathsf{N}_q$ are independent for $p \not= q$. For this, we write 
\[
	\mathbb{P}_{\Uppsi_\nu} \big( \mathsf{N}_p = k, \mathsf{N}_q = \ell \big) = \langle \Uppsi_\nu, \mathbbm{1}_{k}(\sf{N}_p) \mathbbm{1}_{\ell}(\sf{N}_q) \Uppsi_\nu \rangle  \; . 
\]
With similar arguments as before, we obtain
\begin{align*}
	\mathbb{P}_{\Uppsi_\nu} \big( \mathsf{N}_p = k, \mathsf{N}_q = \ell \big) &= \langle \Upomega, e^{-\sf{A}_{\nu_p}} \mathbbm{1}_{k}(\sf{N}_p)  e^{\sf{A}_{\nu_p}} e^{-\sf{A}_{\nu_q}} \mathbbm{1}_{\ell}(\sf{N}_q) e^{\sf{A}_{\nu_q}}\Upomega  \rangle \\
	&= \langle \Upomega_p, e^{- \nu_p \sf{A}_\mathrm{loc}} \mathbbm{1}_{k}(\sf{N}_\mathrm{loc}) e^{\nu_p \sf{A}_\mathrm{loc}} \Upomega_p\rangle \langle \Upomega_q,  e^{-\nu_q \sf{A}_\mathrm{loc}} \mathbbm{1}_{\ell}(\sf{N}_\mathrm{loc}) e^{\nu_q \sf{A}_\mathrm{loc}}\Upomega_q  \rangle.
\end{align*}
In particular, we find that
\[
	\bb{P}_{\Uppsi_\nu} ( \sf{N}_p = k, \sf{N}_q = l ) = \mathbb{P}_{\Uppsi_\nu} ( \sf{N}_p = k )\,\bb{P}_{\Uppsi_\nu} (\sf{N}_q = l ), 
\]
i.e., the random variables $\sf{N}_p$ and $\sf{N}_q$ are independent.
\end{proof}

We are now ready to prove Theorem~\ref{thm:N+_bogo}.

\begin{proof}[Proof of Theorem \ref{thm:N+_bogo}]  
Recall the definitions
\[
    \mathsf{N}_\Lambda = \sum_{p \in \Lambda} \sfN_p\,,\qquad G_\Lambda = \sum_{p\in\Lambda} G_p,\qquad G_p\sim\mathrm{Geo}(\sech^2(\nu_p)).
\]
Then, using the results obtained earlier, we obtain
\begin{align*}
	\bb{E}_{\Uppsi_\nu}[e^{it\sf{N}_\Lambda}] &= \bb{E}_{\Uppsi_\nu}\left[\prod_{p\in\Lambda}e^{it\sf{N}_p}\right] \qquad \text{since $\sf{N}_p$ and $\sf{N}_q$ commute for $p\ne q$,}\\
	&= \prod_{p\in\Lambda}\bb{E}_{\Uppsi_\nu}[e^{it\sf{N}_p}] \qquad \text{due to the independence of $\sf{N}_p$ and $\sf{N}_q$ for $p\ne q$,}\\
	&= \prod_{p\in\Lambda} \langle\Uppsi_{\nu_p},e^{it\sf{N}_p}\Uppsi_{\nu_p}\rangle_{\fk{h}_p} = \prod_{p\in\Lambda} \bb{E}_p[e^{it(2G_p)}]\qquad\text{due to Lemma~\ref{lemma:Np}},\\
	&= \bb{E}\left[\prod_{p\in\Lambda} e^{it(2G_p)} \right] = \bb{E}[e^{it(2G_\Lambda)}]\qquad\text{with $\bb{P}:= \bigotimes_{p\in\Lambda}\bb{P}_p$}.
\end{align*}
Since the characteristic functions coincide, we conclude that the distribution of $\sfN_\Lambda$ in the vector $\Uppsi_\nu$ coincides with the distribution of $2G_\Lambda$, as asserted.
\end{proof}

\section{Asymptotic limits and tail behavior}
\label{sec:ldp-proofs}
    Section~\ref{sec:proof-intermediate} concerns an arbitrary index set $\Lambda$. From Section~\ref{sec:proof-sparse} onwards, we specialize to the lattice $\Lambda^\kappa = \kappa\bb{Z}^d_0$ and drop the subscript $\Lambda$ for notational convenience. We recall the mean and variance of $G^\kappa$, given by
\[
	\mu^\kappa = \sum_{p\in\Lambda^\kappa} \sinh^2(\nu_p),\qquad (\sigma^\kappa)^2 = \sum_{p\in\Lambda^\kappa} \cosh^2(\nu_p)\sinh^2(\nu_p),
\]
such that $\bb{E}_{\Uppsi_\nu^\kappa}[\sf{N}^\kappa] = 2\mu^\kappa$ and $\bb{V}_{\Uppsi_\nu^\kappa}[\sf{N}^\kappa] = 4(\sigma^\kappa)^2$. We further recall the largest of the individual means,
\[
	\overline{\mu}^\kappa := \sup_{p\in\Lambda^\kappa} \sinh^2(\nu_p).
\]

\subsection{Intermediate regime}\label{sec:proof-intermediate}

The goal of this section is to prove Theorem~\ref{thm:ldp-intermediate}, that is, to derive tail estimates for an infinite sum of independent geometric random variables. Throughout, we abbreviate $\mu := \mu_\Lambda$ and $\overline{\mu} := \overline{\mu}_\Lambda$, and we write
\[
	\overline{\tau}^2 := \frac{\overline{\mu}}{1+\overline{\mu}} \in (0,1)
\]
for the ratio of the geometric law with the largest mean.

\begin{proof}[Proof of Theorem~\ref{thm:ldp-intermediate}]
	Since $\nu\in\ell^2(\Lambda)$ and $\nu\not\equiv 0$, the supremum defining $\overline{\mu}$ is attained at some $\overline{p}\in\Lambda$, and $\overline{\mu}>0$.

	\medskip\noindent\emph{Lower bound.} Keeping only the site $\overline{p}$ and using Theorem~\ref{thm:N+_bogo}, we deduce
	\begin{equation}\label{eq:tail-lower-bound}
		\bb{P}_{\Uppsi_\nu}(\sf{N}_\Lambda > 2\lambda \mu) = \bb{P}(G_\Lambda > \lambda \mu) \ge \bb{P}(G_{\overline{p}} > \lambda \mu) = \overline{\tau}^{\,2(\lfloor \lambda \mu\rfloor+1)},
	\end{equation}
	where $\lfloor r \rfloor$ denotes the largest integer smaller than $r\in\bb{R}_+$. Taking logarithms and dividing by $\lambda$ therefore gives
	\[
		\liminf_{\lambda\to \infty} \frac{1}{\lambda}\ln\bb{P}_{\Uppsi_\nu}(\sf{N}_\Lambda > 2\lambda \mu) \ge \mu \ln \overline{\tau}^{\,2} = -\mu\ln\Bigl(1 + \frac{1}{\overline{\mu}}\Bigr) = -\cl{I}_\Lambda.
	\]

	\medskip\noindent\emph{Upper bound.} Chernoff bound yields for any $\lambda > 0$
	\[
		\bb{P}_{\Uppsi_\nu}(\sf{N}_\Lambda > 2\lambda \mu) = \bb{P}(G_\Lambda > \lambda \mu) \le e^{-\lambda t\mu}\,\bb{E}[e^{tG_\Lambda}]
	\]
	for every $t \in [0,-\ln \overline{\tau}^2) = \cl{T}$, hence
	\[
		\ln \bb{P}_{\Uppsi_\nu}(\sf{N}_\Lambda > 2\lambda \mu) \le -\lambda t\mu + \ln \bb{E}[e^{tG_\Lambda}]
		= -\lambda t\mu - \sum_{p\in\Lambda} \ln \bigl(1 - \mu_p(e^t-1)\bigr),
	\]
	with $\mu_p=\sinh^2(\nu_p)$ the mean of $G_p$. Since $r\mapsto -\ln(1-r)$ is convex and vanishes at $r=0$, and since $\mu_p/\overline{\mu}\in[0,1]$, we have
	\[
		-\ln \Bigl(1 - \tfrac{\mu_p}{\overline{\mu}}(e^{t}-1) \overline{\mu} \Bigr) \le - \frac{\mu_p}{\overline{\mu}} \ln \bigl(1 - (e^{t}-1) \overline{\mu} \bigr),
	\]
	and therefore, summing over $p\in\Lambda$,
	\[
		\ln \bb{P}_{\Uppsi_\nu}(\sf{N}_\Lambda > 2\lambda \mu) \le -t\lambda\mu - \frac{\mu}{\overline{\mu}} \ln \bigl(1 - (e^{t}-1) \overline{\mu} \bigr).
	\]
	The right-hand side is minimized over $t\in\bb{R}$ at
	\[
		t_\lambda = \ln\frac{\lambda(1+\overline{\mu})}{1+ \lambda \overline{\mu}}, \quad \text{which belongs to $\cl{T}$ for all $\lambda \geq 1$.}
	\]
    For this value we have $1-(e^{t_\lambda}-1)\overline{\mu} = (1+\overline{\mu})/(1+\lambda\overline{\mu})$. Substituting gives the explicit bound
	\begin{equation}\label{eq:tail-upper-bound}
		\ln \bb{P}_{\Uppsi_\nu}(\sf{N}_\Lambda > 2\lambda \mu) \le -\lambda\mu \ln\frac{\lambda(1+\overline{\mu})}{1+\lambda\overline{\mu}} - \frac{\mu}{\overline{\mu}}\ln\frac{1+\overline{\mu}}{1+\lambda\overline{\mu}},
	\end{equation}
	valid for every $\lambda>0$. Dividing by $\lambda$ and letting $\lambda\to+\infty$, the first term converges to $-\mu\ln(1+1/\overline{\mu})$ and the second vanishes, so that
	\[
		\limsup_{\lambda\to \infty}\frac{1}{\lambda} \ln \bb{P}_{\Uppsi_\nu}(\sf{N}_\Lambda > 2\lambda \mu) \le -\mu\ln\Bigl(1+\frac{1}{\overline{\mu}}\Bigr) = -\cl{I}_\Lambda.
	\]
	Together, the two bounds yield the assertion.
\end{proof}

\subsection{Sparse regime, $\kappa\to+\infty$}\label{sec:proof-sparse}
Recall the log-Laplace transform for the number operator $\sf{N}_\Lambda^\kappa$ in the quasi-free vector $\Uppsi_\nu^\kappa$, given by
\[
	\mathscr{L}_\kappa(t) = \begin{cases}
		\displaystyle -\sum_{p\in \Lambda^\kappa} \ln\left(1 - \mu_p^\kappa(e^{2t}-1)\right) &\text{for $t< \tfrac{1}{2}\ln\bigl(1+1/\overline{\mu}^\kappa\bigr)$},\\
		+\infty & \text{otherwise}.
	\end{cases}
\]

The proof of Theorem~\ref{thm:poisson-sparse}\emph{(ii)} and \emph{(iii)} combines a direct computation of the limit in Theorem~\ref{thm:poisson-sparse}\emph{(i)} with the following general proposition, which converts a scaled log-Laplace limit of this form into the leading-order law and the exponential moments of an integer-valued random variable. Its proof, which is elementary but somewhat lengthy, is given in Appendix~\ref{app:small-numbers}.

\begin{proposition}[Law of small numbers from a scaled log-Laplace limit] \label{prop:small-numbers}
    Let $(X_N)_{N \geq 0}$ be a sequence of $\bb{N}_0$-valued random variables, and let $(r_N)_{N\ge 0}$ be a sequence of positive real numbers converging to $0$ as $N \to +\infty$. Suppose that, for some $\fk{c} > 0$ and some integer $K \geq 1$,
    \begin{equation}\label{eq:prop-laplace-hypothesis}
        \lim_{N \to +\infty} r_N^{-1} \ln \bb{E}[e^{t X_N}] = \fk{c}\,(e^{K t} - 1) \qquad\text{for every $t \in \bb{R}$}.
    \end{equation}
    Then $X_N$ is, to leading order in $r_N$, distributed as $K$ times a Bernoulli random variable of parameter $\fk{c}\,r_N$; that is,
    \begin{equation}\label{eq:prop-two-point}
        \bb{P}(X_N = k) = o(r_N) + \begin{cases} 1 - \fk{c}\, r_N & \text{for $k = 0$}, \\ \fk{c}\, r_N & \text{for $k = K$}, \\ 0 & \text{otherwise.} \end{cases}
    \end{equation}
    Moreover, for every $\theta > 0$ and every $\lambda \in \bb{C}$,
    \begin{equation}\label{eq:prop-moments}
        \lim_{N \to +\infty} \frac{1}{r_N}\, \bb{E}\bigl[X_N^{\theta}\, e^{\lambda X_N}\bigr] = \fk{c}\, K^{\theta} e^{K \lambda}.
    \end{equation}
\end{proposition}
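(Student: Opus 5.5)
The plan is to work with the Laplace transforms $\phi_N(t) := \bb{E}[e^{tX_N}]$ and to extract the individual probabilities by exploiting the fact that the hypothesis holds for \emph{every} real $t$, including large positive and negative ones. Write $p_k^N := \bb{P}(X_N = k)$. The hypothesis gives
\[
	\phi_N(t) = \exp\bigl(r_N \fk{c}(e^{Kt}-1) + o(r_N)\bigr) = 1 + r_N\fk{c}(e^{Kt}-1) + o(r_N)
\]
for each fixed $t$. Consequently,
\[
	\sum_{k\ge 1} p_k^N (e^{tk}-1) = r_N \fk{c}(e^{Kt}-1) + o(r_N),
\]
and dividing by $r_N$ defines finite measures $m_N := r_N^{-1}\sum_{k\ge1} p_k^N \delta_k$ on $\bb{N}$. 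For these, $\int (e^{tk}-1)\,\dd m_N(k) \to \fk{c}(e^{Kt}-1)$ for every $t\in\bb{R}$.

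\textbf{Step 1: total mass and tightness.} Let $t\to-\infty$ after $N\to\infty$. Since $e^{tk}-1 \in [-1, e^{t}-1]$ for $k\ge1$ and $t<0$, we have $-m_N(\bb{N}) \le \int (e^{tk}-1)\,\dd m_N \le (e^t-1)\,m_N(\bb{N})$. This bounds $m_N(\bb{N})$ uniformly, and letting $t\to-\infty$ gives $m_N(\bb{N}) \to \fk{c}$. Positive $t$ gives exponential tightness: $\int e^{tk}\,\dd m_N$ is bounded for every $t>0$. In particular, every subsequence has a further subsequence along which $m_N$ converges weakly, and also against any test function growing slower than $e^{tk}$, to a finite measure $m$ on $\bb{N}$ with $m(\bb{N}) = \fk{c}$ and $\int (e^{tk}-1)\,\dd m = \fk{c}(e^{Kt}-1)$ for all $t$. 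This identity determines $\sum_k m_k e^{tk} = \fk{c}\,e^{Kt}$ for all $t$, and uniqueness of Dirichlet/power-series coefficients forces $m = \fk{c}\,\delta_K$. Hence $m_N \to \fk{c}\,\delta_K$ along the full sequence.

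\textbf{Step 2: conclusions.} Since $\bb{N}$ is discrete, weak convergence gives $p_k^N = r_N\fk{c}\,\mathbbm{1}_{k=K} + o(r_N)$ for $k \ge 1$. Together with $p_0^N = 1 - r_N m_N(\bb{N}) = 1 - \fk{c}\,r_N + o(r_N)$, this is \eqref{eq:prop-two-point}. For \eqref{eq:prop-moments}, note that the $k=0$ term vanishes when $\theta>0$, so
\[
	r_N^{-1}\bb{E}[X_N^\theta e^{\lambda X_N}] = \int k^\theta e^{\lambda k}\,\dd m_N(k).
\]
The integrand is bounded by $C e^{(\Re\lambda+1)k}$, so uniform integrability follows from the exponential bound of Step 1 with $t = \Re\lambda + 2$. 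Dominated convergence of the tails then gives the limit $\fk{c}\,K^\theta e^{K\lambda}$.

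\textbf{Main obstacle.} The delicate point is the uniform-integrability argument: the convergence of $\int (e^{tk}-1)\,\dd m_N$ for a single larger $t$ must control $\int e^{sk}\,\dd m_N$ for all $s<t$ uniformly in $N$. The bound itself is easy, since $e^{tk}$ dominates both $k^\theta e^{sk}$ and $1$ up to constants. The care is needed in passing from ``for each fixed $t$'' to a uniform tail bound, in particular in using $\int e^{tk}\,\dd m_N \le \int (e^{tk}-1)\,\dd m_N + m_N(\bb{N})$ together with the bound on $m_N(\bb{N})$ from Step 1.
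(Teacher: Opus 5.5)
Your argument is correct and is essentially the paper's proof. Both use exponential domination of the normalized point masses obtained from positive $t$, identify subsequential limits through uniqueness of the power-series coefficients in $\sum_k m_k z^k = \fk{c}\,z^K$, and transfer to the probabilities and exponential moments by dominated convergence. The only minor difference is in how the total mass is handled: you get $m_N(\bb{N})\to\fk{c}$ directly by letting $t\to-\infty$, whereas the paper avoids a separate mass bound by using $e^{\lambda k}\le 2(e^{\lambda k}-1)$ and reads off the total mass from the constant coefficient $F(1)=\fk{c}$.
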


We are now ready to prove Theorem~\ref{thm:poisson-sparse}.

\begin{proof}[Proof of Theorem~\ref{thm:poisson-sparse}]
	\emph{(i)} Let $t\in\bb{R}$ be arbitrary, but fixed. Further, let $\varepsilon>0$. By condition \eqref{eq:sparse-condition}, we find some large $\kappa_\varepsilon>0$ such that for every $\kappa>\kappa_\varepsilon$, the following estimate holds:
\begin{align}\label{eq:proof-sparse-estimate}
	\frac{\fk{c}_\nu - \varepsilon}{|p|^\gamma}\le \kappa^\gamma  \sinh^2(\nu(\kappa p)) \le \frac{\fk{c}_\nu + \varepsilon}{|p|^\gamma}\qquad\text{for every $p\in\bb{Z}^d_0$.}
\end{align}
In particular, 
\[
	m^\kappa := \sup_{p\in\bb{Z}^d\backslash\{0\}} \sinh^2(\nu(\kappa p)) \to 0\quad\text{as $\kappa\to +\infty$}.
\]
Now let $\kappa> \kappa_\varepsilon$ be sufficiently large such that $m^\kappa|e^{2t}-1|<1$. Using the elementary estimates for every $r < 1$, and the fact that $|\mu_p^\kappa(e^{2t}-1)| \le m^\kappa|e^{2t}-1|$ (which implies that $\mu_p^\kappa(e^{2t}-1) < 1$) uniformly in $p\in\mathbb{Z}^d_0$, we obtain, for $c_t = 1$ if $t \geq 0$, and $-1$ else:
\[
	\mu_p^\kappa(e^{2t}-1) \le -\ln\left(1 - \mu_p^\kappa(e^{2t}-1)\right) \le \frac{\mu_p^\kappa(e^{2t}-1)}{1 - c_t m^\kappa |e^{2t}-1|}\qquad\text{for every $p\in\bb{Z}^d_0$.}
\]
(the appearance of $c_t$ comes from treating the two cases $e^{2t}-1 \geq 0$ or $\leq 0$). Summing up over $p\in\bb{Z}^d\backslash\{0\}$, multiplying by $\kappa^\gamma$, and using \eqref{eq:proof-sparse-estimate} then gives
\[
	(e^{2t}-1)\sum_{p\in\bb{Z}^d\backslash\{0\}} \frac{\fk{c}_\nu - c_t \varepsilon}{|p|^\gamma} \le \kappa^\gamma\mathscr{L}_\kappa(t)\le \frac{(e^{2t}-1)}{1-c_t m^\kappa |e^{2t}-1|} \sum_{p\in\bb{Z}^d\backslash\{0\}} \frac{\fk{c}_\nu + c_t \varepsilon}{|p|^\gamma}.
\]
Passing first $\kappa\to+\infty$ and then $\varepsilon\to 0$, we then obtain the limit
\[
	\lim_{\kappa\to +\infty} \kappa^\gamma\mathscr{L}_\kappa(t) = \fk{c}_\nu(e^{2t}-1)\sum_{p\in\bb{Z}^d\backslash\{0\}} \frac{1}{|p|^\gamma} = \fk{c}_\nu\zeta_d(\gamma)(e^{2t}-1),
\]
which is $\fk{m}_\nu(e^{2t}-1)$, as asserted.

\medskip
\noindent\emph{(ii)} and \emph{(iii)} Consider a sequence $\kappa_N \to +\infty$, and let $X_N$ have the law of $\sf{N}^{\kappa_N}$ in the quasi-free vector $\Uppsi_{\nu}^{\kappa_N}$. Then $X_N$ is a non-negative integer-valued random variable, and part~(i) gives
    \[
        \kappa_N^\gamma \ln \bb{E}[e^{t X_N}] = \kappa_N^\gamma\, \mathscr{L}_{\kappa_N}(t) \longrightarrow \fk{m}_\nu (e^{2t} - 1)\qquad\text{for every $t\in\bb{R}$}.
    \]
    The assertions then follow immediately from Proposition \ref{prop:small-numbers} applied with $\fk{c} = \fk{m}_\nu$, $K = 2$ and $r_N = \kappa_N^{-\gamma}$.
\end{proof}

The proof of Theorem~\ref{thm:log-ldp} relies on three elementary lemmas: uniform two-sided bounds on every site, a Chernoff bound for the upper tail with the optimal polynomial order and a matching one-site lower bound. Those will be linked to the large-deviation principle through the following Proposition.

\begin{proposition}[From two-sided bounds to large-deviation principle] \label{prop: two_sided_to_LDP}
    Let $(X^\kappa)_{\kappa > 0}$ be a family of random variables valued in $\bb{N}_0$ such that, uniformly in $k \geq 0$, as $\kappa \to +\infty$ (with $\gamma > 0$)
    \[
        \ln \bb{P}(X^\kappa = k) = -\gamma k \ln \kappa + (k+1) O(1).
    \]
    Then $(X^\kappa)_{\kappa > 0}$ satisfies a LDP as $\kappa \to +\infty$ with speed $\ln \kappa$ and good-rate function
    \[
        \mathscr{J}_\gamma(x) = \begin{cases}
            \gamma x & \text{if $x \in \bb{N}_0$}, \\ +\infty & \text{else}. 
        \end{cases}
    \]
    That is, for all $E \in \cl{B}(\mathbb{R})$ 
    \[
        -\inf_{\overset{\circ}{E}} \mathscr{J}_\gamma \leq \liminf_{\kappa \to +\infty} \frac{1}{\ln \kappa} \ln \mathbb{P}(X^\kappa \in E) \leq \limsup_{\kappa \to +\infty} \frac{1}{\ln \kappa} \ln \mathbb{P}(X^\kappa \in E) \leq -\inf_{\overline{E}} \mathscr{J}_\gamma.
    \]
\end{proposition}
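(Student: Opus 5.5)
We prove the upper and lower bounds separately. In both we use that $X^\kappa$ is supported on $\bb{N}_0$, so that $\bb{P}(X^\kappa\in E)=\sum_{k\in E\cap\bb{N}_0}\bb{P}(X^\kappa=k)$. The hypothesis gives a constant $C>0$ and a threshold $\kappa_0$ such that
\[
	-\gamma k\ln\kappa-(k+1)C\le\ln\bb{P}(X^\kappa=k)\le-\gamma k\ln\kappa+(k+1)C
\]
for all $k\in\bb{N}_0$ and all $\kappa\ge\kappa_0$. This uniformity in $k$ is the only ingredient we need.

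\emph{Lower bound.} If $\overset{\circ}{E}\cap\bb{N}_0=\emptyset$, then $\inf_{\overset{\circ}{E}}\mathscr{J}_\gamma=+\infty$ and there is nothing to prove. Otherwise, since $\mathscr{J}_\gamma$ is finite only on $\bb{N}_0$, the infimum over $\overset{\circ}{E}$ is attained at some $k_0\in\overset{\circ}{E}\cap\bb{N}_0$. We then bound
\[
	\ln\bb{P}(X^\kappa\in E)\ge\ln\bb{P}(X^\kappa=k_0)\ge-\gamma k_0\ln\kappa-(k_0+1)C.
\]
Dividing by $\ln\kappa$ and letting $\kappa\to+\infty$ gives $\liminf\ge-\gamma k_0=-\inf_{\overset{\circ}{E}}\mathscr{J}_\gamma$. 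Note that we only need $k_0\in E$, so the interior plays no real role here.

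\emph{Upper bound.} If $\overline{E}\cap\bb{N}_0=\emptyset$, then $\bb{P}(X^\kappa\in E)=0$ and the bound is trivial, with $\ln 0=-\infty$. Otherwise let $k_1:=\min(\overline{E}\cap\bb{N}_0)$, which exists by well-ordering; then $\inf_{\overline{E}}\mathscr{J}_\gamma=\gamma k_1$. Summing the uniform upper bound over $k\ge k_1$ gives
\[
	\bb{P}(X^\kappa\in E)\le\sum_{k\ge k_1}e^{C}\bigl(e^{C}\kappa^{-\gamma}\bigr)^{k}
	=e^{C}\bigl(e^{C}\kappa^{-\gamma}\bigr)^{k_1}\frac{1}{1-e^{C}\kappa^{-\gamma}},
\]
valid once $e^{C}\kappa^{-\gamma}<1/2$. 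Taking logarithms yields $\ln\bb{P}(X^\kappa\in E)\le-\gamma k_1\ln\kappa+O(k_1+1)$, and dividing by $\ln\kappa$ gives $\limsup\le-\gamma k_1$.

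\emph{Main obstacle.} The only delicate point is this upper bound. Here the uniformity of the $O(1)$ in $k$ is essential: it is what allows the geometric series over $k\ge k_1$ to be dominated by its first term. Without uniformity, the tail of the sum could not be controlled. Since $\gamma>0$, the ratio $e^{C}\kappa^{-\gamma}$ tends to $0$, so the series converges for large $\kappa$ and the argument closes.

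\emph{Goodness of the rate function.} Finally, $\mathscr{J}_\gamma$ is a good rate function. It is nonnegative. Its sublevel sets $\{x:\mathscr{J}_\gamma(x)\le a\}=\bb{N}_0\cap[0,a/\gamma]$ are finite, hence compact, and closed, so $\mathscr{J}_\gamma$ is lower semicontinuous.
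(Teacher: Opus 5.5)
Your proof is correct and follows essentially the same route as the paper: you get the lower bound from the single atom $k_0$, the upper bound by summing the uniform point estimates over $k\ge k_1=\min(\overline{E}\cap\bb{N}_0)$, and goodness from the finite sublevel sets $\bb{N}_0\cap[0,a/\gamma]$. The only cosmetic difference is that the paper first packages the summation as a uniform tail estimate $\ln\bb{P}(X^\kappa\ge k)=-\gamma k\ln\kappa+(k+1)O(1)$ and then applies it with $k=k_E$, while you sum the geometric series directly.
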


\begin{proof}
    Passing to the exponential and summing up the estimates over $k \geq n$, we easily see that, uniformly in $k \geq 0$, as $\kappa \to +\infty$
    \[
        \ln \bb{P}(X^\kappa \geq k) = -\gamma k \ln(\kappa) + (k+1) O(1).
    \]
    The function $\mathscr{J}_\gamma$ is a good rate function: it is lower semicontinuous, and its sublevel sets $\{ \mathscr{J}_\gamma \leq M \} = \bb{N}_0 \cap [0,M/\gamma]$ are finite, hence compact. We let $E \in \cl{B}(\mathbb{R})$.
    
    \medskip \noindent \emph{Upper bound.} If $\overline{E} \cap \bb{N}_0 = \emptyset$, then $\bb{P}(X^\kappa \in E) = 0$ for all $\kappa$ (as $X^\kappa \in \bb{N}_0$ almost surely) and the bound holds trivially. Otherwise, let $k_E := \min \{ \overline{E} \cap \bb{N}_0 \}$, so that $\inf_{\overline{E}} \mathscr{J}_\gamma = \gamma k_E$ and $\overline{E} \cap \bb{N}_0 \subset [k_E,+\infty)$. Then for $\kappa \gg 1$
    \[
        \frac{1}{\ln \kappa} \ln \mathbb{P}(X^\kappa \in E) \leq \frac{1}{\ln \kappa} \ln \mathbb{P}(X^\kappa \geq k_E) = -\gamma k_E + \frac{(k_E+1) O(1)}{\ln \kappa} = -\gamma k_E + o(1). 
    \]
    This proves the upper bound.
    
    \medskip \noindent \emph{Lower bound.} If $\overset{\circ}{E} \cap \mathbb{N}_0 = \emptyset$ then the bound is trivial. Otherwise, for all $k \in \overset{\circ}{E} \cap \mathbb{N}_0$ we have for $\kappa \gg 1$
    \[
        \frac{1}{\ln \kappa} \ln \mathbb{P}(X^\kappa \in E) \geq \frac{1}{\ln \kappa} \ln \mathbb{P}(X^\kappa = k) = -\gamma k + o(1),
    \]
    Hence
    \[
        \liminf_{\kappa \to +\infty} \frac{1}{\ln \kappa} \ln \mathbb{P}(X^\kappa \in E) \geq -\gamma k.
    \]
    Optimizing over $k$ yields the lower bound $-\inf_{\overset{\circ}{E}} \mathscr{J}_\gamma$. 
\end{proof}

Hence, to derive Theorem~\ref{thm:log-ldp}, it suffices to establish upper and lower bounds for the site probability. 

\begin{lemma}[Site bounds]\label{lem:site-bounds}
For every $\varepsilon\in(0,\frac{1}{2}]$ there exists $\kappa_\varepsilon\ge 1$ such that for all $\kappa\ge\kappa_\varepsilon$\,,
\begin{equation}\label{eq:site-bounds}
	(1-\varepsilon)\frac{\fk{c}_\nu}{|q|^{\gamma}}
	\le \kappa^\gamma  \sinh^2(\nu(\kappa q))
	\le (1+\varepsilon)\frac{\fk{c}_\nu}{|q|^{\gamma}}\qquad\text{for every $q\in \bb{Z}_0^d$.}
\end{equation}
Consequently, for all $\kappa\ge\kappa_\varepsilon$,
\[
	\overline{\mu}^\kappa \le (1+\varepsilon)\fk{c}_\nu\kappa^{-\gamma},
	\qquad
	m_\kappa \le (1+\varepsilon)\fk{c}_\nu\,\zeta_d(\gamma)\kappa^{-\gamma} < +\infty .
\]
Moreover, if $\kappa$ is additionally so large that $\overline{\mu}^\kappa\le \varepsilon$, then
\begin{equation}\label{eq:tau-mu}
	(1-\varepsilon)\,\mu_p^\kappa \le \tau_p^2 \le \mu_p^\kappa
	\qquad\text{for every $p\in \kappa \bb{Z}^d_0$}.
\end{equation}
\end{lemma}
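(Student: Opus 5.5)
The bound \eqref{eq:site-bounds} will come from condition \eqref{eq:sparse-condition} together with the fact that the lattice $\bb{Z}^d_0$ stays a fixed distance away from the origin. Fix $\varepsilon\in(0,\tfrac12]$. Since $|x|^\gamma\sinh^2(\nu(x))\to\fk{c}_\nu\in(0,\infty)$ as $|x|\to\infty$, there is $R_\varepsilon>0$ with
\[
(1-\varepsilon)\fk{c}_\nu\le |x|^\gamma\sinh^2(\nu(x))\le(1+\varepsilon)\fk{c}_\nu\qquad\text{for all }|x|\ge R_\varepsilon .
\]
Every $q\in\bb{Z}^d_0$ satisfies $|q|\ge1$, so for $\kappa\ge\kappa_\varepsilon:=\max(1,R_\varepsilon)$ we get $|\kappa q|\ge R_\varepsilon$ uniformly in $q$. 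Put $x=\kappa q$ and write $|x|^\gamma=\kappa^\gamma|q|^\gamma$. Dividing by $|q|^\gamma$ then gives \eqref{eq:site-bounds} for every $q\in\bb{Z}^d_0$. This is the only place where continuity and the tail condition enter, and uniformity in $q$ is automatic because $\inf_{q}|\kappa q|=\kappa$.

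The consequences follow by taking a supremum and a sum over $q$. From the upper bound and $|q|\ge1$ we obtain $\overline{\mu}^\kappa=\sup_q\sinh^2(\nu(\kappa q))\le(1+\varepsilon)\fk{c}_\nu\kappa^{-\gamma}$. Summing the upper bound over $q\in\bb{Z}^d_0$ gives $m_\kappa\le(1+\varepsilon)\fk{c}_\nu\zeta_d(\gamma)\kappa^{-\gamma}$. I read $m_\kappa$ as the total mean $\mu^\kappa=\sum_p\mu^\kappa_p$, which is what this bound suggests. The right-hand side is finite because $\gamma>d$, so the Epstein zeta sum converges.

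For \eqref{eq:tau-mu}, recall $\tau_p^2=\tanh^2(\nu_p)$ and $\mu_p^\kappa=\sinh^2(\nu_p)$. The identity $\tanh^2=\sinh^2/\cosh^2=\sinh^2/(1+\sinh^2)$ gives
\[
\tau_p^2=\frac{\mu_p^\kappa}{1+\mu_p^\kappa}.
\]
This immediately yields $\tau_p^2\le\mu_p^\kappa$. If moreover $\overline{\mu}^\kappa\le\varepsilon$, then $1+\mu_p^\kappa\le1+\varepsilon$, so $\tau_p^2\ge\mu_p^\kappa/(1+\varepsilon)\ge(1-\varepsilon)\mu_p^\kappa$, using $(1+\varepsilon)(1-\varepsilon)\le1$. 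Such $\kappa$ exist by the bound on $\overline{\mu}^\kappa$ just proved.

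No step is a real obstacle. The only point needing care is that the threshold $\kappa_\varepsilon$ must not depend on $q$, and the lattice gap $|q|\ge1$ guarantees this.
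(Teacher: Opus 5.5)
Your proof is correct and follows the paper's route. The tail condition \eqref{eq:sparse-condition} gives a radius $R_\varepsilon$ beyond which $|x|^\gamma\sinh^2(\nu(x))$ lies within a factor $1\pm\varepsilon$ of $\fk{c}_\nu$, and the gap $|q|\ge1$ makes $\kappa\ge R_\varepsilon$ a $q$-independent threshold. Taking the supremum and the sum over $q$ gives the bounds on $\overline{\mu}^\kappa$ and $m_\kappa$, and \eqref{eq:tau-mu} follows from $\tau_p^2=\mu_p^\kappa/(1+\mu_p^\kappa)$.

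Reading $m_\kappa$ as the total mean matches the paper's later use ($\sum_p\tau_p^2\le m_\kappa$ in Lemma~\ref{lem:upper-tail}), and $\kappa_\varepsilon=\max(1,R_\varepsilon)$ harmlessly tidies the paper's choice. One small correction: continuity of $\nu$ is not used here, only the tail condition.
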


\begin{proof}
By condition \eqref{eq:sparse-condition} there exists $R_\varepsilon\ge 1$ such that
\[
	(1-\varepsilon)\,\fk{c}_\nu \le |x|^\gamma \sinh^2(\nu(x)) \le (1+\varepsilon)\,\fk{c}_\nu
	\qquad\text{for all } |x|\ge R_\varepsilon .
\]
Set $R_\varepsilon:=\kappa_\varepsilon$. Since every $q\in\bb{Z}^d_0$ satisfies $|q|\ge 1$, every $p=\kappa q\in \kappa \bb{Z}^d_0$ with $\kappa\ge\kappa_\varepsilon$ satisfies $|p|=\kappa|q|\ge R_\varepsilon$, which then implies \eqref{eq:site-bounds}. Taking the supremum over $q\in\bb{Z}^d_0$ yields the bound on $\overline\mu_\kappa$. Similarly, summing over $q\in\bb{Z}^d_0$ and recalling that $\zeta_d(\gamma)<+\infty$ for $\gamma>d$ yields the estimate on $m_\kappa$. Finally, since $\tau_p^2 = \mu_p^\kappa/(1+\mu_p^\kappa)$, the upper bound $\tau_p^2\le\mu_p^\kappa$ holds trivially. As for the lower bound, we use the assumption $\mu_p^\kappa\le\overline{\mu}^\kappa\le\varepsilon$ to deduce 
\[
	\tau_p^2 \ge \frac{\mu_p^\kappa}{(1+\varepsilon)} \ge (1-\varepsilon)\mu_p^\kappa.\qedhere
\]
\end{proof}

\begin{lemma}[Uniform upper tail bound]\label{lem:upper-tail}
There exists $\kappa_0\ge 1$ such that for all $\kappa\ge\kappa_0$\,,
\[
	\bb{P}\big(G^\kappa \ge k\big) \le e^{\zeta_d(\gamma)}(3\fk{c}_\nu)^{k} \kappa^{-\gamma k}\qquad\text{for every $k\in\bb{N}_0$}.
\]
\end{lemma}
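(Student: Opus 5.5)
We use a Chernoff bound with a $\kappa$-dependent exponent, chosen so that $e^{-tk}$ produces exactly $\kappa^{-\gamma k}$ up to a geometric constant. For $t\ge 0$ with $\overline{\mu}^\kappa(e^t-1)<1$, Markov's inequality and the product structure of $G^\kappa$ (Theorem~\ref{thm:N+_bogo}) give
\[
	\bb{P}(G^\kappa\ge k)\le e^{-tk}\,\bb{E}[e^{tG^\kappa}] = e^{-tk}\prod_{p\in\Lambda^\kappa}\frac{1}{1-\mu_p^\kappa(e^t-1)}.
\]
We take $e^t = 1 + \kappa^\gamma/(3\fk{c}_\nu)$, so that $e^{-tk}\le (3\fk{c}_\nu)^k\kappa^{-\gamma k}$. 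This is the only place where the factor $(3\fk{c}_\nu)^k$ enters, and it is exact.

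Next we bound the exponential moment uniformly in $\kappa$. Apply Lemma~\ref{lem:site-bounds} with $\varepsilon=\tfrac12$: for $\kappa\ge\kappa_{1/2}$ we have $\mu_p^\kappa\le \tfrac32\fk{c}_\nu\kappa^{-\gamma}|q|^{-\gamma}$ for $p=\kappa q$. Hence
\[
	x_q:=\mu_p^\kappa(e^t-1)\le \frac{1}{2|q|^\gamma}\le\frac12 \qquad \text{for every } q\in\bb{Z}^d_0 .
\]
In particular $t$ lies in the admissible range. We then use the elementary inequality $-\ln(1-x)\le 2x$ for $x\in[0,\tfrac12]$, which follows from convexity of $x\mapsto -\ln(1-x)$ together with $-\ln(1/2)=\ln 2\le 1$. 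This yields
\[
	\ln\bb{E}[e^{tG^\kappa}]\le \sum_{q\in\bb{Z}^d_0} 2x_q\le \sum_{q\in\bb{Z}^d_0}|q|^{-\gamma}=\zeta_d(\gamma).
\]
Combining the two steps gives $\bb{P}(G^\kappa\ge k)\le e^{\zeta_d(\gamma)}(3\fk{c}_\nu)^k\kappa^{-\gamma k}$ for all $k\in\bb{N}_0$ and all $\kappa\ge\kappa_0:=\kappa_{1/2}$. The case $k=0$ holds trivially, since the right-hand side is at least $1$.

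The only real point of care is the choice of $t$, which must balance two requirements: $e^{-t}$ must be of order $\kappa^{-\gamma}$, while $\mu_p^\kappa(e^t-1)$ must stay uniformly below $\tfrac12$ so that the logarithms are controlled linearly. The constant $3=2\cdot\tfrac32$ is exactly what makes both work: the $\tfrac32$ comes from the site bound with $\varepsilon=\tfrac12$, and the $2$ comes from the linearization $-\ln(1-x)\le 2x$. Uniformity in $k$ is automatic because $t$ does not depend on $k$.
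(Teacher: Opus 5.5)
Your proof is correct and follows essentially the same route as the paper. Both use an exponential Markov/Chernoff bound with parameter of order $\kappa^\gamma/(3\fk{c}_\nu)$, the site bounds of Lemma~\ref{lem:site-bounds} with $\varepsilon=\tfrac12$, and the linearization $-\ln(1-x)\le 2x$ on $[0,\tfrac12]$ to obtain the factor $e^{\zeta_d(\gamma)}$. The only cosmetic difference is that the paper evaluates $\bb{E}[\theta^{G^\kappa}]$ at $\theta_\kappa=\kappa^\gamma/(3\fk{c}_\nu)$ and drops the numerators $1-\tau_p^2$, which is why it additionally requires $\theta_\kappa\ge 1$, whereas your choice $e^t=1+\theta_\kappa$ uses the exact product $\prod_p\bigl(1-\mu_p^\kappa(e^t-1)\bigr)^{-1}$ and needs only $\kappa\ge\kappa_{1/2}$.
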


\begin{proof}
We begin by noting that for $\theta\ge 1$ with $\theta\sup_{p\in \kappa \bb{Z}^d_0}\tau_p^2\le \frac{1}{2}$, monotone convergence along finite sublattices and independence give
\[
	\bb{E}\big[\theta^{\,G^\kappa}\big]
	= \prod_{p\in \kappa \bb{Z}^d_0} \frac{1-\tau_p^2}{1-\theta\tau_p^2}
	\le \prod_{p\in\kappa \bb{Z}^d_0} \frac{1}{1-\theta\tau_p^2}
	\le \exp( 2\theta \sum_{p\in\kappa \bb{Z}^d_0}\tau_p^2)<+\infty,
\]
where we used the fact that $-\ln(1-x)\le x/(1-x)\le 2x$ for $x\in[0,\frac{1}{2}]$. 

We now apply Lemma~\ref{lem:site-bounds} with $\varepsilon = \frac{1}{2}$ and set $\kappa_0 := \max\{\kappa_\varepsilon,(3\fk{c}_\nu)^{1/\gamma}\}$, such that for $\kappa\ge\kappa_0$,
\[
	\overline{\mu}^\kappa \le \tfrac{3}{2}\fk{c}_\nu\kappa^{-\gamma},
	\qquad
	m_\kappa \le \tfrac{3}{2}\fk{c}_\nu\,\zeta_d(\gamma)\kappa^{-\gamma}.
\]
Setting $\theta_\kappa := \kappa^{\gamma}/(3\fk{c}_\nu)\ge 1$, we find that $\theta_\kappa\tau_p^2 \le \theta_\kappa\,\overline{\mu}^\kappa \le \frac{1}{2}$ for all $p\in\kappa \bb{Z}^d_0$. Hence, with $\theta=\theta_\kappa$ and $\sum_{p\in\kappa \bb{Z}^d_0} \tau_p^2 \le m_\kappa$, we obtain
\[
	2\,\theta_\kappa m_\kappa \le 2\cdot\frac{\kappa^\gamma}{3\fk{c}_\nu}\cdot \frac{3}{2}\,\fk{c}_\nu \zeta_d(\gamma)\,\kappa^{-\gamma} = \zeta_d(\gamma).
\]
Consequently, we obtain $\bb{E}[\theta_\kappa^{\,G^\kappa}]\le e^{\zeta_d(\gamma)}$. Finally, Markov's inequality yields
\[
	\bb{P}(G^\kappa \ge k)
	= \bb{P}(\theta_\kappa^{\,G^\kappa} \ge \theta_\kappa^{\,k})
	\le \theta_\kappa^{-k}\, \bb{E}[\theta_\kappa^{\,G^\kappa}]
	\le e^{\zeta_d(\gamma)}\, (3\fk{c}_\nu)^{k}\,\kappa^{-\gamma k},
\]
for every $k\in\bb{N}_0$ as asserted. 
\end{proof}

\begin{lemma}[Local lower bound]\label{lem:lower-local}
There exists $\kappa_1\ge 1$ such that for all $\kappa\ge\kappa_1$ and all $k\in\bb{N}_0$,
\[
	\bb{P}\big(G^\kappa = k\big) \ge e^{-1}\, \Big(\frac{\fk{c}_\nu}{4}\Big)^{k}\, \kappa^{-\gamma k}.
\]
\end{lemma}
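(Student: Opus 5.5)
We get the lower bound by forcing all $k$ excitations onto a single site with largest mean. Fix a site $\bar p = \kappa e_1\in\kappa\bb{Z}^d_0$ with $|e_1|=1$, and write $G^\kappa = G_{\bar p} + R^\kappa$, where $R^\kappa := \sum_{p\ne \bar p} G_p$ is independent of $G_{\bar p}$. Then
\[
	\bb{P}(G^\kappa = k) \ge \bb{P}(G_{\bar p}=k)\,\bb{P}(R^\kappa = 0) = (1-\tau_{\bar p}^2)\,\tau_{\bar p}^{2k}\prod_{p\ne\bar p}(1-\tau_p^2) \ge \tau_{\bar p}^{2k}\prod_{p\in\kappa\bb{Z}^d_0}(1-\tau_p^2).
\]
It therefore suffices to bound two quantities separately: the infinite product $\bb{P}(G^\kappa=0)$ from below by $e^{-1}$, and $\tau_{\bar p}^2$ from below by $\tfrac{\fk{c}_\nu}{4}\kappa^{-\gamma}$.

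For the second bound we apply Lemma~\ref{lem:site-bounds} with $\varepsilon=\tfrac12$ and $q=e_1$. For $\kappa\ge\kappa_{1/2}$ this gives $\mu^\kappa_{\bar p}\ge \tfrac12\fk{c}_\nu\kappa^{-\gamma}$. Once $\kappa$ is also large enough that $\overline{\mu}^\kappa\le\tfrac12$ (possible since $\overline{\mu}^\kappa\le\tfrac32\fk{c}_\nu\kappa^{-\gamma}$), estimate \eqref{eq:tau-mu} yields $\tau_{\bar p}^2\ge\tfrac12\mu^\kappa_{\bar p}\ge \tfrac{\fk{c}_\nu}{4}\kappa^{-\gamma}$. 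Raising this to the power $k$ gives the factor $(\fk{c}_\nu/4)^k\kappa^{-\gamma k}$. It is uniform in $k$ because it is a single exact power.

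For the product we use $\ln(1-x)\ge -x/(1-x)\ge -2x$ for $x\in[0,\tfrac12]$. Together with $\tau_p^2\le\mu_p^\kappa\le\overline{\mu}^\kappa\le\tfrac12$, this gives
\[
	\ln\bb{P}(G^\kappa=0)\ge -2\sum_p\tau_p^2\ge -2 m_\kappa\ge -3\fk{c}_\nu\zeta_d(\gamma)\kappa^{-\gamma}.
\]
We then choose $\kappa_1$ so large that the right-hand side is at least $-1$, that is, $\kappa_1^\gamma\ge 3\fk{c}_\nu\zeta_d(\gamma)$, and also $\kappa_1\ge\kappa_{1/2}$ and $\tfrac32\fk{c}_\nu\kappa_1^{-\gamma}\le\tfrac12$. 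The convergence of the infinite product is justified in the same way as in Lemma~\ref{lem:upper-tail}, via monotone limits over finite sublattices.

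No step is a real obstacle. The only point needing care is that the constants must not depend on $k$. This holds because the $k$-dependence enters solely through $\tau_{\bar p}^{2k}$, and the remaining factors are bounded independently of $k$.
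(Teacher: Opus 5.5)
Your proposal is correct and follows the paper's proof essentially step for step. Both arguments use the one-site bound $\bb{P}(G^\kappa=k)\ge \tau_{p_*}^{2k}\prod_{p}(1-\tau_p^2)$ at a site $p_*=\kappa q_*$ with $|q_*|=1$, then Lemma~\ref{lem:site-bounds} with $\varepsilon=\tfrac12$ and \eqref{eq:tau-mu} to get $\tau_{p_*}^2\ge \tfrac{\fk{c}_\nu}{4}\kappa^{-\gamma}$, and finally $1-x\ge e^{-2x}$ on $[0,\tfrac12]$ to bound the product by $e^{-2m_\kappa}\ge e^{-1}$; the only differences are that you make the choice of $\kappa_1$ explicit, and that $\kappa e_1$ need not actually be the site of largest mean, which you never use.
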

\begin{proof}
As before, we apply Lemma~\ref{lem:site-bounds} with $\varepsilon=1/2$. We then choose $\kappa_1\ge\kappa_\varepsilon$ so large such that $\overline{\mu}^\kappa\le \frac{1}{2}$ and $m_\kappa\le \frac{1}{2}$ for all $\kappa\ge\kappa_1$. Now fix some $q_*\in\bb{Z}^d_0$ with $|q_*|=1$ and set $p_*:=\kappa q_*\in\kappa \bb{Z}^d_0$. By independence, we obtain
\[
	\bb{P}(G_\Lambda^\kappa = k)
	\ge \bb{P}(G_{p_*}=k) \prod_{p\,\ne\, p_*} \bb{P}(G_p = 0)
	= \tau_{p^*}^{2k} \prod_{p\in\kappa \bb{Z}^d_0} (1-\tau_p^2).
\]
For the first factor, \eqref{eq:site-bounds} and \eqref{eq:tau-mu} give
\[
	\tau_{p_*}^{2} \ge \tfrac{1}{2}\,\mu_{p_*}^\kappa \ge \tfrac{1}{2}\cdot\tfrac{1}{2}\,\fk{c}_\nu\kappa^{-\gamma} = \frac{\fk{c}_\nu}{4}\kappa^{-\gamma}.
\]
For the product, since $\tau_p^2\le\overline{\mu}^\kappa\le \frac{1}{2}$ we may use $1-x\ge e^{-2x}$ for $x\in[0,\frac{1}{2}]$ to estimate
\[
	\prod_{p\in\kappa \bb{Z}^d_0} (1-\tau_p^2) \ge \exp(-2\sum_{p\in\kappa \bb{Z}^d_0}\tau_p^2) \ge e^{-2m_\kappa} \ge e^{-1}. \qedhere
\]
\end{proof}

\begin{proof}[Proof of Theorem~\ref{thm:log-ldp}]
\emph{(i)} Since $\bb{P}(G^\kappa=k)\le\bb{P}(G^\kappa\ge k)$, Lemmas~\ref{lem:upper-tail} and \ref{lem:lower-local} sandwich the probability: for $\kappa\ge\max\{\kappa_0,\kappa_1\}$, we have that
\[
	-\gamma k + \frac{k\ln(\fk{c}_\nu/4) - 1}{\ln\kappa}
	\le \frac{\ln \bb{P}(G^\kappa=k)}{\ln\kappa}
	\le -\gamma k + \frac{k\ln(3\fk{c}_\nu) + \zeta_d(\gamma)}{\ln\kappa}\,,
\]
where both error terms vanish as $\kappa\to+\infty$. 

\medskip
\emph{(ii)} By point $(i)$, we can apply Proposition~\ref{prop: two_sided_to_LDP} to deduce that $(G^\kappa)_\kappa$ satisfies a LDP as $\kappa \to +\infty$ with speed $\ln \kappa$ and good rate function $\mathscr{J}_\gamma$. 

\medskip
\emph{(iii)} By Theorem~\ref{thm:N+_bogo}, $\bb{P}_{\Uppsi_\nu^\kappa}(\sf{N}^\kappa\in E)=\bb{P}(2G^\kappa\in E)$ for every $E\in\cl{B}(\bb{R})$. Since $x\mapsto 2x$ is a homeomorphism of $\bb{R}$, the contraction principle transfers the LDP of (ii) to $\sf{N}_\Lambda^\kappa$ with rate function $\mathscr{J}_{\sf{N}}(x)=\mathscr{J}_\gamma\!(x/2)$, which is again good. The local asymptotics follow from (i), and the concentration estimate from Lemma~\ref{lem:upper-tail} with $k=1$:
\[
	\bb{P}_{\Uppsi_\nu^\kappa}\big(\sf{N}^\kappa\neq 0\big) = \bb{P}\big(G^\kappa\ge 1\big) \le 3\,e^{\zeta_d(\gamma)}\fk{c}_\nu\,\kappa^{-\gamma}. \qedhere
\]
\end{proof}

\subsection{Dense regime, $\kappa\to 0$}\label{sec:dense-regime}
We consider an arbitrary sequence $\kappa_n\to 0$ for $n\to\infty$ and the family of geometric random variables
\begin{align}\label{eq:geometric-triangular-array}
	\bb{G} = \Bigl\{ G_p^{\kappa_n}\, :\, p\in \Lambda_n:=\kappa_n \bb{Z}^d_0,\, n\in\bb{N}_0\Bigr\},
\end{align}
with mean and variance
\[
	\left\{\quad\begin{aligned}
		\mu_p^n &= \sinh^2(\nu(p)) =: f_\mu(p),\\[0.2em]
		(\sigma_p^n)^2 &= \cosh^2(\nu(p))\sinh^2(\nu(p))=f_\mu(p)(1+f_\mu(p)),
	\end{aligned}\right.
\]
where $f_\mu:\bb{R}^d_0\to \bb{R}$ is a continuous function satisfying
\begin{align}\label{ass:dense-conditions-proof}\tag{$\sf{B}\nu$}
	\left\{\quad\begin{aligned}
			\limsup_{|x|\to +\infty} |x|^\gamma f_\mu(x)<+\infty &\qquad \text{for some $\gamma>d$}, \\
		\limsup_{|x|\to 0} |x|^\beta f_\mu(x)<+\infty &\qquad \text{for some $\beta\in(0,d/2)$}.
	\end{aligned}\right.
\end{align}

To prove Theorem~\ref{thm:central-limit}, we use two results. Morally, the first (cf.\ Lemma~\ref{lem:dense-mean-variance}) states that, under the assumption \eqref{ass:dense-conditions-proof}, both the ensemble mean and variance
\[
	\begin{aligned}
		\mu_n := \sum_{p\in\Lambda_n} \mu_p^n\quad\text{and}\quad \sigma_n^2 := \sum_{p\in\Lambda_n} (\sigma_p^n)^2\quad \text{scale like $\kappa^{-d}$ as $n\to\infty$,}
	\end{aligned}
\]
while the second (cf.\ Theorem~\ref{thm: CLT-geometric}) states that the sequence
\[
	S_{n} := \frac{1}{\sigma_n} \left(\,\sum_{p\in\Lambda_n}G_p^{\kappa_n} - \mu_n\right),
\]
of rescaled and recentered random variables converges in distribution to the standard normal random variable, i.e., $\rm{Law}(S_n)\rightharpoonup \cl{N}(0,1)$ as $n\to \infty$.

\medskip
More precisely, the first result states that, under assumption \eqref{ass:dense-conditions-proof} on $f_\mu$, the infinite sum of the rescaled mean and variance of $G_{\Lambda}^{\kappa_n}$ converges. Its proof is found in Appendix~\ref{app:riemann-sum} for completeness.

\begin{lemma}\label{lem:dense-mean-variance}
	Let $f_\mu:\bb{R}^d_0\to \bb{R}$ be a continuous function satisfying assumption \eqref{ass:dense-conditions-proof}. Then,
	\[
		\lim_{\kappa\to 0} \sum_{p \in \kappa \bb{Z}^d_0} \kappa^d f_\mu^\theta(p) = \int_{\bb{R}^d_0} f_\mu^\theta(x) \dd{x} =:\sf{m}_\theta <+\infty,\qquad \text{for every $\theta\in[1,d/\beta)$}.
	\]
\end{lemma}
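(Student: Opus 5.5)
We prove Lemma~\ref{lem:dense-mean-variance} as a Riemann-sum approximation of an integrable function with singularities at the origin and at infinity. We fix $\theta\in[1,d/\beta)$ and write $g:=f_\mu^\theta$.

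\emph{Integrability of $g$.} By \eqref{ass:dense-conditions-proof} there are $r_0\in(0,1)$, $R_0\ge 1$ and $C>0$ such that
\[
0\le g(x)\le C|x|^{-\beta\theta}\quad (0<|x|\le r_0),\qquad 0\le g(x)\le C|x|^{-\gamma\theta}\quad (|x|\ge R_0).
\]
Since $\beta\theta<d$ and $\gamma\theta\ge\gamma>d$, these bounds are integrable near $0$ and near $\infty$. On the annulus $\{r_0\le|x|\le R_0\}$ the function $g$ is continuous, hence bounded. Therefore $\sf{m}_\theta<+\infty$.

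\emph{Splitting the sum.} For $0<r<R$ we decompose
\[
\sum_{p\in\kappa\bb{Z}^d_0}\kappa^d g(p) = \Sigma_{<r}+\Sigma_{[r,R]}+\Sigma_{>R}
\]
according to $|p|<r$, $r\le|p|\le R$ and $|p|>R$.

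For the middle piece, $g$ is uniformly continuous on the compact annulus $A_{r,R}=\{r\le|x|\le R\}$. We compare $\kappa^d g(p)$ with $\int_{p+[-\kappa/2,\kappa/2)^d}g$. The cubes meeting $\partial A_{r,R}$ contribute $O(\kappa)\sup_{A_{r/2,2R}}g$, since their number is $O(\kappa^{1-d})$ and each has volume $\kappa^d$. Hence $\Sigma_{[r,R]}\to\int_{A_{r,R}}g$ as $\kappa\to0$.

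\emph{Tails.} For $R\ge R_0$ we use the bound $g(p)\le C|p|^{-\gamma\theta}$. For $|p|\ge 2\sqrt d\,\kappa$, every $x$ in the cube of side $\kappa$ centred at $p$ satisfies $|x|\ge |p|/2$, so $|p|^{-s}\le 2^{s}|x|^{-s}$ there. This gives
\[
\Sigma_{>R}\le C'\int_{|x|>R/2}|x|^{-\gamma\theta}\,\dd{x},
\]
which is small uniformly in $\kappa$ once $R$ is large.

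Near the origin, with $r\le r_0$, the same comparison gives $\Sigma_{<r}\le C\sum_{p\in\kappa\bb{Z}^d_0,|p|<r}\kappa^d|p|^{-\beta\theta}$. We write $p=\kappa q$ with $q\in\bb{Z}^d_0$ and use $|q|\ge1$, which handles the lattice points closest to $0$ where the cube comparison fails. This yields
\[
\Sigma_{<r}\le C\,\kappa^{d-\beta\theta}\sum_{1\le|q|<r/\kappa}|q|^{-\beta\theta}\lesssim \kappa^{d-\beta\theta}(r/\kappa)^{d-\beta\theta}=r^{d-\beta\theta},
\]
uniformly in $\kappa$, because $\beta\theta<d$. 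This is the key use of the exclusion of the origin together with $\theta<d/\beta$.

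\emph{Conclusion.} Taking $\limsup$ and $\liminf$ as $\kappa\to0$, then letting $r\to0$ and $R\to\infty$, gives the asserted limit by dominated or monotone convergence of $\int_{A_{r,R}}g$ to $\sf{m}_\theta$.

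The main obstacle is the uniform control of $\Sigma_{<r}$ near the origin. There the lattice sum must be bounded by a convergent power sum rather than compared to an integral cube by cube, since $g$ may blow up at $0$. The bound on $\sum_{1\le|q|<M}|q|^{-s}$ for $s<d$ is itself a routine comparison with $\int_{1/2\le|x|\le 2M}|x|^{-s}\,\dd{x}$.
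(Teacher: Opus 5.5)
Your proof is correct. It reaches the result by a more hands-on route than the paper.

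\textbf{The paper's route.} The paper first checks, exactly as in your integrability step, that $f_\mu^\theta$ decays like $|x|^{-\theta\gamma}$ with $\theta\gamma>d$ at infinity and blows up at most like $|x|^{-\theta\beta}$ with $\theta\beta<d$ at the origin. It then invokes the general Riemann-sum Lemma~\ref{lemma: Riemann_approx}, which is proved measure-theoretically. The rescaled counting measures $\kappa^d\sum_{p\in\kappa\bb{Z}^d_0}\delta_p$ converge vaguely on $\bb{R}^d_0$ to Lebesgue measure, so $f$ times them converges vaguely to $f\,\dd{x}$. This is upgraded to convergence of total mass by a tightness argument with the weight $\psi(x)=|x|_\infty^{-\delta}$ near $0$ and $|x|_\infty^{\delta}$ near infinity. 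The uniform bound on $\int\psi$ is the lattice power-sum estimate of Lemma~\ref{lemma: sum_tail_estimates}, applied with a margin $\alpha+\delta<d<\gamma-\delta$.

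\textbf{Your route.} Your splitting into $\Sigma_{<r}$, $\Sigma_{[r,R]}$, $\Sigma_{>R}$ unpacks the same structure by hand. The annulus estimate, via uniform continuity plus the boundary layer of cubes, is a direct proof of the vague-convergence step, with an indicator of $A_{r,R}$ in place of compactly supported test functions. Your bounds $\Sigma_{<r}\lesssim r^{d-\beta\theta}$ and $\Sigma_{>R}\lesssim\int_{|x|>R/2}|x|^{-\gamma\theta}\dd{x}$ are the tightness estimates. You get them without the $\delta$-margin because you bound the tail masses directly rather than a $\psi$-moment.

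\textbf{Trade-off.} The paper's version is shorter, since it leans on standard facts: the vague convergence of lattice measures is quoted as known, and tightness plus vague convergence gives narrow convergence. Yours is fully self-contained and gives explicit tail rates.

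\textbf{One small slip.} In the far-field step, $|x|\ge|p|/2$ gives $|x|^{-s}\le 2^{s}|p|^{-s}$, which is the reverse of what you state. For $\kappa^d|p|^{-s}\le 2^{s}\int_{p+[-\kappa/2,\kappa/2)^d}|x|^{-s}\dd{x}$ you need $|x|\le 2|p|$. This holds under the same condition $|p|\ge 2\sqrt d\,\kappa$, because $|x|\le|p|+\sqrt d\,\kappa/2$. The lower bound $|x|\ge|p|/2$ is still what places these cubes inside $\{|x|>R/2\}$.
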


\begin{remark}
	Notice that since $\beta\in(0,d/2)$, we have that $d/\beta > 2$. Hence, there exists some $\delta_0>0$ such that Lemma~\ref{lem:dense-mean-variance} holds for $\vartheta\in[1,2+\delta_0)$.
\end{remark}

As for the second statement, we prove in Appendix~\ref{app:clt} the following CLT result for the family of infinite sums of geometric random variables $\bb{G}$.

\begin{theorem}\label{thm: CLT-geometric}
	Let $\bb{G}$ be the family of geometric random variables defined in \eqref{eq:geometric-triangular-array} such that
	\begin{enumerate}
		\item $\{G_p^{\kappa_n}: p \in \Lambda_n\}$ is a family of independent random variables for every $n\ge 0$, 
		\item $G_p^{\kappa_n}$ has finite second moment for every $p\in\Lambda_n$ and $n\in\bb{N}_0$, and we have that
		\begin{equation*}
			0 < \sigma_n^2:=\sum_{p \in \Lambda_n} \rm{Var}(G_p^{\kappa_n}) < +\infty.
		\end{equation*}
	\end{enumerate}
	Suppose further that the Lyapunov condition holds for the sequence $S_{n,p} := \sigma_n^{-1} (G_p^{\kappa_n} - \bb{E}[G_p^{\kappa_n}])$ of rescaled and recentered random variable, i.e., there exists some $\delta>0$ such that
	\begin{equation} \label{eq: geometric-Lyapunov_condition}
		\lim_{n\to\infty}\sum_{p \in \Lambda_n} \bb{E}[|S_{n,p}|^{2+\delta}] = 0.
	\end{equation}
	Then, the sequence of distributions $\rm{Law}(S_n)$ with $S_n=\sum_{p\in\Lambda_n}S_{n,p}$ converges weakly to the standard normal distribution, along with all its $\theta$-moments for $\theta\in[1,2+\delta)$.
\end{theorem}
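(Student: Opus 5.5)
We need to prove Theorem~\ref{thm: CLT-geometric}: a Lindeberg--Lyapunov CLT for the infinite triangular array $\{S_{n,p}\}$, together with convergence of $\theta$-moments for $\theta\in[1,2+\delta)$.

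\textbf{Reduction to finite rows.} Each row $\Lambda_n$ is countably infinite, so the first step reduces it to a finite one. Since $\sum_p \rm{Var}(S_{n,p}) = 1$ and the $S_{n,p}$ are independent and centered, the partial sums over an increasing exhaustion $F_1\subset F_2\subset\dots$ of $\Lambda_n$ by finite sets form an $L^2$-bounded martingale. Hence $S_n=\sum_p S_{n,p}$ converges almost surely and in $L^2$. For each $n$ we choose a finite $F_n\subset\Lambda_n$ with
\[
\sum_{p\notin F_n}\rm{Var}(S_{n,p})\le 1/n,
\]
and split $S_n = T_n + R_n$, where $T_n=\sum_{p\in F_n}S_{n,p}$ and $\bb{E}[R_n^2]\le 1/n$. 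Then $R_n\to0$ in $L^2$, and $\rm{Var}(T_n)=:s_n^2\to 1$. The Lyapunov sum restricted to $F_n$ is dominated by the full sum in \eqref{eq: geometric-Lyapunov_condition}, so it tends to $0$. Normalizing by $s_n$ only changes it by a factor tending to $1$.

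\textbf{Convergence in law.} The classical Lyapunov CLT for finite triangular arrays gives $T_n/s_n\Rightarrow\cl{N}(0,1)$. Since $s_n\to1$ and $R_n\to0$ in probability, Slutsky's lemma yields $S_n\Rightarrow\cl{N}(0,1)$. Alternatively, one can prove this directly via the characteristic function: the bound
\[
\Bigl|\bb{E}[e^{itS_{n,p}}]-1+\tfrac{t^2}{2}\rm{Var}(S_{n,p})\Bigr|\le C|t|^{2+\delta}\,\bb{E}|S_{n,p}|^{2+\delta}
\]
is summable over $p$, and $\max_p\rm{Var}(S_{n,p})\to0$, which follows from Lyapunov via $\rm{Var}\le(\bb{E}|\cdot|^{2+\delta})^{2/(2+\delta)}$. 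This lets us take logarithms of the infinite product directly.

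\textbf{Moments (main obstacle).} Convergence of $\bb{E}|S_n|^\theta$ for $\theta<2+\delta$ follows from weak convergence once the family $|S_n|^\theta$ is uniformly integrable. It therefore suffices to show $\sup_n\bb{E}|S_n|^{2+\delta}<\infty$. This is where I would invoke Rosenthal's inequality for sums of independent centered variables. It extends to the infinite sum by Fatou's lemma applied along the finite exhaustion, giving
\[
\bb{E}|S_n|^{2+\delta}\le C_\delta\Bigl(\sum_p\bb{E}|S_{n,p}|^{2+\delta}+\Bigl(\sum_p\bb{E}S_{n,p}^2\Bigr)^{(2+\delta)/2}\Bigr)=C_\delta\bigl(o(1)+1\bigr),
\]
which is bounded. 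For $\theta\in[1,2+\delta)$, uniform integrability of $|S_n|^\theta$ follows from this $L^{2+\delta}$ bound by de la Vallée-Poussin, and convergence of moments follows. The technical care lies in justifying Rosenthal's inequality for infinite sums; the approach is to take limits along finite subsets, using that the partial sums converge almost surely and applying Fatou's lemma.
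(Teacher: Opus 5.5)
Your proposal is correct and follows essentially the same route as the paper: you truncate each row to a finite set carrying all but $1/n$ of the variance, apply the classical finite CLT, and obtain moment convergence from a uniform $L^{2+\delta}$ bound given by Rosenthal's inequality, extended to the infinite sum along a finite exhaustion. The differences are only cosmetic. The paper keeps the tail $\sum_{p\notin F_n} S_{n,p}$ as one extra summand of a finite Lindeberg array, checking that it contributes at most $1/n$ to the Lindeberg sum, instead of discarding it via Slutsky. It also phrases the final step as relative compactness in the Wasserstein space $\mathbb{W}_\theta$ rather than as uniform integrability.
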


Using Lemma~\ref{lem:dense-mean-variance} and Theorem~\ref{thm: CLT-geometric}, we can now prove Theorem~\ref{thm:central-limit}.

\begin{proof}[Proof of Theorem~\ref{thm:central-limit}]
	From Theorem~\ref{thm:N+_bogo}, we know that $\bb{G}$ is a family of independent geometric random variables. Moreover, Lemma~\ref{lem:dense-mean-variance} implies that $0<\sigma_n^2<+\infty$ for every $n\in\bb{N}_0$. We are left to show the Lyapunov condition \eqref{eq: geometric-Lyapunov_condition} for the sequence $S_{n,p} := \sigma_n^{-1}(G_p^{\kappa_n}-\mu_p^n)$.
	
	For this, we use the power-deviation estimate derived in Proposition~\ref{prop: Lp_deviation_geom}, which gives
	\begin{align*}
		\sum_{p\in\Lambda_n}\bb{E}[|G_p^{\kappa_n}-\mu_p^n|^{2+\delta}] &\le \sf{c}_{1+\delta} \sum_{p\in\Lambda_n}\max\bigl(\mu_p^n,(\mu_p^n)^{2+\delta}\bigr) \\
		&\le \sf{c}_{1+\delta} \left(\,\sum_{p\in\Lambda_n} \mu_p^n + \sum_{p\in\Lambda_n} (\mu_p^n)^{2+\delta}\right).
	\end{align*}
	Choosing $0<\delta<\delta_0$ according to Lemma~\ref{lem:dense-mean-variance}, we then deduce
	\[
		\sum_{p\in\Lambda_n}\bb{E}[|G_p^{\kappa_n}-\mu_p^n|^{2+\delta}] = O(\kappa_n^{-d}).
	\]
	Since $\sigma_n^2 \asymp \kappa_n^{-d}$, we have that $\sigma_n^{2+\delta} \asymp \kappa_n^{-d(2+\delta)/2}$, and hence,
	\[
		\sum_{p\in\Lambda_n}\bb{E}[|S_{n,p}|^{2+\delta}] = \frac{1}{\sigma^{2 + \delta}} O(\kappa_n^{-d}) = O(\kappa_n^{d \delta/2}) \;\longrightarrow \;0\quad\text{as $n\to\infty$},
	\]
	which implies the Lyapunov condition for the sequence $(S_{n,p})$. We can then conclude the proof by applying Theorem~\ref{thm: CLT-geometric}.
\end{proof}

\section*{Acknowledgements}
Part of this research was performed while the authors were visiting the Institute for Pure and Applied Mathematics (IPAM), which is supported by the National Science Foundation (Grant Nos.\ DMS-1925919 and DMS-2422832). The authors are particularly grateful to Dima Shlyakhtenko for valuable discussions at IPAM. Moreover, we thank Andreas Deuchert for helpful discussions at IPAM and afterward. OT acknowledges support by the Netherlands Organisation for Scientific Research
(NWO) under Grant No.\ NGF.1582.22.009.
FC has been supported by funding from the European Research Council (ERC) under the European Union’s Horizon Europe research and innovation programme under ERC No. 101200514 OPTiMiSE and ERC AdG No. 101054420 EYAWKAJKOS. Views and opinions expressed are however those of the author only and do not necessarily reflect those of the European Union or the European Research Council Executive Agency. Neither the European Union nor the granting authority can be held responsible for them.
\appendix

\section{A law of small numbers for integer-valued random variables}\label{app:small-numbers}

In this section, we prove Proposition~\ref{prop:small-numbers}, which converts a scaled log-Laplace limit into the leading-order law and the exponential moments of an $\bb{N}_0$-valued random variable. It is used in the small scattering limit in Section~\ref{sec:BEC}, and again in the sparse regime in Section~\ref{sec:ldp-proofs}.

\begin{proof}[Proof of Proposition~\ref{prop:small-numbers}]
	Throughout the proof, we write
	\[
		\uppi_k^N := \frac{1}{r_N}\,\bb{P}(X_N = k),\qquad k\ge 1,
	\]
	for the normalized point masses of $X_N$ away from the origin. We begin by recording the form of the assumption \eqref{eq:prop-laplace-hypothesis} that is actually used below, namely that
	\begin{equation}\label{eq:prop-expanded}
		\lim_{N\to+\infty}\frac{1}{r_N}\,\bb{E}\bigl[e^{t X_N} - 1\bigr] = \fk{c}\bigl(e^{K t} - 1\bigr)\qquad\text{for every $t\in\bb{R}$}.
	\end{equation}
	Indeed, fix $t\in\bb{R}$. Since $r_N>0$ and the limit in \eqref{eq:prop-laplace-hypothesis} is finite, the assumption gives in particular that $\bb{E}[e^{t X_N}]<+\infty$ for all $N\gg 1$. For such $N$, the quantity $L_N(t) := \ln \bb{E}[e^{t X_N}]$ is well defined and finite, and \eqref{eq:prop-laplace-hypothesis} reads
	\[
		\frac{L_N(t)}{r_N} \longrightarrow \fk{c}\bigl(e^{Kt}-1\bigr)\qquad\text{as $N\to+\infty$}.
	\]
	Since $r_N \to 0$, this forces $L_N(t)\to 0$. Setting
	\[
		\upphi(u) := \frac{e^u-1}{u}\;\;\text{for $u\ne 0$},\qquad \upphi(0):=1,
	\]
	we obtain a continuous function on $\bb{R}$ satisfying $e^u-1 = \upphi(u)\,u$ for every $u\in\bb{R}$, the degenerate case $u=0$ included. Applying this identity with $u=L_N(t)$ gives
	\[
		\frac{1}{r_N}\,\bb{E}\bigl[e^{t X_N} - 1\bigr] = \frac{e^{L_N(t)}-1}{r_N} = \upphi\bigl(L_N(t)\bigr)\,\frac{L_N(t)}{r_N},
	\]
	where $\upphi(L_N(t)) \to \upphi(0) = 1$ by continuity. Hence, letting $N\to+\infty$ yields \eqref{eq:prop-expanded}.
	
	The proof proceeds in three steps: we first dominate $\uppi_k^N$ uniformly in $N$, then identify their limits, and finally transfer this convergence to the quantities of interest.

	\medskip\noindent\emph{Step 1: Uniform exponential domination.} We claim that for every $\lambda \ge \ln 2$, there exist a constant $C_\lambda>0$ and an index $N_\lambda\in\bb{N}$, \emph{both independent of $k$}, such that
	\begin{equation}\label{eq:prop-domination}
		\uppi_k^N \le C_\lambda e^{-\lambda k}\qquad\text{for every $k \ge 1$ and every $N \ge N_\lambda$}.
	\end{equation}
	\emph{Proof of claim.} Fix $\lambda\ge\ln 2$ and $k\ge 1$. Recall the elementary inequality $e^s \le 2(e^s-1)$, which holds when $e^s\ge 2$, i.e., when $s \ge \ln 2$. Since $\lambda k \ge \lambda \ge \ln 2$, it applies with $s = \lambda k$ and yields $e^{\lambda k}\le 2(e^{\lambda k}-1)$. Moreover, expanding the expectation over the values of $X_N$ gives
	\[
		\bb{E}\bigl[e^{\lambda X_N}-1\bigr] = \sum_{j\ge 0} \bigl(e^{\lambda j}-1\bigr)\,\bb{P}(X_N = j) \ge \bigl(e^{\lambda k}-1\bigr)\,\bb{P}(X_N = k),
	\]
	because every summand is non-negative, $\lambda$ being positive and $X_N$ being $\bb{N}_0$-valued. Combining the two inequalities and dividing by $r_N$, we obtain
	\[
		e^{\lambda k}\, \uppi_k^N \;\le\; \frac{2}{r_N}\bigl(e^{\lambda k}-1\bigr)\,\bb{P}(X_N = k) \le \frac{2}{r_N}\, \bb{E}\bigl[e^{\lambda X_N}-1\bigr].
	\]
	The right-hand side no longer involves $k$, and by \eqref{eq:prop-expanded} it converges to $2\fk{c}(e^{K\lambda}-1)$ as $N\to+\infty$. Consequently there is an index $N_\lambda$, depending on $\lambda$ but not on $k$, such that
	\[
		\frac{2}{r_N}\, \bb{E}\bigl[e^{\lambda X_N}-1\bigr] \le C_\lambda := 2\fk{c}\bigl(e^{K\lambda}-1\bigr) + 1 \qquad\text{for every $N \ge N_\lambda$}.
	\]
	Since $k\ge 1$ was arbitrary and neither $C_\lambda$ nor $N_\lambda$ depends on it, this proves \eqref{eq:prop-domination}.

	\medskip\noindent\emph{Step 2: Identification of the limits.} We claim that
	\begin{equation}\label{eq:prop-mass-limit}
		\lim_{N\to+\infty} \uppi_k^N = \fk{c}\,\updelta_{k,K}\qquad\text{for every $k\ge 1$}.
	\end{equation}
	By \eqref{eq:prop-domination} with $\lambda=\ln 2$, the sequence $\{\uppi_k^N\}_N$ is bounded for every fixed $k\ge1$. By a diagonal argument, it therefore suffices to prove that whenever $\uppi_k^{N_j}\to\uppi_k$ for every $k\ge1$ along a subsequence, the limit is $\uppi_k = \fk{c}\,\updelta_{k,K}$.

	So let $(\uppi_k)_{k\ge1}$ be such a limit, and fix $t\in\bb{R}$. Expanding the expectation over the values of $X_N$ and dividing by $r_N$, we first record the exact identity
	\begin{equation}\label{eq:prop-series-form}
		\frac{1}{r_N}\,\bb{E}\bigl[e^{t X_N}-1\bigr] = \sum_{k\ge 0} \bigl(e^{tk}-1\bigr)\frac{\bb{P}(X_N = k)}{r_N} = \sum_{k\ge1}\bigl(e^{tk}-1\bigr)\uppi_k^N,
	\end{equation}
	valid for every $N\gg 1$, where the term $k=0$ was dropped because $e^{t\cdot 0}-1 = 0$.

	We now pass to the limit along the subsequence $(N_j)_j$ on the right-hand side of \eqref{eq:prop-series-form}. Choosing $\lambda\ge\ln 2 \vee (t+1)$, the bound \eqref{eq:prop-domination} together with $|e^{tk}-1|\le e^{tk}+1$ gives
	\[
		\bigl|\bigl(e^{tk}-1\bigr)\uppi_k^N\bigr| \le C_\lambda\bigl(e^{(t-\lambda)k} + e^{-\lambda k}\bigr) \le C_\lambda\bigl(e^{-k} + 2^{-k}\bigr)\qquad\text{for every $k\ge1,\, N\ge N_\lambda$},
	\]
	where the right-hand side is summable in $k$ and independent of $N$. Since moreover $\uppi_k^{N_j}\to\uppi_k$ as $j\to\infty$ for each fixed $k\ge 1$, dominated convergence with respect to the counting measure on $\{1,2,\dots\}$ applies to the series in \eqref{eq:prop-series-form} and yields
	\[
		\lim_{j\to\infty}\, \sum_{k\ge1}\bigl(e^{tk}-1\bigr)\uppi_k^{N_j} = \sum_{k\ge1} \bigl(e^{tk}-1\bigr)\uppi_k .
	\]
	Combining this with \eqref{eq:prop-series-form} and \eqref{eq:prop-expanded}, we conclude that
	\[
		\sum_{k\ge1} \bigl(e^{tk}-1\bigr)\uppi_k = \lim_{j\to\infty} \frac{1}{r_{N_j}}\,\bb{E}\bigl[e^{tX_{N_j}}-1\bigr] = \fk{c}\bigl(e^{Kt}-1\bigr).
	\]
	We now turn the last display into an identity between power series. Passing to the limit along the subsequence in \eqref{eq:prop-domination} gives
	\begin{equation}\label{eq:prop-limit-domination}
		\uppi_k \le C_\lambda e^{-\lambda k}\qquad\text{for every $k\ge 1$ and every $\lambda\ge\ln 2$}.
	\end{equation}
	Thus, for each fixed $z>0$, choosing $\lambda\ge\ln 2 \vee (\ln z + 1)$ yields $\uppi_k z^k \le C_\lambda (z e^{-\lambda})^k$ with $z e^{-\lambda}<1$. Therefore,
	\[
		F(z) := \sum_{k\ge 1} \uppi_k z^k\qquad\text{	converges for every $z > 0$.}
	\]
	Moreover, since \eqref{eq:prop-limit-domination} holds for arbitrarily large $\lambda$, the radius of convergence of this power series is in fact infinite, and $F$ extends to an entire function. In particular, $F(1) = \sum_{k\ge1}\uppi_k<+\infty$, so writing $z=e^t$ and letting $t$ range over $\bb{R}$ yields
	\[
		F(z) - F(1) = \fk{c}\bigl(z^K-1\bigr)\qquad\text{for every $z>0$}.
	\]
	The entire functions $z\mapsto F(z)$ and $z\mapsto \fk{c}\,z^K + \bigl(F(1)-\fk{c}\bigr)$ therefore agree on the half-line $(0,+\infty)$, hence on all of $\bb{C}$ by the identity theorem, and so do their Taylor coefficients at the origin. Since $K\ge 1$, comparing the coefficient of $z^0$ gives $F(1) = \fk{c}$, while comparing the coefficient of $z^K$ gives $\uppi_K = \fk{c}$, and comparing the coefficient of $z^k$ for $k\ge1$ with $k\ne K$ gives $\uppi_k = 0$.

	This identifies the limit of every subsequence extracted as above. Since $\{\uppi_k^N\}_N$ is bounded for each fixed $k\ge1$, and all such subsequential limits coincide, the full sequence converges to the same value, which proves \eqref{eq:prop-mass-limit}.

	\medskip\noindent\emph{Step 3: Transfer.} We claim that for every sequence $(u_k)_{k\ge1}\subset\bb{C}$ satisfying $|u_k| \le A e^{t k}$ for some $A>0$ and some $t\in\bb{R}$, one has
	\begin{equation}\label{eq:prop-transfer}
		\lim_{N\to+\infty} \frac{1}{r_N} \sum_{k\ge1} u_k\, \bb{P}(X_N = k) = \fk{c}\, u_K.
	\end{equation}
	Indeed, choosing $\lambda\ge\ln 2\vee(t+1)$ in \eqref{eq:prop-domination} gives $|u_k|\,\uppi_k^N \le A\, C_\lambda\, e^{-k}$ for every $N\ge N_\lambda$, which is again summable in $k$ and independent of $N$. The claim then follows from \eqref{eq:prop-mass-limit} by dominated convergence.

	It remains to choose appropriate $u_k$ in \eqref{eq:prop-transfer}. Taking $u_k \equiv 1$ yields
	\[
		\frac{1}{r_N}\bigl(1-\bb{P}(X_N=0)\bigr) = \frac{1}{r_N}\sum_{k\ge1}\bb{P}(X_N=k) \longrightarrow \fk{c},
	\]
	i.e., $\bb{P}(X_N=0) = 1 - \fk{c}\,r_N + o(r_N)$, which together with \eqref{eq:prop-mass-limit} is precisely \eqref{eq:prop-two-point}. Taking instead $u_k = k^{\theta} e^{\lambda k}$ with $\theta>0$ and $\lambda\in\bb{C}$, which satisfies $|u_k|\le A\,e^{(\Re\lambda+1)k}$ for a suitable $A>0$, and noting that $X_N^\theta e^{\lambda X_N}$ vanishes on the event $\{X_N=0\}$ because $\theta>0$, we obtain
	\[
		\frac{1}{r_N}\,\bb{E}\bigl[X_N^{\theta} e^{\lambda X_N}\bigr] = \frac{1}{r_N}\sum_{k\ge1} k^{\theta} e^{\lambda k}\,\bb{P}(X_N=k) \longrightarrow \fk{c}\, K^{\theta} e^{K\lambda},
	\]
	which is \eqref{eq:prop-moments}, thereby concluding the proof.
\end{proof}

\section{Estimates involving geometric random variables}
In this section, we shall collect some well-known and less well-known estimates for geometric random variables. We start with a simple tail estimate.

    \begin{lemma}[Tail estimate] \label{lemma: tail_estimate}
        Let $G$ be a geometric random variable with mean $\mu$. Then, 
        \begin{equation} \label{eq: tail_estimate}
            \bb{P}(G \geq t) \leq \exp \left ( -\frac{t}{\mu+1} \right )\qquad\text{for all $t \geq 0$.}
        \end{equation}
    \end{lemma}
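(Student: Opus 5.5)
Since $G$ takes values in $\bb{N}_0$, the plan is to compute the tail exactly and then compare it with the exponential. With the parametrisation of Theorem~\ref{thm:N+_bogo}, write $\tau^2 = \mu/(1+\mu)$, so that $\bb{P}(G=k) = (1-\tau^2)\tau^{2k}$. Summing the geometric series gives $\bb{P}(G\ge k) = \tau^{2k}$ for every $k\in\bb{N}_0$. For real $t\ge 0$ we have $\{G\ge t\} = \{G\ge \lceil t\rceil\}$, hence
\[
	\bb{P}(G\ge t) = \Bigl(\frac{\mu}{1+\mu}\Bigr)^{\lceil t\rceil} \le \Bigl(\frac{\mu}{1+\mu}\Bigr)^{t},
\]
using $\lceil t\rceil \ge t$ and $\mu/(1+\mu)\in[0,1)$. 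The degenerate case $\mu=0$ is trivial: then $G=0$ almost surely, and $\bb{P}(G\ge t)$ vanishes for $t>0$ and equals $1$ at $t=0$.

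It therefore remains to show $\bigl(\frac{\mu}{1+\mu}\bigr)^t \le e^{-t/(1+\mu)}$. Taking logarithms and dividing by $t>0$, this reduces to
\[
	\ln\Bigl(1 - \frac{1}{1+\mu}\Bigr) \le -\frac{1}{1+\mu}.
\]
This is the elementary inequality $\ln(1-x)\le -x$ for $x = 1/(1+\mu)\in(0,1]$, which follows from concavity of $\ln$ (or from $1-x\le e^{-x}$). At $x=1$ the left side is $-\infty$, so the inequality holds trivially there as well.

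There is no genuine obstacle. The only points needing care are the passage from integer to real thresholds via the ceiling, and the edge cases $t=0$ and $\mu=0$. Alternatively, one could run a Chernoff bound with the explicit moment generating function $\bb{E}[e^{sG}] = (1-\tau^2)/(1-\tau^2e^s)$, but the exact tail formula is cleaner and loses nothing.
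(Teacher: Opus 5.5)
Your proposal is correct and follows essentially the same route as the paper: you compute the exact tail $\mathbb{P}(G\ge t)=\bigl(\mu/(1+\mu)\bigr)^{\lceil t\rceil}$ and then apply $\ln(1-x)\le -x$ with $x=1/(1+\mu)$. The only difference is cosmetic. The paper applies the logarithm inequality before replacing $\lceil t\rceil$ by $t$, while you drop the ceiling first, and your explicit treatment of the edge cases $t=0$ and $\mu=0$ is a harmless addition.
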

    \begin{proof}
        A simple computation shows that, for $\lceil t \rceil$ the smallest integer greater than $t$, 
        \begin{equation*}
            \bb{P}(G \geq t) = \left ( \frac{\mu}{\mu+1} \right )^{\lceil t \rceil}.
        \end{equation*}
        Consequently, using the estimate $\log(1-x) \leq -x$, we obtain
        \begin{equation*}
            \left ( \frac{\mu}{\mu+1} \right )^{\lceil t \rceil} = \exp \left ( \lceil t \rceil \log \left ( 1 - \frac{1}{\mu+1} \right ) \right ) \leq \exp \left (-\frac{\lceil t \rceil}{\mu+1} \right ) \leq \exp \left ( -\frac{t}{\mu+1} \right ),
        \end{equation*}
        thereby concluding the proof.
    \end{proof}

Using this estimate, we can derive power-deviation estimates of $G$ from its mean $\mu$.

    \begin{proposition}[Power-deviation estimate] \label{prop: Lp_deviation_geom}
        Let $G$ be a geometric random variable with mean $\mu$. Then for any $\alpha \geq 0$, there exists a constant $\sf{c}_\alpha > 0$ such that
        \[
            \bb{E}[|G-\mu|^{1+\alpha}] \leq \sf{c}_\alpha \max(\mu,\mu^{1+\alpha}).
        \]
    \end{proposition}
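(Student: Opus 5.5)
We plan to prove Proposition~\ref{prop: Lp_deviation_geom} by a layer-cake argument based on the tail estimate of Lemma~\ref{lemma: tail_estimate}, splitting into the regimes $\mu\le 1$ and $\mu\ge 1$. In both cases we start from
\[
	\bb{E}[|G-\mu|^{1+\alpha}] = \int_0^\infty (1+\alpha)s^{\alpha}\,\bb{P}(|G-\mu|\ge s)\dd{s},
\]
and we bound $\bb{P}(|G-\mu|\ge s)\le \bb{P}(G\ge \mu+s)+\bb{P}(G\le \mu-s)$. The lower deviation is trivial for $s>\mu$, since $G\ge 0$.

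\emph{Large mean, $\mu\ge 1$.} The upper tail satisfies $\bb{P}(G\ge \mu+s)\le \exp(-s/(\mu+1))$ by Lemma~\ref{lemma: tail_estimate}. The lower tail we bound by $1$, and it only contributes for $s\le\mu$, giving at most $\mu^{1+\alpha}$. For the upper tail we substitute $s=(\mu+1)u$, which gives
\[
	(1+\alpha)(\mu+1)^{1+\alpha}\,\Gamma(1+\alpha)\le C_\alpha\,\mu^{1+\alpha},
\]
since $\mu+1\le 2\mu$. This yields the bound $\sf{c}_\alpha\mu^{1+\alpha}=\sf{c}_\alpha\max(\mu,\mu^{1+\alpha})$.

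\emph{Small mean, $\mu\le 1$.} Here we need a bound of order $\mu$, and the tail estimate alone is too crude because $\exp(-s/(\mu+1))$ does not carry the small prefactor. Instead we use the exact formula from the proof of Lemma~\ref{lemma: tail_estimate}, namely $\bb{P}(G\ge k)=\tau^{2k}$ with $\tau^2=\mu/(1+\mu)\le\mu$. We split according to whether $G=0$ or $G\ge1$. On $\{G=0\}$ the contribution is $\mu^{1+\alpha}\le\mu$. On $\{G\ge1\}$ we use $|G-\mu|\le G$, so the contribution is at most $\bb{E}[G^{1+\alpha}\mathbbm{1}_{G\ge1}]$. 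This equals
\[
	\sum_{k\ge1}k^{1+\alpha}(1-\tau^2)\tau^{2k}\le \tau^2\sum_{k\ge1}k^{1+\alpha}2^{-(k-1)}\le C_\alpha\,\mu,
\]
using $\tau^2\le 1/2$. Combining the two regimes and taking $\sf{c}_\alpha$ to be the larger constant gives the claim.

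The main obstacle is the small-mean regime, where the tail estimate of Lemma~\ref{lemma: tail_estimate} loses the factor $\mu$. It is handled by the explicit geometric law as above. Everything else reduces to a Gamma integral and a convergent series.
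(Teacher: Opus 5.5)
Your proposal is correct and follows essentially the same route as the paper. For $\mu\ge 1$ you use a layer-cake bound with Lemma~\ref{lemma: tail_estimate} (the paper splits the $s$-integral at $\mu$ instead of splitting into upper and lower tails, which amounts to the same thing), and for $\mu\le 1$ you sum directly against the explicit geometric law; the remaining differences, such as rescaling by $\mu+1$ and using $|G-\mu|\le G$ on $\{G\ge 1\}$ where the paper keeps $(k-\mu)^{1+\alpha}$, are cosmetic.
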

    \begin{proof}
    	We separate the estimate into the regimes $\mu \geq 1$ and $\mu \leq 1$, where in the first case, the term $\mu^{1+\alpha}$ dominates, while in the second, the $\mu$-term dominates.\footnote{The cut at $\mu=1$ is a matter of convenience: since the estimate includes a constant, only the regimes of large and small $\mu$ matter.}
    	
        \medskip\noindent(1)\emph{ The $\mu \geq 1$ case.} Using the layer-cake formula, we obtain the decomposition
            \begin{align*}
                \bb{E}[|G-\mu|^{1+\alpha}] &= (1+\alpha) \int_0^{+\infty} t^\alpha\, \bb{P}(|G-\mu| \geq t) \dd{t} \\
                &= (1+\alpha) \int_0^\mu  t^\alpha\, \bb{P}(|G-\mu| \geq t) \dd{t} + (1+\alpha)\int_\mu^{+\infty} t^\alpha\, \bb{P}(|G-\mu| \geq t) \dd{t}. 
            \end{align*}
            The first integral is estimated from above by $\mu^{1+\alpha}$ using $\bb{P}(|G-\mu| \geq t) \leq 1$. 
            
            As for the second term, we notice that since $G \geq 0$, one has $\bb{P}(|G-\mu| \geq t) = \bb{P}(G \geq t + \mu)$ whenever $t > \mu$. Therefore, using the tail estimate \eqref{eq: tail_estimate}, we can bound the last integral by
            \begin{align*}
                \int_\mu^{+\infty} t^\alpha \exp \left ( -\frac{t+\mu}{\mu+1} \right ) \dd{t} &= \mu^{1+\alpha} \int_1^{+\infty} t^\alpha \exp\left  ( -\frac{\mu}{\mu+1} (t+1) \right ) \dd{t} \\
                &\leq \mu^{1+\alpha} \int_1^{+\infty} t^\alpha e^{-(t+1)/2} \dd{t},
            \end{align*}
            where we used the bound $\frac{\mu}{\mu+1} \geq \frac{1}{2}$ for $\mu \geq 1$. Putting the inequalities together, we deduce
            \begin{equation*}
                \bb{E}[|G-\mu|^{1+\alpha}] \leq \mu^{1+\alpha} \left ( 1 + (1+\alpha) \int_1^{+\infty} t^\alpha e^{-(t+1)/2}  \dd{t} \right ) \leq \sf{c}_\alpha^1\, \mu^{\alpha+1}.
            \end{equation*}
        
        \medskip\noindent(2)\emph{ The $\mu \le 1$ case.} For $\mu \leq 1$, we have $|k-\mu|^\alpha = (k-\mu)^\alpha$ for every $k \geq 1$, and hence,
            \begin{align*}
                \bb{E}[|G-\mu|^{1+\alpha}] &= \mu^{1+\alpha}\, \bb{P}(G=0) + \bb{E}[|G-\mu|^{1+\alpha} \mathbbm{1}_{\{G \geq 1\}}] \\
                &= \frac{1}{\mu+1} \mu^{1+\alpha} + \frac{1}{\mu+1} \sum_{k \geq 1} (k-\mu)^{1+\alpha} \left ( \frac{\mu}{\mu+1} \right )^k.
            \end{align*}
            To bound the last term, notice that $\frac{\mu}{\mu+1} \leq \frac{1}{2}$ for $\mu \leq 1$, and $\frac{1}{\mu+1} \leq 1$. Therefore, we obtain
            \begin{align*}
                \frac{1}{\mu+1} \sum_{k \geq 1} (k-\mu)^{1+\alpha} \left ( \frac{\mu}{\mu+1} \right )^k &= \frac{\mu}{(\mu+1)^2} \sum_{k \geq 0} (k+1-\mu)^{1+\alpha} \left ( \frac{\mu}{\mu+1} \right )^k \\
                &\leq \mu\, \sum_{k \geq 0} 2^{-k}(k+1)^{1+\alpha}.
            \end{align*}
            Hence, since $\mu^{1+\alpha} \leq \mu$, we deduce
            \begin{equation*}
                \bb{E}[|G-\mu|^{1+\alpha}] \leq \mu^{1+\alpha} + \mu\, \sum_{k \geq 0} 2^{-k}(k+1)^{1+\alpha} \leq \sf{c}_\alpha^2\, \mu.
            \end{equation*}
            Putting the estimates together yields the assertion with $\sf{c}_\alpha := \max\{\sf{c}_\alpha^1,\sf{c}_\alpha^2\}$.
    \end{proof}

\section{Central limit theorem for infinite triangular arrays}\label{app:clt}
We give a slight extension of the classical Lindeberg central limit theorem to infinite triangular arrays. We first recall the classical finite version.

    \begin{theorem}[Lindeberg CLT] \label{thm: Lindeberg_CLT}
        A triangular array is a family of random variables $\bb{Y} = (Y_{n,q})$ indexed by $n\in\bb{N}_0$ and $q = 1,\ldots,m(n)$ such that
        \begin{enumerate}
        	\item $\{Y_{n,q}: q = 1,\ldots,m(n)\}$ is a family of independent random variables for every $n\ge 0$, 
            \item $\mathbb{E}[Y_{n,q}] = 0$, $q=1,\ldots,m(n)$ and $\sum_{q=1}^{m(n)} \mathbb{E}[|Y_{n,q}|^2] = 1$ for every $n \geq 0$.
        \end{enumerate}
        A triangular array is a Lindeberg triangular array if it satisfies
        \begin{equation} \label{eq: classical_Lindeberg_condition}
            \lim_{n \to \infty} \sum_{q=1}^{m(n)} \mathbb{E}[|Y_{n,q}|^2 \cdot \mathbf{1}_{\{|Y_{n,q}| \geq \epsilon\}}] = 0,\qquad\text{for every $\epsilon > 0$.}
        \end{equation}
        
        If $\bb{Y}$ is a triangular array, then the CLT holds for $S_n:= \sum_{q=1}^{m(n)} Y_{n,q}$, i.e.,
        \[
            \rm{Law}(S_n)\rightharpoonup \cl{N}(0,1)\quad\text{weakly as $n\to\infty$}.
        \]
    \end{theorem}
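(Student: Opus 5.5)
The plan is to use the characteristic-function method together with Lévy's continuity theorem. (The statement should be read with $\bb{Y}$ a \emph{Lindeberg} triangular array, so that \eqref{eq: classical_Lindeberg_condition} holds.) Write $s_{n,q}^2 := \bb{E}[|Y_{n,q}|^2]$, $\varphi_{n,q}(t) := \bb{E}[e^{itY_{n,q}}]$, and
\[
L_n(\epsilon) := \sum_{q=1}^{m(n)} \bb{E}\bigl[|Y_{n,q}|^2\mathbf{1}_{\{|Y_{n,q}|\ge\epsilon\}}\bigr].
\]
By independence, $\bb{E}[e^{itS_n}] = \prod_{q=1}^{m(n)}\varphi_{n,q}(t)$. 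It therefore suffices to show that, for every fixed $t\in\bb{R}$, this product converges to $e^{-t^2/2} = \prod_q e^{-t^2 s_{n,q}^2/2}$. The last equality holds because the variances sum to one.

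\emph{Step 1 (uniform asymptotic negligibility).} For every $\epsilon>0$,
\[
\max_q s_{n,q}^2 \le \epsilon^2 + \max_q \bb{E}\bigl[|Y_{n,q}|^2\mathbf{1}_{\{|Y_{n,q}|\ge\epsilon\}}\bigr] \le \epsilon^2 + L_n(\epsilon).
\]
Hence $\max_q s_{n,q}^2\to 0$ as $n\to\infty$. In particular, for $n$ large the numbers $b_{n,q}:=1-t^2s_{n,q}^2/2$ lie in $[0,1]$.

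\emph{Step 2 (second-order Taylor comparison).} We use the elementary bound $|e^{ix}-1-ix+x^2/2|\le\min(|x|^3/6,\,x^2)$. Since $\bb{E}[Y_{n,q}]=0$, splitting on $\{|Y_{n,q}|<\epsilon\}$ and its complement gives
\[
|\varphi_{n,q}(t)-b_{n,q}| \le \epsilon|t|^3 s_{n,q}^2 + t^2\,\bb{E}\bigl[|Y_{n,q}|^2\mathbf{1}_{\{|Y_{n,q}|\ge\epsilon\}}\bigr].
\]
Summing over $q$ yields $\sum_q|\varphi_{n,q}(t)-b_{n,q}|\le\epsilon|t|^3+t^2L_n(\epsilon)$.

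\emph{Step 3 (product comparison).} For complex numbers $a_q,b_q$ in the closed unit disk, $|\prod a_q-\prod b_q|\le\sum|a_q-b_q|$. Apply this twice. First, with $a_q=\varphi_{n,q}(t)$, Step 2 controls $|\prod_q\varphi_{n,q}(t)-\prod_q b_{n,q}|$. Second, with $a_q=b_{n,q}$ and $e^{-t^2s_{n,q}^2/2}$, using $|e^{-x}-1+x|\le x^2/2$ for $x\ge0$, we get
\[
\Bigl|\prod_q b_{n,q} - \prod_q e^{-t^2 s_{n,q}^2/2}\Bigr| \le \frac{t^4}{8}\sum_q s_{n,q}^4 \le \frac{t^4}{8}\max_q s_{n,q}^2,
\]
which tends to zero by Step 1. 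Therefore
\[
\limsup_n \bigl|\bb{E}[e^{itS_n}]-e^{-t^2/2}\bigr| \le \epsilon|t|^3 .
\]
Letting $\epsilon\to0$ and invoking Lévy's continuity theorem gives $\rm{Law}(S_n)\rightharpoonup\cl{N}(0,1)$.

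The main obstacle is Step 1. Without uniform negligibility of the individual variances, the second product comparison fails and $b_{n,q}$ need not lie in the unit disk. This is exactly where the Lindeberg condition is used beyond Step 2; the remaining steps are routine bookkeeping.
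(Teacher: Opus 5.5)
Your proof is correct, and you were right to read the hypothesis as ``Lindeberg triangular array''. As literally written the conclusion is false: take $m(n)=1$ and $Y_{n,1}$ a symmetric $\pm 1$ variable. The paper gives no proof of this statement; it quotes it as the classical Lindeberg theorem and uses it only as a black box in the proof of Theorem~\ref{thm: generalized_Lindeberg_CLT}. Your characteristic-function argument (uniform negligibility of the variances, the second-order Taylor bound, and the product-comparison lemma in the unit disk) is the standard textbook proof the paper implicitly relies on.
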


As explained above, our goal will be to replace the finite sum on $q$ with an infinite sum. To this end, we consider a locally compact separable metric space $\fk{X}$ and let
\[
	\varGamma(\fk{X}) := \bigl\{\Lambda\subset \fk{X}: \#(\Lambda\cap K)<+\infty\;\;\text{for every compact set $K\subset\fk{X}$}\bigr\},
\]
denote the space of locally finite configurations over $\fk{X}$. Accordingly, we introduce the following notions of \emph{generalized triangular and Lindeberg arrays}. 

\begin{definition}[Generalized triangular array]\label{def:generalized-lindeberg-array}
	Let $(\Omega,\cl{F},\bb{P})$ be a probability space and $\{\Lambda_n\}_{n\in\bb{N}_0}\subset \varGamma(\fk{X})$ be a family of locally finite configurations over $\fk{X}$. A \emph{generalized triangular array} is a family 
	\[
		\bb{X} = \bigl\{X_{n,p} : p \in \Lambda_n\in\varGamma(\fk{X}),\, n \in\bb{N}_0\bigr\}
	\] 
	of random variables on $\Omega$ such that
	\begin{enumerate}
		\item $\{X_{n,p}: p \in \Lambda_n\}$ is a family of independent random variables for every $n\ge 0$, 
		\item $X_{n,p}$ has finite second moment for every $p\in\Lambda_n$ and $n\in\bb{N}_0$, and we have that
		\begin{equation*}
			0 < \sigma_n^2:=\sum_{p \in \Lambda_n} \rm{Var}(X_{n,p}) < +\infty.
		\end{equation*}
	\end{enumerate}
	The generalized triangular array $\bb{X}$ is called a \emph{generalized Lindeberg triangular array} if, in addition,
	\begin{equation} \label{eq: Lindeberg_condition}
		\lim_{n \to \infty} \sum_{p \in \Lambda_n} \bb{E}[|S_{n,p}|^2 \mathbbm{1}_{\{|S_{n,p}| > \epsilon\}}] = 0\qquad\text{for every $\epsilon>0$},
	\end{equation}
	where $S_{n,p} := \sigma_n^{-1} (X_{n,p} - \bb{E}[X_{n,p}])$ for every $p\in\Lambda_n$ and $n\in\bb{N}_0$.
\end{definition}

\begin{remark}[Lyapunov condition]
	A sufficient condition for the \emph{Lindeberg condition} \eqref{eq: Lindeberg_condition} is the \emph{Lyapunov condition}: There exists some $\delta>0$ such that
	\begin{equation} \label{eq: Lyapunov_condition}
		\lim_{n\to\infty}\sum_{p \in \Lambda_n} \bb{E}[|S_{n,p}|^{2+\delta}] = 0.
	\end{equation}
\end{remark}

We now state the generalized triangular array version of Lindeberg CLT, from which Theorem~\ref{thm: CLT-geometric} directly follows.

\begin{theorem}[Generalized Lindeberg CLT] \label{thm: generalized_Lindeberg_CLT}
    Let $\bb{X}$ be a generalized Lindeberg array (satisfying condition \eqref{eq: Lindeberg_condition}). Then,
    \[
    \rm{Law}(S_n)\rightharpoonup \cl{N}(0,1)\quad\text{weakly as $n\to\infty$,\quad where\;\;}S_n=\sum_{p\in\Lambda_n}S_{n,p}.
\]
    If the Lyapunov condition \eqref{eq: Lyapunov_condition} holds with parameter $\delta>0$, then the $\theta$-moment of $S_n$ also converges for any $\theta\in[1, 2+\delta)$. 
\end{theorem}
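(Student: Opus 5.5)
The plan is to reduce the generalized Lindeberg CLT to the classical finite version, Theorem~\ref{thm: Lindeberg_CLT}, by truncating each infinite sum to a finite one.

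Fix $n$. Since $\sum_{p\in\Lambda_n}\mathrm{Var}(X_{n,p})=\sigma_n^2<+\infty$ and the summands are independent and centered, the series $S_n=\sum_{p}S_{n,p}$ converges in $L^2(\Omega)$ and almost surely. Enumerate $\Lambda_n$ by exhausting $\fk{X}$ with compacts; the configuration is locally finite, so it is countable. Choose a finite set $F_n\subset\Lambda_n$ with
\[
\sum_{p\in\Lambda_n\setminus F_n}\bb{E}[|S_{n,p}|^2]\le 1/n .
\]
Write $S_n = T_n + R_n$, where $T_n=\sum_{p\in F_n}S_{n,p}$. Then $\bb{E}[R_n^2]\le 1/n\to0$, so $R_n\to0$ in $L^2$.

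Set $s_n^2:=\bb{E}[T_n^2]\in[1-1/n,1]$ and $Y_{n,q}:=S_{n,p_q}/s_n$ for $p_q\in F_n$. This is a classical triangular array with $m(n)=\#F_n$. The Lindeberg condition for it follows from \eqref{eq: Lindeberg_condition}: since $s_n\ge 1/2$ for large $n$,
\[
\sum_q \bb{E}[|Y_{n,q}|^2\mathbbm{1}_{\{|Y_{n,q}|\ge\epsilon\}}]\le 4\sum_{p\in\Lambda_n}\bb{E}[|S_{n,p}|^2\mathbbm{1}_{\{|S_{n,p}|>\epsilon/2\}}]\to0 .
\]
Theorem~\ref{thm: Lindeberg_CLT} then gives $T_n/s_n\rightharpoonup\cl{N}(0,1)$. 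Because $s_n\to1$, we also get $T_n\rightharpoonup\cl{N}(0,1)$, and Slutsky's lemma with $R_n\to0$ in probability yields $S_n\rightharpoonup\cl{N}(0,1)$.

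The main obstacle is the convergence of moments. Weak convergence together with uniform integrability of $|S_n|^\theta$ for $\theta<2+\delta$ gives convergence of the $\theta$-moments, so it suffices to bound $\sup_n\bb{E}[|S_n|^{2+\delta}]$. For this I would apply Rosenthal's inequality for independent centered variables, first to finite partial sums and then passing to the limit by Fatou's lemma:
\[
\bb{E}[|S_n|^{2+\delta}]\le C_\delta\Bigl(\sum_p\bb{E}[|S_{n,p}|^{2+\delta}]+\Bigl(\sum_p\bb{E}[S_{n,p}^2]\Bigr)^{(2+\delta)/2}\Bigr)=C_\delta\bigl(o(1)+1\bigr).
\]
The first sum is $o(1)$ by the Lyapunov condition \eqref{eq: Lyapunov_condition}, and the second equals $1$ by normalization. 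This bound is uniform in $n$, so $\{|S_n|^\theta\}$ is uniformly integrable for every $\theta<2+\delta$, and the moment convergence for $\theta\in[1,2+\delta)$ follows.
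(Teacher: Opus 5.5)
Your proposal is correct and follows essentially the same route as the paper: truncate to a finite set carrying all but $1/n$ of the variance, apply the classical Lindeberg theorem, and get the moment convergence from a uniform $(2+\delta)$-moment bound obtained via Rosenthal's inequality on finite partial sums. The only cosmetic difference is that the paper keeps the whole tail as one extra summand $Y_{n,0}$ of the finite array, so the variances still sum to exactly $1$ and no Slutsky step is needed, whereas you discard the tail, renormalize by $s_n$, and then apply Slutsky.
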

\begin{proof}
    Up to a renormalization of the triangular array, we can assume that $\sigma_n = 1$ and that all the random variables $X_{n,p}$ are centered, i.e., $S_{n,p}=X_{n,p}$ in this case.
    
    Let $(K_j)_{j \geq 0}$ be a compact exhaustion of $\cl{X}$. By construction, we find for each $n \geq 0$ an index $j(n)\in\bb{N}$ such that $\sum_{p \in \Lambda_n \backslash K_{j(n)}} \bb{E}[|X_{n,p}|^2] < 1/n$. Let us consider an enumeration $p_1,\ldots,p_{m(n)}$ of the points of $\Lambda_n {\cap} K_{j(n)}$. Set
    \begin{equation*}
        Y_{n,0} := \sum_{p \in \Lambda_n \backslash K_{j(n)}} X_{n,p}, \qquad Y_{n,k} = X_{n,p_k}, \quad k=1,\ldots,m(n).
    \end{equation*}
    We then introduce the new finite triangular array $\bb{Y} = \{Y_{n,k} : k=0,\ldots,m(n),\, n\in\bb{N}_0\}$.
    
    Since $(X_{n,p})_{p\in\Lambda_n}$ is independent, so is $(Y_{n,k})_{k=0,\ldots,m(n)}$, with $\bb{E}[Y_{n,k}] = 0$, and 
    \begin{equation*}
        \sum_{k=0}^{m(n)} \rm{Var}(Y_{n,k}) = \sum_{p \in \Lambda_n \backslash K_{j(n)}} \rm{Var}(X_{n,p}) + \sum_{p \in K_{j(n)}} \rm{Var}(X_{n,p}) = \sigma_n = 1.
    \end{equation*}
    Furthermore, for any $\epsilon > 0$,
    \begin{align*}
        \sum_{k=0}^{m(n)} \bb{E}[|Y_{n,k}|^2 \mathbbm{1}_{\{|Y_{n,k}| > \epsilon\}}] 
        &= \bb{E}[|Y_{n,0}|^2 \mathbbm{1}_{|Y_{n,0}| > \epsilon}]] + \sum_{p \in \Lambda_n {\cap} K_{j(n)}} \bb{E}[|X_{n,p}|^2 \mathbbm{1}_{|X_{n,p}| \geq \epsilon}] \\
        &\leq \bb{E}[|Y_{n,0}|^2] + \sum_{p \in \Lambda_n {\cap} K_{j(n)}} \bb{E}[|X_{n,p}|^2 \mathbbm{1}_{|X_{n,p}| > \epsilon}] \\
        &= \sum_{i \in \Lambda_n \setminus K_{j(n)}} \bb{E}[|X_{n,i}|^2] + \sum_{p \in \Lambda_n {\cap} K_{j(n)}} \bb{E}[|X_{n,p}|^2 \mathbbm{1}_{|X_{n,p}| > \epsilon}] \\
        &\leq \frac{1}{n} + \sum_{p \in \Lambda_n} \bb{E}[|X_{n,p}|^2 \mathbbm{1}_{\{|X_{n,p}| > \epsilon\}}] \to 0.
    \end{align*}
    Hence, the finite triangular array $\bb{Y}$ satisfies the classical Lindeberg condition \eqref{eq: classical_Lindeberg_condition}, which implies the convergence of the distribution of $S_n=\sum_{q=0}^{m(n)} Y_{n,k} = \sum_{p \in \Lambda_n} X_{n,p}$ to the standard normal distribution $\cl{N}(0,1)$, thereby concluding the first part of the statement. 

    Now let the Lyapunov's condition (\ref{eq: Lyapunov_condition}) be satisfied with parameter $\delta>0$. Then, by Rosenthal's inequality, for any $J \subset \Lambda_n$ finite, we have that
    \begin{align*}
        \bb{E} \left[\biggl | \sum_{p \in J} X_{n,p} \biggr |^{2+\delta}\right] &\leq C_\delta \max \left \{ \sum_{p \in J} \bb{E}[|X_{n,p}|^{2+\delta}], \left (\sum_{p \in J} \bb{E}[|X_{n,p}|^2] \right )^{1+\delta/2} \right \} \\
        &\leq C_\delta \max \left \{ \sum_{p \in \Lambda_n} \bb{E}[|X_{n,p}|^{2+\delta}], 1 \right \}.
    \end{align*}
    Taking the limit along a sequence of finite sets $(J_N)_N$ whose union is $\Lambda_n$ (for instance, intersecting with the compact exhaustion), and using the Lyapunov condition, we obtain
    \begin{equation*}
        \limsup_{n\to\infty} \bb{E}[|S_n|^{2+\delta}] <+\infty.
    \end{equation*}
    This provides a uniform $(2+\delta)$-moment bound, and standard compactness in the Wasserstein space of order $\theta$ for $\theta\in[1, 2+\delta)$ provides the asserted moment convergence. (Here we recall that the Wasserstein space of order $\theta$ is the space of all probability measures with finite $\theta$-th moment, endowed with the Wasserstein distance of order $\theta$, denoted by $\bb{W}_\theta$). 
\end{proof}

\begin{proof}[Proof of Theorem~\ref{thm: CLT-geometric}]
    This follows directly from Theorem~\ref{thm: generalized_Lindeberg_CLT}.
\end{proof}

We finish this Section by stating a technical Lemma, useful for the Bose–Einstein condensation application. It states that if a sum of two independent copies of a random variable satisfies a CLT, then the random variable itself also satisfies a CLT. 

\begin{lemma} \label{lemma: CLT_sum_CLT_non_sum}
    Let $(X_N)_{N \geq 0}$ be a sequence of random variables, and let $Y_N = X_N^1 + X_N^2$ where $X_N^1,X_N^2$ are two independent copies of $X_N$. Then if $\mathrm{Law}(Y_N)$ converges weakly to $\cl{N}(0,1)$ together with all its $\theta$-moments for $\theta \in [1,p)$, with $p > 2$, then $\mathrm{Law}(X_N)$ converges weakly to $\cl{N}(0,1/2)$ together with all its $\theta$-moments for $\theta \in [1,p)$. 
\end{lemma}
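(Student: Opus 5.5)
The plan is to work with characteristic functions for the weak convergence and to handle moments separately through uniform integrability. Set $\varphi_N(t) := \bb{E}[e^{itX_N}]$. By independence, $\bb{E}[e^{itY_N}] = \varphi_N(t)^2$, and by hypothesis $\varphi_N(t)^2 \to e^{-t^2/2}$ for every $t$.

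\textbf{Step 1: tightness.} I would first show that $\mathrm{Law}(X_N)$ is tight and that the first two moments of $X_N$ are controlled. Convergence of the $\theta$-moments of $Y_N$ for some $\theta\in(2,p)$ gives $\sup_N \bb{E}[|Y_N|^\theta]<\infty$. Convergence of first moments together with symmetrization then gives $\bb{E}[X_N] = \tfrac12 \bb{E}[Y_N] \to 0$. For the centered variable $\tilde X_N := X_N - \bb{E}[X_N]$, the quantity $\tilde X_N^1 + \tilde X_N^2$ is $Y_N$ minus a vanishing constant. By the symmetric-type inequality $\bb{E}|\tilde X^1|^\theta \le \bb{E}|\tilde X^1+\tilde X^2|^\theta$ for independent centered variables (condition on $\tilde X^1$ and apply Jensen to $\tilde X^2$), we get $\sup_N \bb{E}|X_N|^\theta<\infty$. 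This yields tightness and uniform integrability of $|X_N|^{\theta'}$ for every $\theta'<\theta$.

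\textbf{Step 2: identify the limit.} Take any subsequential weak limit $\mu$ of $\mathrm{Law}(X_N)$, with characteristic function $\varphi$. Then $\varphi^2 = e^{-t^2/2}$. Since $\varphi$ is continuous, never vanishes, and satisfies $\varphi(0)=1$, we must have $\varphi(t) = e^{-t^2/4}$ identically: the only continuous square roots are $\pm e^{-t^2/4}$, and the sign is fixed by connectedness of $\bb{R}$. Hence $\mu = \cl{N}(0,1/2)$. Uniqueness of the limit, together with tightness, gives convergence of the full sequence.

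\textbf{Step 3: moments.} Convergence of $\theta$-moments for $\theta\in[1,p)$ then follows from weak convergence plus the uniform bound on the $\theta''$-moment for some $\theta''\in(\theta,p)$ obtained in Step 1 (Vitali). Equivalently, this is convergence in the Wasserstein space of order $\theta$, as in the proof of Theorem~\ref{thm: generalized_Lindeberg_CLT}.

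\textbf{Main obstacle.} I expect the main difficulty to be the moment domination in Step 1, in particular obtaining a uniform bound up to every exponent below $p$ rather than only some of them. This is handled by the centered Jensen inequality, applied at each $\theta''<p$.
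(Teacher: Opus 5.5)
Your proof is correct. It has the same skeleton as the paper's: a uniform moment bound gives $\bb{W}_\theta$-compactness, subsequential limits are identified through the transform of an independent sum, and moment convergence follows by uniform integrability. Both key steps, however, use different tools.

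\textbf{Moment bound.} The paper avoids centering altogether. It introduces a third independent copy and writes $2X_N^1 = Y_N^{12} + Y_N^{13} - Y_N^{23}$, where each $Y_N^{ij}$ has the law of $Y_N$. This gives $2^\eta\,\bb{E}|X_N|^\eta \le 3^{\eta+1}\,\bb{E}|Y_N|^\eta$ directly, with no a priori integrability of $X_N$ needed. Your conditional-Jensen route gives a sharper constant, but it presupposes $X_N\in L^1$ when you write $\bb{E}[X_N]=\tfrac12\bb{E}[Y_N]$ and center. This is true: by Fubini, $\bb{E}|Y_N|<\infty$ forces $\bb{E}|x+X_N^2|<\infty$ for some $x$. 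You should state it. You also do not need $\bb{E}[X_N]\to 0$, only its boundedness, which is immediate from $|\bb{E}[X_N]|\le \tfrac12\bb{E}|Y_N|$.

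\textbf{Identification of the limit.} The paper passes to the joint limit $(X^1,X^2)$ of $(X_N^1,X_N^2)$ and uses the moment generating function, $\bb{E}[e^{\lambda X}]^2 = e^{\lambda^2/2}$ for real $\lambda$. This implicitly relies on reading the identity in $[0,\infty]$ to obtain finiteness, and on the fact that a finite MGF determines the law. Your characteristic-function argument is lighter. You have $\varphi^2=e^{-t^2/2}$, so $\varphi$ is continuous and nonvanishing with $\varphi(0)=1$. Hence $\varphi(t)e^{t^2/4}\equiv 1$ by connectedness. This needs no exponential integrability and no joint-convergence step, since $\bb{E}[e^{itY_N}]=\varphi_N(t)^2$ holds exactly at every $N$.
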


\begin{proof}
    We divide into two steps:
    
    \medskip\noindent \emph{Step 1: Tightness.} Let $\theta < p$ and choose $\eta$ with $\theta < \eta < p$. Let $X_N^3$ be a third independent copy of $X_N$, then, as $2 X_N^1 = Y_N^{12} + Y_N^{13} - Y_N^{23}$ with $Y_N^{ij} = X_N^i + X_N^j$, and using that $|a+b+c|^\eta \leq 3^\eta(|a|^\eta + |b|^\eta + |c|^\eta)$ (which follows by convexity). We have
    \[
        2^\eta \bb{E}[|X_N|^\eta] = \bb{E}[|(X_N^1 + X_N^2) + (X_N^1 + X_N^3) - (X_N^2 + X_N^3)|^\eta] \leq 3^{\eta+1} \bb{E}[|Y_N|^\eta].
    \]
    Hence $\sup_N \bb{E}[|X_N|^\eta] < +\infty$, which shows that $(\rm{Law}(X_N))_N$ is relatively compact in $\bb{W}_\theta$. 
        
    \medskip\noindent \emph{Step 2: Identification of limit.} Let $X$ be a $\bb{W}_\theta$ limit point of $(X_N)_N$ along a subsequence, the Lemma follows once we show that $X \sim \cl{N}(0,1/2)$. 
    
    Let $X^1,X^2$ be independent copies of $X$, then $(X_N^1,X_N^2)$ converges in law (up to subsequence) to $(X^1,X^2)$. By the continuous mapping theorem, this implies that $Y_N = X_N^1 + X_N^2$ converges in law (up to subsequence) to $X^1 + X^2$, thereby, $X^1 + X^2 \sim \cl{N}(0,1)$. As a consequence, for all real $\lambda$, we have $\bb{E}[e^{\lambda X}]^2 = \bb{E}[e^{\lambda(X^1 + X^2)}] = e^{\lambda^2/2}$, which shows that $\bb{E}[e^{\lambda X}] = e^{\lambda^2/4}$, i.e. $X \sim \cl{N}(0,1/2)$, concluding the proof.
\end{proof}

\section{A Riemann-sum approximation lemma}\label{app:riemann-sum}

In this section, we prove Lemma~\ref{lem:dense-mean-variance}, together with the Riemann-sum approximation on which it rests. The latter converts the infinite lattice sums of Section~\ref{sec:dense-regime} into integrals over $\bb{R}^d_0$, and is stated for functions with a possible singularity at the origin, as required by the weights $\mu^\mathsf{QD}$ of Section~\ref{sec:BEC}. 

\begin{lemma}\label{lemma: Riemann_approx}
	Let $f : \bb{R}^d_0 \to [0,+\infty)$ be a continuous function satisfying 
	\begin{enumerate}
		\item (Behavior at $+\infty$) There exists $\gamma > d$ such that $\limsup_{|x|\to+\infty} |x|^\gamma f(x)<+\infty$.
		\item (Behavior at $0$) There exists $\alpha < d$ such that $\limsup_{|x|\to 0} |x|^\alpha f(x) <+\infty$.
	\end{enumerate}
	Then we have
	\[
	\lim_{\kappa\to 0} \sum_{p \in \kappa \bb{Z}^d_0} \kappa^d f(p) = \int_{\bb{R}^d_0} f(x) \dd{x}
	\]
	where all the quantities involved are finite.
\end{lemma}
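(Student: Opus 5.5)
We split the lattice sum and the integral into three regions: a small ball $B_r=\{|x|<r\}$, a large exterior $\{|x|>R\}$, and the annulus $A_{r,R}=\{r\le |x|\le R\}$, with $0<r<1<R$ fixed first and $\kappa\to 0$ afterwards. Finiteness of $\int f$ is immediate: near $0$, $f\le C|x|^{-\alpha}$ with $\alpha<d$ is integrable in polar coordinates; at infinity, $f\le C|x|^{-\gamma}$ with $\gamma>d$ is integrable; and $f$ is continuous, hence bounded, on compact annuli. Finiteness of the lattice sum for each fixed $\kappa$ follows in the same way, since the origin is excluded, and the tail is controlled by comparison with $\zeta_d(\gamma)$.

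\emph{Annulus.} Associate to each $p\in\kappa\bb{Z}^d_0$ the cube $Q_p=p+[-\kappa/2,\kappa/2)^d$ of volume $\kappa^d$. On the compact set $\{r/2\le|x|\le 2R\}$, $f$ is uniformly continuous. Hence
\[
\Bigl|\sum_{p\in A_{r,R}}\kappa^d f(p)-\int_{\bigcup_{p\in A_{r,R}}Q_p}f\,\dd{x}\Bigr|\le \omega_f(\sqrt d\,\kappa)\,|B_{2R}|\to 0,
\]
where $\omega_f$ denotes the modulus of continuity. The symmetric difference between $\bigcup_p Q_p$ and $A_{r,R}$ lies in a $\sqrt d\kappa$-neighbourhood of the two spheres, whose volume is $O(\kappa)$, and $f$ is bounded there. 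So the annulus contribution converges to $\int_{A_{r,R}}f$.

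\emph{Exterior and interior regions.} For $|x|\ge R$ with $R$ large, $f(x)\le C|x|^{-\gamma}$. Moreover, for $|p|\ge R\ge 2\sqrt d\kappa$, every point of $Q_p$ has modulus comparable to $|p|$. Hence
\[
\sum_{|p|>R}\kappa^d f(p)\le C'\int_{|x|>R/2}|x|^{-\gamma}\dd{x}=O\bigl(R^{d-\gamma}\bigr),
\]
uniformly in $\kappa$. The interior region is the step I expect to need the most care, because the singularity meets the excluded point. For $0<|p|<r$ we have $f(p)\le C|p|^{-\alpha}$. When $p\in\kappa\bb{Z}^d_0$, every $x\in Q_p$ satisfies $|x|\le|p|+\sqrt d\kappa/2\le(1+\sqrt d/2)|p|$, using $|p|\ge\kappa$, so $|p|^{-\alpha}\le c\,|x|^{-\alpha}$ (trivially if $\alpha\le 0$). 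Therefore
\[
\sum_{0<|p|<r}\kappa^d f(p)\le cC\int_{|x|<(1+\sqrt d)r}|x|^{-\alpha}\dd{x}=O\bigl(r^{d-\alpha}\bigr),
\]
uniformly in $\kappa$. The corresponding integral pieces are bounded in the same way.

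Combining the three regions gives
\[
\limsup_{\kappa\to0}\Bigl|\sum_{p}\kappa^d f(p)-\int f\Bigr|\le C\bigl(r^{d-\alpha}+R^{d-\gamma}\bigr).
\]
Letting $r\to0$ and $R\to\infty$ concludes the proof. Lemma~\ref{lem:dense-mean-variance} then follows by applying this argument to $f_\mu^\theta$, which satisfies the hypotheses with $\alpha=\theta\beta<d$ and exponent $\theta\gamma>d$.
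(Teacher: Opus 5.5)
Your proof is correct, but it takes a more hands-on route than the paper. The paper argues measure-theoretically:
\begin{itemize}
\item the rescaled counting measures $\kappa^d\sum_{p}\delta_p$ converge vaguely to Lebesgue measure on $\bb{R}^d_0$, and hence so do their products with the continuous $f$;
\item tightness comes from a Lyapunov-type function $\psi$, equal to $|x|_\infty^{-\delta}$ near $0$ and $|x|_\infty^{\delta}$ near infinity, whose integrals are bounded uniformly in $\kappa$ by the shell-counting estimates of Lemma~\ref{lemma: sum_tail_estimates}; this is why it needs the margin $\alpha+\delta<d<\gamma-\delta$;
\item vague convergence plus tightness gives narrow convergence, and thus convergence of the total mass.
\end{itemize}
Your annulus step is a concrete, quantitative proof of that vague convergence, with error $\omega_f(\sqrt d\,\kappa)\,|B_{2R}|+O(\kappa)$. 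Your interior and exterior bounds are exactly the tightness statement. You prove them directly by comparing each term $\kappa^d f(p)$ with the integral of $C|x|^{-\alpha}$ or $C|x|^{-\gamma}$ over the cell $Q_p$. The observation that $|p|\ge\kappa$ for $p\neq 0$ keeps each cell comparable to $|p|$. This lets you bypass Lemma~\ref{lemma: sum_tail_estimates} and the $\delta$-margin entirely.

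Your version is self-contained and elementary, and it is close in spirit to the cell-by-cell argument of Lemma~\ref{lem: improved_Riemann_approx}. The paper's version is shorter given standard facts. It also transfers verbatim to any family of point configurations whose rescaled counting measures converge vaguely, which is the generalization the authors have in mind.

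One small slip: for $\alpha<0$, the inequality $|p|^{-\alpha}\le c\,|x|^{-\alpha}$ needs a lower bound $|x|\gtrsim |p|$, not the upper bound you derived. The lower bound does hold, e.g.\ via $|x|_\infty\ge |p|_\infty/2$. It is simpler, though, to note that the hypothesis for some $\alpha<0$ implies it for $\alpha=0$, so you may assume $\alpha\in[0,d)$ from the start.
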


\begin{proof}
    Consider the family of rescaled counting 
    measures
    \[
    	\mu^\kappa := \kappa^d \sum_{p \in \kappa\bb{Z}^d_0} \delta_p\qquad \text{on \;$\kappa\bb{Z}^d$}.
    \]
    It is known that $\mu_\kappa$ converges vaguely (i.e., in duality with $C_c(\bb{R}^d_0)$) to the Lebesgue measure $\uplambda^d$ as $\kappa \to 0$.

    Define the measures $\eta^\kappa := f \mu^\kappa$ and $\eta = f \uplambda^d$. By continuity of $f$, we have that $\eta^\kappa \rightharpoonup \eta$ vaguely. Since $\kappa^d \sum_{p \in \kappa \bb{Z}^d_0} f(p) = \eta^\kappa(\bb{R}^d_0)$ and $\int_{\bb{R}^d_0} f(x) \dd{x} = \eta(\bb{R}^d_0)$, it suffices to show that the family $(\eta^\kappa)_{\kappa \leq \kappa_0}$ is tight for some $\kappa_0 > 0$. We recall that a sufficient condition is the existence of a measurable function $\psi : \bb{R}^d_0 \to [0,+\infty]$ with relatively compact sublevel sets such that $\sup_{\kappa \leq \kappa_0} \int_{\bb{R}^d_0} \psi \dd{\eta^\kappa} < +\infty$. We prove the existence of such a function.

    By the assumptions on $f$, there exists $0 < r < R$ and $C > 0$ such that
    \[
        \begin{cases}
            f(x) \leq \frac{C}{|x|_\infty^\alpha} & \text{for $|x|_\infty \leq r$}, \\
            f(x) \leq \frac{C}{|x|_\infty^\gamma} & \text{for $|x|_\infty \geq R$}.
        \end{cases}
    \]
    Choosing $\delta > 0$ such that $\alpha+\delta < d < \gamma-\delta$ and defining
    \[
        \psi(x) := 
        \begin{cases}
            \frac{1}{|x|_\infty^\delta} & \text{if $|x|_\infty \leq r$}, \\
            |x|_\infty^\delta & \text{if $|x|_\infty \geq R$}, \\
            0 & \text{otherwise}, 
        \end{cases}
    \]
    which, by construction, has relatively compact level sets on $\bb{R}^d_0$, we deduce
    \begin{align*}
        \int_{\bb{R}^d_0} \psi \dd{\eta^\kappa} &= \int_{0 < |x|_\infty \leq r} \frac{1}{|x|_\infty^\delta} f(x)\, \mu^\kappa(\dd x) + \int_{|x|_\infty \geq R} |x|_\infty^\delta\, f(x)\, \mu^\kappa(\dd x) \\
        &\leq C \left ( \kappa^d \sum_{\substack{p \in \kappa \bb{Z}^d_0 \\ |p|_\infty \leq r}} \frac{1}{|p|_\infty^{\alpha+\delta}} + \kappa^d \sum_{\substack{p \in \kappa \bb{Z}^d_0 \\ |p|_\infty \geq R}} \frac{1}{|p|_\infty^{\gamma-\delta}} \right ).
    \end{align*}
    Let $\kappa_0 = \min(r,R/2)$. Then for all $\kappa \leq \kappa_0$, we can apply Lemma~\ref{lemma: sum_tail_estimates} (using $\alpha+\delta < d < \gamma-\delta$) to obtain, after possibly modifying the constant $C$ (independently of $\kappa \leq \kappa_0$),
    \[
        \int_{\bb{R}^d_0} \psi \dd{\eta^\kappa} \leq C (r^{d-\alpha-\delta} + R^{d+\delta-\gamma}).
    \]
    This bound is uniform in $\kappa\le \kappa_0$, which shows that the family $(\eta^\kappa)_{\kappa \leq \kappa_0}$ is tight. This completes the proof, since tightness, together with vague convergence, upgrades $\eta^\kappa \rightharpoonup \eta$ to narrow convergence and, in particular, implies convergence of the total mass.
\end{proof}

\begin{proof}[Proof of Lemma~\ref{lem:dense-mean-variance}]
	Let $f_\mu:\bb{R}^d\backslash\{0\}\to \bb{R}$ be a continuous function satisfying assumption \eqref{ass:dense-conditions-proof} and let $\theta\in[1,d/\beta)$. Then, $\theta\gamma \ge \gamma>d$ and 
	\[
		\limsup_{|x|\to+\infty}|x|^{\theta\gamma} f^\theta(x) = \Bigl(\limsup_{|x|\to+\infty}|x|^\gamma f_\mu(x)\Bigr)^{\theta} < +\infty.
	\]
	Moreover, since $\theta\beta < d$, we also have that
	\[
		\limsup_{|x|\to 0} |x|^{\theta\beta} f^\theta(x) = \Bigl(\limsup_{|x|\to 0}|x|^\beta f_\mu(x)\Bigr)^{\theta} < +\infty.
	\]
	Therefore, $f_\mu^\theta$ satisfies the assumptions in Lemma~\ref{lemma: Riemann_approx} for every $\theta\in[1,d/\beta)$. Applying Lemma~\ref{lemma: Riemann_approx} then concludes the proof.
\end{proof}

When $f$ satisfies some additional growth assumptions, the error term can be measured much more precisely in the above approximation. This is particularly useful in the case of Bose--Einstein condensation, where an explicit formula for the mean $\mu_x$ is available. 

\begin{lemma} \label{lem: improved_Riemann_approx}
    Let $f : \bb{R}^3_0 \to \bb{R}$ be a $C^2$-function satisfying the assumptions of Lemma~\ref{lemma: Riemann_approx} (with $d=3$) and, in addition: 
    \begin{enumerate}
        \item (Behavior at $+\infty$) There exists $\beta > 3$ such that $\limsup_{|x| \to +\infty} |x|^\beta |D^2 f(x)| < +\infty$. 
        \item (Behavior at $0$) There exists $\alpha < 3$ and some constant $a \in \bb{R}$ such that, defining $g(x) := f(x) - a |x|^{-1}$, one has $\limsup_{|x| \to 0} (|g(x)| + |x|^\alpha |D^2 g(x)|) < +\infty$. 
    \end{enumerate}
    Then we have
    \[
        \kappa^3 \sum_{p \in \kappa \bb{Z}^3_0} f(p) = \int_{\bb{R}^3_0} f(x) \dd{x} + O(\kappa^2).
    \]
\end{lemma}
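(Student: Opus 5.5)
The plan is to remove the Coulomb-type singularity with a cutoff, treat the regular remainder by a second-order midpoint-rule estimate, and treat the singular piece by a fourth-order midpoint estimate that exploits the harmonicity of $|x|^{-1}$. Fix a smooth radial cutoff $\chi\in C_c^\infty(\bb{R}^3)$ with $\chi\equiv 1$ on $\{|x|\le 1\}$. Decompose
\[
	f = a\,\chi(x)|x|^{-1} + h,\qquad h := f - a\,\chi(x)|x|^{-1},
\]
and prove the $O(\kappa^2)$ estimate separately for each piece. For $p\in\kappa\bb{Z}^3$ let $Q_p := p+[-\kappa/2,\kappa/2]^3$. These cells tile $\bb{R}^3$, so $\int_{\bb{R}^3_0} u\,\dd x - \kappa^3\sum_{p\in\kappa\bb{Z}^3_0}u(p)$ splits as $\int_{Q_0}u$ plus the sum over $p\neq 0$ of the local errors $\int_{Q_p}\bigl(u(x)-u(p)\bigr)\dd x$.

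\emph{Regular part $h$.} Near $0$ we have $h=g$, so $h$ is bounded and $|D^2h(x)|\le C|x|^{-\alpha}$ with $\alpha<3$. On the bounded region where $\chi$ is non-constant, $h$ is $C^2$, since $f$ is $C^2$ on $\bb{R}^3_0$ and that region stays away from the origin. For large $|x|$ we have $h=f$ and $|D^2h|\le C|x|^{-\beta}$ with $\beta>3$. Because the odd part integrates to zero over the symmetric cube, the first-order Taylor term drops out, and for $p\ne0$
\[
	\Bigl|\int_{Q_p}\bigl(h(x)-h(p)\bigr)\dd x\Bigr| \le C\kappa^5\sup_{Q_p}|D^2h| .
\]
For $p\ne0$ one has $|x|\asymp|p|$ uniformly on $Q_p$, so $\kappa^3\sup_{Q_p}|D^2h|$ is comparable to $\int_{Q_p}w$, with $w(x):=\min(|x|^{-\alpha},1)+\min(1,|x|^{-\beta})$ up to constants. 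Summing over $p\ne0$ therefore gives $O\bigl(\kappa^2\int_{\bb{R}^3}w\bigr)=O(\kappa^2)$, which is finite because $\alpha<3<\beta$. The origin cell contributes $|\int_{Q_0}h|\le C\kappa^3$, since $h$ is bounded near $0$.

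\emph{Singular part $u:=\chi|x|^{-1}$.} This is the step I expect to be the main obstacle. The second-order bound above fails here, because $|D^2u|\sim|x|^{-3}$ produces $\kappa^2\sum\kappa^3|p|^{-3}\sim\kappa^2\ln(1/\kappa)$. I will instead expand to fourth order. Odd-order terms vanish by symmetry of $Q_p$, so for $p\neq0$
\[
	\int_{Q_p}\bigl(u(x)-u(p)\bigr)\dd x = \frac{\kappa^5}{24}\,\Delta u(p) + O\Bigl(\kappa^7\sup_{Q_p}|D^4u|\Bigr).
\]
Since $\Delta|x|^{-1}=0$ on $\bb{R}^3_0$, the function $\Delta u$ is smooth and supported in the fixed annulus $\{1\le|x|\le R\}$ where $\chi$ varies. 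Hence $\sum_{p}\kappa^5|\Delta u(p)| = \kappa^2\cdot O\bigl(\kappa^3\#\{p\text{ in annulus}\}\bigr)=O(\kappa^2)$. For the remainder, $|D^4u(x)|\le C|x|^{-5}$ on the compact support of $u$, so $\sum_{0<|p|\le R}\kappa^7|p|^{-5} = \kappa^4\sum_{0<|p|\le R}\kappa^3|p|^{-5}$. The inner sum is $O(\kappa^{-2})$ by comparison with $\int_{\kappa\le|x|\le R}|x|^{-5}\dd x$, which is the kind of estimate provided by Lemma~\ref{lemma: sum_tail_estimates}, so the remainder is again $O(\kappa^2)$. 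Finally, the origin cell gives $\int_{Q_0}|x|^{-1}\dd x = c\,\kappa^2$ by scaling.

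Adding the two pieces gives $\kappa^3\sum_{p\in\kappa\bb{Z}^3_0}f(p)=\int_{\bb{R}^3_0}f\,\dd x+O(\kappa^2)$. The only delicate bookkeeping is checking that $|x|\asymp|p|$ on $Q_p$ for $p\ne0$ (for instance $|x|\ge|p|-\tfrac{\sqrt3}{2}\kappa\ge c|p|$ since $|p|\ge\kappa$) and that the supremum-based Taylor bounds can be summed against the weight $w$.
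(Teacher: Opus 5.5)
Your proof is correct and is essentially the paper's argument. Both use cellwise midpoint-rule cancellation of the linear Taylor term, a second-order Taylor bound for the regular part, and harmonicity of $|x|^{-1}$ together with $\int_{[-\kappa/2,\kappa/2]^3} y\otimes y\,\dd{y}=\tfrac{\kappa^5}{12}\mathrm{I}_3$ to kill the second-order term of the Coulomb part, plus the origin cell $\int_{Q_0}|x|^{-1}\dd{x}=c\,\kappa^2$. Two technical differences: the paper splits the lattice sharply at $|p|_\infty\approx R$ and keeps the exact $|x|^{-1}$ in the near zone, so no boundary term $\Delta u$ appears. It also stops at a third-order remainder $\kappa^6|p|^{-4}$, which is already summable since $\sum_{q\in\bb{Z}^3_0}|q|^{-4}<\infty$, so your fourth-order expansion is more than needed but harmless.

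One slip to fix: the weight $w(x)=\min(|x|^{-\alpha},1)+\min(1,|x|^{-\beta})$ as written has the two regions swapped. It misses the $|x|^{-\alpha}$ singularity at $0$, and it is not integrable at infinity. You want $w(x)=\min(|x|^{-\alpha},|x|^{-\beta})$. This weight dominates $|D^2h|$ up to a constant on all of $\bb{R}^3_0$, it is comparable to its value at $p$ on each $Q_p$ with $p\neq0$, and it is integrable precisely because $\alpha<3<\beta$.
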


\begin{proof}
    We will write $C$ for a constant depending only on the parameters but not on $\kappa$, which may change from line to line. We assume throughout the proof that $\kappa < 1$. By the assumptions, we can find $R > 1$ such that
    \[
        \begin{cases}
            \displaystyle |g(x)| \leq C \quad \text{and} \quad |D^2 g(x)| \leq \frac{C}{|x|_\infty^\alpha} & \text{on $|x|_\infty \leq R+2$}, \\
            \displaystyle |D^2 f(x)| \leq \frac{C}{|x|_\infty^\beta}, &\text{on $|x|_\infty \geq R$}.
        \end{cases}
    \]
    We set $N_\kappa := \lceil R/\kappa \rceil$. Let $Q_\kappa := [-\kappa/2,\kappa/2]^3$, and for all $p \in \kappa \bb{Z}^3$, set $Q_\kappa(p) := p + Q_\kappa$. Using that $|Q_\kappa(p)| = \kappa^3$, we bound
    \begin{align*}
        &\left | \kappa^3 \sum_{p \in \kappa \bb{Z}^3_0} f(p) - \int_{\bb{R}^3_0} f(x) \dd{x} \right | \leq \int_{Q_\kappa} |f(x)| \dd{x} \;\;+ \sum_{\substack{p \in \kappa \bb{Z}^3_0 \\ |p|_\infty \leq \kappa N_\kappa}} \left | \int_{Q_\kappa(p)} (g(x) - g(p)) \dd{x} \right | \\
        &\hspace{4em}+ |a| \sum_{\substack{p \in \kappa \bb{Z}^3_0 \\ |p|_\infty \leq \kappa N_\kappa}} \left | \int_{Q_\kappa(p)} \left ( \frac{1}{|x|} - \frac{1}{|p|} \right ) \dd{x} \right|\;\; + \sum_{\substack{p \in \kappa \bb{Z}^3_0 \\ |p|_\infty \geq \kappa(N_\kappa+1)}} \left | \int_{Q_\kappa(p)} (f(x) - f(p)) \dd{x} \right | \\
        &= \textit{(I)} + \textit{(II)} + \textit{(III)} + \textit{(IV)}.
    \end{align*}
    We show that each term is $O(\kappa^2)$. 
    
    \medskip \noindent \emph{(I): Near $0$ integral.} As $\kappa < 1 \leq R$, we have $|f(x)| \leq |a| \, |x|^{-1} + C$, thereby
    \[
        \int_{Q_\kappa} |f(x)| \dd{x} \leq |a| \int_{Q_\kappa} \frac{1}{|x|} \dd{x} + C \kappa^3.
    \]
    Let $c_d > 0$ such that $Q_\kappa \subset B_{c_d \kappa}$ for all $\kappa > 0$. Since $\int_{B_{c \kappa}} |x|^{-1} \dd{x} = 2 \pi c^2_d \kappa^2$. We obtain 
    \[
        \int_{Q_\kappa} |f(x)| \dd{x} \leq C (\kappa^2 + \kappa^3) = O(\kappa^2).
    \]
        
    \medskip \noindent \emph{(II): Near $0$ $g$-sum.} We first bound $D^2 g$ on each $Q_\kappa(p)$ for $|p|_\infty \leq \kappa N_\kappa$, $p \in \kappa \bb{Z}^3_0$. If $x \in Q_\kappa(p)$, then $|x|_\infty \leq |p|_\infty + |x-p|_\infty \leq \kappa (N_\kappa +1/2) \leq R + 2$. Therefore $|D^2 g(x)| \leq C |x|_\infty^{-\alpha}$. On the other hand, we have $|x|_\infty \geq |p|_\infty - |x-p|_\infty \geq |p|_\infty - \kappa/2$. Since $|p|_\infty \geq \kappa$ (as $p \neq 0$), we can bound this by $|p|_\infty / 2$. Similarly, we have $|x|_\infty \leq 2 |p|_\infty$. Therefore $|x|_\infty^{-\alpha} \leq c |p|_\infty^{-\alpha}$ and we deduce that $|D^2 g(x)| \leq C |p|_\infty^{-\alpha}$. 

    As a consequence, using second-order Taylor estimates, we have on $Q_\kappa(p)$
    \[
        |g(x) - g(p) - \nabla g(p) \cdot (x-p)| \leq C \frac{1}{|p|_\infty^{\alpha}} |x-p|^2 \leq \frac{C \kappa^2}{|p|_\infty^\alpha}. 
    \]
    Using that $\int_{Q_\kappa(p)} \nabla g(p) \cdot (x-p) \dd{x} = 0$ (since the cube is centered), we obtain 
    \[
        \sum_{\substack{p \in \kappa \bb{Z}^3_0 \\ |p|_\infty \leq \kappa N_\kappa}} \left | \int_{Q_\kappa(p)} (g(x) - g(p)) \dd{x} \right | \leq C \kappa^5 \sum_{\substack{p \in \kappa \bb{Z}^3_0 \\ |p|_\infty \leq \kappa N_\kappa}} \frac{1}{|p|_\infty^\alpha}.
    \]
    Using Lemma~\ref{lemma: sum_tail_estimates} we have that 
    \[
        \kappa^3 \sum_{\substack{p \in \kappa \bb{Z}^3_0 \\ |p|_\infty \leq \kappa N_\kappa}} \frac{1}{|p|_\infty^{\alpha}} \leq C (\kappa N_\kappa)^{3-\alpha} = O(1)
    \]
    as $N_\kappa \sim R \kappa^{-1}$. We conclude that $\textit{(II)} = O(\kappa^2)$. 
        
    \medskip \noindent \emph{(III): Near zero inverse sum.} Let $\chi(x) := |x|^{-1}$. By scaling we have
    \[
        \textit{(III)} = |a| \kappa^2 \sum_{\substack{p \in \bb{Z}^3_0 \\ |p|_\infty \leq N_\kappa}} \left | \int_{p + Q_1} \left ( \frac{1}{|x|} - \frac{1}{|p|} \right ) \dd{x} \right |.
    \]
        This time, a first-order Taylor expansion is not enough. Instead we perform a second-order Taylor expansion and write
        \[
            \chi(x) = \chi(p) + \nabla \chi(p) \cdot (x-p) + \frac{1}{2} (x-p)^T D^2 \chi(p) \cdot (x-p) + R_3(\chi,p,x-p).
        \]
        A rough estimate shows that $|D^3 \chi(x)| \leq |x|^{-4}$, hence, comparing the $|\cdot|$-norm to the $|\cdot|_\infty$-norm, and arguing as in the $\textit{(II)}$ term, we have $R_3(\chi,p,x-p) \leq C |p|^{-4}_\infty$ on $p + Q_1$. Integrating we obtain that
        \[
            \left | \int_{p + Q_1} \left ( \frac{1}{|x|} - \frac{1}{|p|} \right ) \dd{x} \right | \leq \frac{1}{2} \left | \int_{Q_1} x^T D^2 \chi(p) x \dd{x} \right | + \frac{C}{|p|_\infty^4}.
        \]
        But $\int_{Q_1} x \otimes x \dd{x} = \frac{1}{12} \rm{I}_3$. Hence
        \[
            \int_{Q_1} x^T D^2 \chi(p) x \dd{x} = D^2 \chi(p) : \int_{Q_1} x \otimes x \dd{x} = \frac{1}{12} \Delta \chi(p) = 0,
        \]
        where we used that $\chi$ is harmonic in $\bb{R}^3_0$ in dimension $3$. We obtain
        \[
            \sum_{\substack{p \in \bb{Z}^3_0 \\ |p|_\infty \leq N_\kappa}} \left | \int_{p + Q_1} \left ( \frac{1}{|x|} - \frac{1}{|p|} \right ) \dd{x} \right | \leq C \sum_{p \in \bb{Z}^3_0} \frac{1}{|p|^4} < +\infty,
        \]
        which shows $\textit{(III)} = O(\kappa^2)$.
        
    \medskip \noindent \emph{(IV): Far from $0$ sum.} If $|p|_\infty \geq \kappa(N_\kappa+1)$ and $x \in Q_\kappa(p)$, then $|x|_\infty \geq |p|_\infty - |x-p|_\infty \geq \kappa (N_\kappa+1/2) \geq R$. Hence we have $|D^2 f(x)| \leq C |x|^{-\beta}_\infty$ on $Q_\kappa(p)$. Since we also have $|x|_\infty \geq |p|_\infty / 2$, we get that $|D^2 f(x)| \leq C |p|^{-\beta}_\infty$. 
    
    Using the same Taylor expansion argument as for $\textit{(II)}$, we have
    \[
        \left | \int_{Q_\kappa(p)} (f(x) - f(p)) \dd{x} \right | \leq \kappa^5 \frac{C}{|p|_\infty^{\beta}}.
    \]
    Summing up, and using that $\beta > 3$ and Lemma~\ref{lemma: sum_tail_estimates},
    \[
        \textit{(IV)} \leq C \kappa^{2} \kappa^3 \sum_{\substack{p \in \kappa \bb{Z}^3_0 \\ |p|_\infty \geq \kappa (N_\kappa+1)}} \frac{1}{|p|_\infty^\beta} \leq C \kappa^{5-\beta} N_\kappa^{3-\beta} = O(\kappa^2)
    \]
    for $\kappa$ sufficiently small. Here, we used that $N_\kappa \to +\infty$, so that eventually $N_\kappa^{-1} \leq 1/2$. This concludes the proof.
\end{proof}

\begin{remark}
    The condition on the behavior at $0$ is a Coulomb-type behavior. For $d > 3$, we expect the same conclusion to hold assuming $\beta > d > \alpha$, using $g(x) = f(x) - a \chi_d(x)$ instead of $f(x) - a |x|^{-1}$, where $\chi_d$ is the fundamental solution to the Poisson equation, that is
    \[
        \chi_d(x) = \frac{2}{d-2} \frac{1}{|x|^{d-2}}.
    \]
    In dimension $2$, one would like to use $\chi_2(x) = 2 \log |x|$, but issues arise in controlling the central cube. So a similar result might hold if we impose additional control on $f$ near $0$.
\end{remark}

In the proof we used the following technical Lemma to obtain estimates for our Riemann-sum approximation. We will use $|\cdot|_\infty$ for the $l^\infty$-norm on $\bb{R}^d$.

\begin{lemma} \label{lemma: sum_tail_estimates}
    Let $\alpha < d < \gamma$. There exists constant $C_{d,\alpha}$ and $C_{d,\gamma}$ such that
    \begin{equation} \label{eq: sum_tail_lower}
        \kappa^d \sum_{\substack{p \in \kappa\bb{Z}^d_0 \\ |p|_\infty \leq \epsilon}} \frac{1}{|p|_\infty^\alpha} \leq C_{d,\alpha} \epsilon^{d-\alpha} \quad \text{for all $\epsilon \geq \kappa$.}
    \end{equation}
    and
    \begin{equation} \label{eq: sum_tail_upper}
        \kappa^d \sum_{\substack{p \in \kappa\bb{Z}^d_0 \\ |p|_\infty \geq \epsilon^{-1}}} \frac{1}{|p|_\infty^\gamma} \leq C_{d,\gamma} \epsilon^{\gamma-d} \quad \text{$\epsilon > 0$\; with\; $\kappa \epsilon < 1/2$.}
    \end{equation}
\end{lemma}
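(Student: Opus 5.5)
The plan is to compare each lattice sum with an integral over dyadic-type shells in the $\ell^\infty$-norm. Everything is scale invariant, so we first substitute $p=\kappa q$ with $q\in\bb{Z}^d_0$. This gives
\[
	\kappa^d \sum_{\substack{p \in \kappa\bb{Z}^d_0 \\ |p|_\infty \leq \epsilon}} |p|_\infty^{-\alpha} = \kappa^{d-\alpha} \sum_{\substack{q\in\bb{Z}^d_0\\ 1\le |q|_\infty\le \epsilon/\kappa}} |q|_\infty^{-\alpha},
\]
and the analogous identity for the tail sum, now with the range $|q|_\infty\ge (\kappa\epsilon)^{-1}$. The only counting fact needed is that the $\ell^\infty$-sphere $\{q\in\bb{Z}^d:|q|_\infty=k\}$ contains $(2k+1)^d-(2k-1)^d\le c_d\,k^{d-1}$ points for every $k\ge1$. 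Each sum therefore reduces to a one-dimensional sum over $k$ with terms of order $k^{d-1-\alpha}$ or $k^{d-1-\gamma}$.

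For \eqref{eq: sum_tail_lower}, set $M:=\lfloor\epsilon/\kappa\rfloor\ge1$, which is where the hypothesis $\epsilon\ge\kappa$ enters. We need $\sum_{k=1}^M k^{d-1-\alpha}\le C M^{d-\alpha}$. When $d-1-\alpha\ge0$ this follows from bounding every term by $M^{d-1-\alpha}$. When $d-1-\alpha<0$ we compare with $\int_0^M s^{d-1-\alpha}\,ds = M^{d-\alpha}/(d-\alpha)$, which is finite because $\alpha<d$. Multiplying by $\kappa^{d-\alpha}$ and using $\kappa M\le\epsilon$ gives $C_{d,\alpha}\epsilon^{d-\alpha}$.

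For \eqref{eq: sum_tail_upper}, set $m:=\lceil(\kappa\epsilon)^{-1}\rceil$. The hypothesis $\kappa\epsilon<1/2$ gives $m\ge2$ and $m-1\ge (\kappa\epsilon)^{-1}/2$. For $k\ge m\ge2$ we have $k^{d-1-\gamma}\le C\int_{k-1}^{k}s^{d-1-\gamma}\,ds$, since the ratio $k/s\le2$ is bounded on that interval. This yields the bound $\sum_{k\ge m}k^{d-1-\gamma}\le C\,(m-1)^{d-\gamma}/(\gamma-d)\le C(\kappa\epsilon)^{\gamma-d}$, using $\gamma>d$. Multiplying by $\kappa^{d-\gamma}$ gives $C_{d,\gamma}\epsilon^{\gamma-d}$.

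No step is a genuine obstacle; the only points needing care are the two boundary cases. In the first estimate we must ensure at least one lattice point lies in the range, which holds because $\epsilon\ge\kappa$. In the second we must keep the lower endpoint of the integral comparison away from $0$, which is where $\kappa\epsilon<1/2$ is used.
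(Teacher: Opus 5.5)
Your proof is correct and follows essentially the same route as the paper: rescale $p=\kappa q$, bound the number of points on each $\ell^\infty$-shell by $O(k^{d-1})$, and compare the resulting one-dimensional sums with integrals, splitting according to the monotonicity of $k^{d-1-\alpha}$. The differences are cosmetic. In the increasing case you bound each term by $M^{d-1-\alpha}$ instead of using the paper's shifted integral; this also makes the hypothesis $\epsilon\ge\kappa$ unnecessary, since otherwise the sum is empty. In the tail you use $m-1\ge(\kappa\epsilon)^{-1}/2$ in place of the paper's factor $(1-\kappa\epsilon)^{\gamma-d}$.
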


\begin{proof}
    Let $A_d \geq 0$ be such that 
    \[
        \# \{ p \in \bb{Z}^d_0, |p|_\infty = n\} \leq A_d n^{d-1}.
    \]
    A simple combinatorial argument shows that $A_d$ is well-defined. Writing $p = \kappa q$, and using integral comparison, for $\alpha + 1 \geq d$, since $t \to t^{d-1-\alpha}$ is non-increasing the near-origin sum satisfies
    \[
        \kappa^d \sum_{\substack{p \in \kappa\bb{Z}^d\backslash\{0\} \\ |p| \leq \epsilon}} |p|^{-\alpha}
        \leq A_d\, \kappa^{d-\alpha} \sum_{n=1}^{\lfloor \epsilon/\kappa \rfloor} n^{d-1-\alpha}
        \leq A_d\, \kappa^{d-\alpha} \int_0^{\epsilon/\kappa} t^{d-1-\alpha}\,\dd{t}
        = \frac{A_d}{d-\alpha}\,\epsilon^{d-\alpha}.
    \]
    While for $\alpha+1 < d$ $t \to t^{d-1-\alpha}$ is non-decreasing, therefore
    \begin{align*}
        \kappa^d \sum_{\substack{p \in \kappa\bb{Z}^d\backslash\{0\} \\ |p| \leq \epsilon}} |p|^{-\alpha}
        &\leq A_d\, \kappa^{d-\alpha} \sum_{n=1}^{\lfloor \epsilon/\kappa \rfloor} n^{d-1-\alpha} \\
        & \leq A_d\, \kappa^{d-\alpha} \int_1^{\epsilon/\kappa+1} t^{d-1-\alpha}\,\dd{t}
        = \frac{A_d}{d-\alpha}\,\epsilon^{d-\alpha} \left ( \left ( 1 + \frac{\kappa}{\epsilon} \right )^{d-\alpha} - 1 \right ),
    \end{align*}
    and since $( 1 + \kappa / \epsilon)^{d-\alpha}-1 \leq 2^{d-\alpha}-1$ provided that $\kappa \leq \epsilon$, we conclude. Similarly, the far-field sum satisfies
    \[
       \kappa^d \sum_{\substack{p \in \kappa\bb{Z}^d_0 \\ |p|_\infty \geq \epsilon^{-1}}} \frac{1}{|p|_\infty^{\gamma}} \leq A_d \, \sum_{n=\lceil \epsilon^{-1} \kappa^{-1} \rceil}^{+\infty} n^{d-1-\gamma} \leq A_d \, \kappa^{d-\gamma} \int_{\epsilon^{-1} \kappa^{-1}-1}^{+\infty} t^{d-1-\gamma} = \frac{A_d}{\gamma-d} \frac{\epsilon^{\gamma-d}}{(1-\kappa \epsilon)^{\gamma-d}}.
    \]
    This concludes the proof since $(1-\kappa \epsilon)^{\gamma-d} \geq 2^{d-\gamma}$ if $\kappa \epsilon \leq 1/2$. 
\end{proof}

\section{Explicit integrals for the quantum depletion}\label{app:refined-QD}

In this section we prove Lemma~\ref{lem:QD-integrals}, which supplies the two explicit constants entering the large scattering asymptotics of Theorem~\ref{thm: BEC-large a}.

Throughout, $\nu(x) = \tfrac{1}{4}\ln\bigl(|x|^2/(|x|^2+1)\bigr)$, so that
\begin{equation}\label{eq:mu-QD-explicit}
	\mu_x^\mathsf{QD} = \sinh^2(\nu(x)) = \frac{\Bigl(\sqrt{\frac{|x|^2}{1+|x|^2}}-1\Bigr)^2}{4 \sqrt{\frac{|x|^2}{1+|x|^2}}} = \frac{1}{4}\left(\sqrt{\frac{|x|^2}{1+|x|^2}} + \sqrt{\frac{|x|^2+1}{|x|^2}} - 2\right),
\end{equation}
which depends on $x$ only through $|x|$.

\begin{proof}[Proof of Lemma~\ref{lem:QD-integrals}]
	Writing $\mu_x^\mathsf{QD} = g(|x|)$ and passing to polar coordinates, we have
	\[
		\int_{\bb{R}^3_0} \mu_x^\mathsf{QD}\dd{x} = 4\pi\int_0^{+\infty} r^2 g(r)\dd{r},\qquad
		\int_{\bb{R}^3_0} \mu_x^\mathsf{QD}\bigl(1+\mu_x^\mathsf{QD}\bigr)\dd{x} = 4\pi\int_0^{+\infty} r^2 g(r)\bigl(1+g(r)\bigr)\dd{r}.
	\]
	To handle the factors $1+r^2$ in \eqref{eq:mu-QD-explicit}, we substitute $r=\sinh(t)$ with $t\ge 0$, so that $\dd{r} = \cosh(t)\dd{t}$ and $\sqrt{r^2/(1+r^2)} = \tanh(t)$. Using that $(\tanh(t)-1)^2 = (\sinh(t)-\cosh(t))^2/\cosh^2(t) = e^{-2t}/\cosh^2(t)$, we obtain
	\begin{equation}\label{eq:f-explicit}
		f(t) := g(\sinh(t)) = \frac{(\tanh(t)-1)^2}{4\tanh(t)} = \frac{e^{-2t}}{4\sinh(t)\cosh(t)}.
	\end{equation}

	\noindent\emph{The first integral.} By \eqref{eq:f-explicit} we have $\sinh^2(t)\cosh(t) f(t) = \tfrac{1}{4}\sinh(t)e^{-2t}$, and hence
	\[
	\begin{aligned}
		\int_{\bb{R}^3_0} \mu_x^\mathsf{QD}\dd{x} &= 4\pi \int_0^{+\infty} \sinh^2(t)\cosh(t) f(t) \dd{t} \\
		&= \pi \int_0^{+\infty}\sinh(t)\, e^{-2t}\dd{t}
		= \frac{\pi}{2}\int_0^{+\infty}\bigl(e^{-t}-e^{-3t}\bigr)\dd{t},
	\end{aligned}
	\]
	which equals $\tfrac{\pi}{2}\bigl(1-\tfrac{1}{3}\bigr) = \tfrac{\pi}{3}$, as asserted.

	\medskip\noindent\emph{The second integral.} Splitting off the linear part and using the first integral, we have
	\[
		\int_{\bb{R}^3_0} \mu_x^\mathsf{QD}\bigl(1+\mu_x^\mathsf{QD}\bigr)\dd{x}
		= \frac{\pi}{3} + 4\pi\int_0^{+\infty}\sinh^2(t)\cosh(t) f(t)^2\dd{t}.
	\]
	By \eqref{eq:f-explicit} once more, $\sinh^2(t)\cosh(t)f(t)^2 = \tfrac{1}{4}\sinh(t)e^{-2t}f(t) = \tfrac{1}{16}e^{-4t}/\cosh(t)$, so that the remaining term equals
	\[
		4\pi\int_0^{+\infty}\sinh^2(t)\cosh(t) f(t)^2\dd{t} = \frac{\pi}{4}\int_0^{+\infty}\frac{e^{-4t}}{\cosh(t)}\dd{t}.
	\]
	Substituting $z = e^{-t}$, for which $\dd{t} = -\dd{z}/z$ and $\cosh(t) = (z+z^{-1})/2$, this last integral becomes
	\[
		\int_0^{+\infty}\frac{e^{-4t}}{\cosh(t)}\dd{t} = 2\int_0^1 \frac{z^4}{1+z^2}\dd{z}
		= 2\int_0^1\left(z^2 - 1 + \frac{1}{1+z^2}\right)\dd{z}
		= \frac{\pi}{2}-\frac{4}{3}.
	\]
	Combining the last three displays, we conclude that
	\[
		\int_{\bb{R}^3_0} \mu_x^\mathsf{QD}\bigl(1+\mu_x^\mathsf{QD}\bigr)\dd{x} = \frac{\pi}{3} + \frac{\pi}{4}\left(\frac{\pi}{2}-\frac{4}{3}\right) = \frac{\pi^2}{8},
	\]
	which completes the proof.
\end{proof}

\bibliographystyle{plain}
\bibliography{references}

@article{Araki,
  author    = {H. Araki},
  title     = {On quasifree states of the canonical commutation relations},
  journal   = {Publ.\ Res.\ Inst.\ Math.\ Sci.},
  volume    = {7},
  year      = {1971},
  pages     = {1--66},
}

@article{BBR,
  author    = {N. Behrmann and C. Brennecke and S. Rademacher},
  title     = {Exponential control of excitations for trapped {B}ose--{E}instein condensates in the {G}ross--{P}itaevskii regime},
  journal   = {Lett.\ Math.\ Phys.},
  volume    = {115},
  year      = {2025},
  pages     = {91},
}

@article{BBCS2018,
  author    = {C. Boccato and C. Brennecke and S. Cenatiempo and B. Schlein},
  title     = {Complete {B}ose--{E}instein condensation in the {G}ross--{P}itaevskii regime},
  journal   = {Commun.\ Math.\ Phys.},
  volume    = {359},
  year      = {2018},
  pages     = {975--1026},
}

@article{BBCS2019,
  author    = {C. Boccato and C. Brennecke and S. Cenatiempo and B. Schlein},
  title     = {{B}ogoliubov theory in the {G}ross--{P}itaevskii limit},
  journal   = {Acta Math.},
  volume    = {222},
  year      = {2019},
  pages     = {219--335},
}

@article{Bogoliubov,
  author    = {N. N. Bogoliubov},
  title     = {On the theory of superfluidity},
  journal   = {J.\ Phys.\ (USSR)},
  volume    = {11},
  year      = {1947},
  pages     = {23--32},
}

@book{BratelliRobinson,
  author    = {O. Bratteli and D. W. Robinson},
  title     = {Operator Algebras and Quantum Statistical Mechanics {II}},
  publisher = {Springer},
  address   = {Berlin},
  edition   = {2nd},
  year      = {1997},
}

@article{Bravyi,
  author    = {S. Bravyi and D. Gosset and R. K{\"o}nig},
  title     = {Quantum advantage with shallow circuits},
  journal   = {Science},
  volume    = {362},
  year      = {2018},
  pages     = {308--311},
}

@article{BSS1,
  author    = {C. Brennecke and B. Schlein and S. Schraven},
  title     = {{B}ose--{E}instein condensation with optimal rate for trapped bosons in the {G}ross--{P}itaevskii regime},
  journal   = {Math.\ Phys.\ Anal.\ Geom.},
  volume    = {25},
  year      = {2022},
  pages     = {12},
}

@article{BSS2,
  author    = {C. Brennecke and B. Schlein and S. Schraven},
  title     = {{B}ogoliubov theory for trapped bosons in the {G}ross--{P}itaevskii regime},
  journal   = {Ann.\ Henri Poincar{\'e}},
  volume    = {23},
  year      = {2022},
  pages     = {1583--1658},
}

@article{Caves,
  author    = {C. M. Caves and B. L. Schumaker},
  title     = {New formalism for two-photon quantum optics. {I}. {Q}uadrature phases and squeezed states},
  journal   = {Phys.\ Rev.\ A},
  volume    = {31},
  year      = {1985},
  pages     = {3068--3092},
}

@article{Coreggi1,
  author    = {M. Correggi and F. Pinsker and N. Rougerie and J. Yngvason},
  title     = {Critical rotational speeds for superfluids in a homogeneous trap},
  journal   = {Journal of Math. Phys.},
  volume    = {53 (9)},
  year      = {2012},
  pages     = {095203},
}

@article{Coreggi2,
  author    = {M. Correggi and N. Rougerie and J. Yngvason},
  title     = {The Transition to a Giant Vortex Phase in a fast rotation {B}ose--{E}instein condensate},
  journal   = {Comm. Math. Phys.},
  volume    = {303},
  year      = {2011},
  pages     = {2409--2439},
}

@article{Dimonte,
  author    = {D. Dimonte and E.L. Giacomelli},
  title     = {On {B}ose--{E}instein condensates in the {T}homas--{F}ermi regime},
  journal   = {Math. Phys. Anal. Geom. },
  volume    = {25},
  year      = {2022},
  pages     = {25},
}

@article{Gasperini,
  author    = {M. Gasperini},
  title     = {{B}ogoliubov transformations and negative binomial distributions for particle creation by gravitational fields},
  journal   = {Prog.\ Theor.\ Phys.},
  volume    = {84},
  year      = {1990},
  pages     = {899--907},
}

@book{Haag,
  author    = {R. Haag},
  title     = {Local Quantum Physics: Fields, Particles, Algebras},
  publisher = {Springer},
  address   = {Berlin},
  edition   = {2nd},
  year      = {1996},
}

@article{H2021,
  author    = {C. Hainzl},
  title     = {Another proof of {B}ose--{E}instein condensation in the {G}ross--{P}itaevskii limit},
  journal   = {J.\ Math.\ Phys.},
  volume    = {62},
  year      = {2021},
  pages     = {051101},
}

@article{HST2022,
  author    = {C. Hainzl and B. Schlein and A. Triay},
  title     = {{B}ogoliubov theory in the {G}ross--{P}itaevskii limit: {A} simplified approach},
  journal   = {Forum Math.\ Sigma},
  volume    = {10},
  year      = {2022},
  pages     = {e10},
}

@article{Idziaszek,
  author    = {Z. Idziaszek and M. Gajda and P. Navez and M. Wilkens and K. Rz{\k{a}}{\.z}ewski},
  title     = {Fluctuations of the weakly interacting {B}ose--{E}instein condensate},
  journal   = {Phys.\ Rev.\ Lett.},
  volume    = {82},
  year      = {1999},
  pages     = {4376--4379},
}

@article{Janson,
  author    = {S. Janson},
  title     = {Tail bounds for sums of geometric and exponential variables},
  journal   = {Statist.\ Probab.\ Lett.},
  volume    = {135},
  year      = {2018},
  pages     = {1--6},
}

@article{KocharovskyCNBIII,
  author    = {V. V. Kocharovsky and Vl. V. Kocharovsky and M. O. Scully},
  title     = {Condensation of $N$ bosons. {III}. {A}nalytical results for all higher moments of condensate fluctuations},
  journal   = {Phys.\ Rev.\ A},
  volume    = {61},
  year      = {2000},
  pages     = {053606},
}

@article{KocharovskyBosonSampling,
  author    = {V. V. Kocharovsky and Vl. V. Kocharovsky and W. D. Shannon and S. V. Tarasov},
  title     = {Towards the simplest model of quantum supremacy: {A}tomic boson sampling in a box trap},
  journal   = {Entropy},
  volume    = {25},
  year      = {2023},
  pages     = {1584},
}

@article{Kristensen,
  author    = {M. A. Kristensen and others},
  title     = {Observation of atom number fluctuations in a {B}ose--{E}instein condensate},
  journal   = {Phys.\ Rev.\ Lett.},
  volume    = {122},
  year      = {2019},
  pages     = {163601},
}

@article{LiebSeiringer2002,
  author    = {E. H. Lieb and R. Seiringer},
  title     = {Proof of {B}ose--{E}instein condensation for dilute trapped gases},
  journal   = {Phys.\ Rev.\ Lett.},
  volume    = {88},
  year      = {2002},
  pages     = {170409},
}

@book{LiebSeiringer,
  author    = {E. H. Lieb and R. Seiringer},
  title     = {The Stability of Matter in Quantum Mechanics},
  publisher = {Cambridge University Press},
  address   = {Cambridge},
  year      = {2010},
}

@article{LS2006,
  author    = {E. H. Lieb and R. Seiringer},
  title     = {Derivation of the {G}ross--{P}itaevskii equation for rotating {B}ose gases},
  journal   = {Commun.\ Math.\ Phys.},
  volume    = {264},
  year      = {2006},
  pages     = {505--537},
}

@article{NNRT,
  author    = {P. T. Nam and M. Napi{\'o}rkowski and J. Ricaud and A. Triay},
  title     = {Optimal rate of condensation for trapped bosons in the {G}ross--{P}itaevskii regime},
  journal   = {Anal.\ PDE},
  volume    = {15},
  year      = {2022},
  pages     = {1585--1616},
}

@unpublished{NR2024,
  author    = {P. T. Nam and S. Rademacher},
  title     = {Exponential bounds of the condensation for dilute {B}ose gases},
  note      = {Trans.\ Amer.\ Math.\ Soc., in press (arXiv:2307.10622)},
  year      = {2024},
}

@article{NRS,
  author    = {P. T. Nam and N. Rougerie and R. Seiringer},
  title     = {Ground states of large bosonic systems: {T}he {G}ross--{P}itaevskii limit revisited},
  journal   = {Anal.\ PDE},
  volume    = {9},
  year      = {2016},
  pages     = {459--485},
}

@unpublished{NT,
  author    = {P. T. Nam and A. Triay},
  title     = {{B}ogoliubov excitation spectrum of trapped {B}ose gases in the {G}ross--{P}itaevskii regime},
  note      = {J.\ Math.\ Pures Appl., to appear (arXiv:2106.11949)},
  year      = {2021},
}

@article{Quesada,
  author    = {N. Quesada and F. Arzani and N. Killoran},
  title     = {Gaussian boson sampling using squeezed states},
  journal   = {Phys.\ Rev.\ A},
  volume    = {100},
  year      = {2019},
  pages     = {022341},
}

@article{Rademacher,
  author    = {S. Rademacher},
  title     = {Generating function for quantum depletion of {B}ose--{E}instein condensates},
  journal   = {J.\ Stat.\ Phys.},
  volume    = {192},
  year      = {2025},
  pages     = {108},
}

@incollection{Tanas,
  author    = {R. Tana{\'s} and A. Miranowicz and T. Gantsog},
  title     = {Quantum phase properties of nonlinear optical phenomena},
  booktitle = {Progress in Optics},
  volume    = {35},
  editor    = {E. Wolf},
  publisher = {Elsevier},
  year      = {1996},
  pages     = {355--446},
}

@article{Vibel,
  author    = {T. Vibel and others},
  title     = {Atom number fluctuations in {B}ose gases: {S}tatistical analysis of parameter estimation},
  journal   = {J.\ Phys.\ B},
  volume    = {57},
  year      = {2024},
  pages     = {195301},
}

@book{Wald,
  author    = {R. M. Wald},
  title     = {Quantum Field Theory in Curved Spacetime and Black Hole Thermodynamics},
  publisher = {University of Chicago Press},
  address   = {Chicago},
  year      = {1994},
}

@article{Weedbrook,
  author    = {C. Weedbrook and S. Pirandola and R. Garc{\'i}a-Patr{\'o}n and N. J. Cerf and T. C. Ralph and J. H. Shapiro and S. Lloyd},
  title     = {Gaussian quantum information},
  journal   = {Rev.\ Mod.\ Phys.},
  volume    = {84},
  year      = {2012},
  pages     = {621--669},
}

@article{beane_n-boson_2007,
    title = {n-{Boson} {Energies} at {Finite} {Volume} and {Three}-{Boson} {Interactions}},
    volume = {76},
    doi = {10.1103/PhysRevD.76.074507},
    journal = {Phys. Rev. D},
    author = {Beane, Silas R. and Detmold, William and Savage, Martin J.},
    year = {2007},
    pages = {074507},
}

@book{Perelomov,
  author    = {A. Perelomov},
  title     = {Generalized Coherent States and Their Applications},
  series    = {Texts and Monographs in Physics},
  publisher = {Springer},
  address   = {Berlin, Heidelberg},
  year      = {1986},
  doi       = {10.1007/978-3-642-61629-7},
}

@unpublished{Deuchert,
  author    = {A. Deuchert and Xuanyu Li},
  title     = {Large Deviations for the Bose--Einstein Condensate of the Ideal Gas in the Canonical Ensemble},
  note      = {Preprint:arXiv:2608.26378},
  year      = {2026},
}

@unpublished{Deuchert2,
  author    = {A. Deuchert and P. T. Nam and M. Napiórkowski },
  title     = {The Gibbs state of the mean-field Bose gas},
  note      = {Preprint:arXiv:2501.19396},
  year      = {2025},
}

\end{document}